\newcommand{\concat}{\oplus}
 \newcommand{\qed}{\hspace*{\fill}\rule{6pt}{6pt}\vspace{.5\smallskipamount}}
 \newtheorem{remark}{Remark}[section]
 \newtheorem{theorem}{Theorem}[section]
  \newtheorem{proposition}{Proposition}[section]
 \newtheorem{lemma}{Lemma}[section]
 \newtheorem{property}{Property}[section]
 \newenvironment{proof} { \noindent \emph{Proof} : } { \qed }
 \newenvironment{proofof} { \noindent \emph{Proof of}} { \qed }
 \newtheorem{claim}{Claim}[section]
 \newtheorem{case}{Case}[section]
 \newcommand{\cS}{{\cal S}\xspace}
\newcommand{\A}{{\cal A}\xspace}
\newcommand{\InitNR}{{\tt InitNextRing}\xspace}
\newcommand{\NR}{{\tt NextRing}\xspace}
\newcommand{\FR}{{\tt FirstRing}\xspace}
\newcommand{\OneTokenBelow}{{\tt OneTokenBelow}\xspace}
\newcommand{\BlackHoleInNextRing}{{\tt BlackHoleInNextRing}\xspace}
\newcommand{\Analyze}{{\tt Analyze}\xspace}
\begin{document}

\sloppy

\title{Black Hole Search with Finite Automata Scattered in a Synchronous Torus}

\author[1]{J\'{e}r\'{e}mie Chalopin} 
\author[1]{Shantanu Das}
\author[1]{Arnaud Labourel}
\author[2]{Euripides Markou}

\affil[1]{LIF, Aix-Marseille University, Marseille, France. {\small Email:~\{jeremie.chalopin,shantanu.das,arnaud.labourel\}@lif.univ-mrs.fr}}

\affil[2]{Department of Computer Science and Biomedical Informatics,

University of Central Greece, Lamia, Greece. {\small Email:~emarkou@ucg.gr}}

\date{}
\maketitle

\begin{abstract}

We consider the problem of locating a black hole in synchronous anonymous networks using finite state agents. A black hole is a harmful node in the network that destroys any agent visiting that node without leaving any trace. The objective is to locate the black hole without destroying too many agents. This is difficult to achieve when the agents are initially scattered in the network and are unaware of the location of each other.
 In contrast to previous results, we solve the problem using a small team of finite-state agents each carrying a constant number of identical tokens that could be placed on the nodes of the network. Thus, all resources used in our algorithms are independent of the network size. 

We restrict our attention to oriented torus networks and first show that no finite team of finite state agents can solve the problem in such networks, when the tokens are not movable, i.e., they cannot be moved by the agents once they have been released on a node. In case the agents are equipped with movable tokens, we determine lower bounds on the number of agents and tokens required for solving the problem in torus networks of arbitrary size. Further, we present a deterministic solution to the black hole search problem for oriented torus networks, using the minimum number of agents and tokens, thus providing matching upper bounds for the problem.

\bigskip

\noindent {\textbf{Keywords:}} {Distributed Algorithms, Fault Tolerance, Black Hole Search, Anonymous Networks, Mobile Agents, Identical Tokens, Finite State Automata}
\end{abstract}

\thispagestyle{empty}
\newpage
\pagenumbering{arabic}


\section{Introduction}

The exploration of an unknown graph by one or more mobile agents is a classical problem initially formulated in 1951 by Shannon \cite{sha51} and it has been extensively studied since then (e.g., see \cite{bs94, dp99, fgkp06}). Recently, the exploration problem has also been studied in unsafe networks which contain malicious hosts of a highly harmful nature, called {\em black holes}. A black hole is a node which contains a stationary process destroying all mobile agents visiting this node, without leaving any trace. 
In the {\em Black Hole Search} problem the goal for the agents is to locate the black hole within finite time. In particular, at least one agent has to survive knowing all edges leading to the black hole. 
The only way of locating a black hole is to have at least one agent visiting it. However, since any agent visiting a black hole is destroyed without leaving any trace, the location of the black hole must be deduced by some communication mechanism employed by the agents. Four such mechanisms have been proposed in the literature: a) the {\em whiteboard} model in which there is a whiteboard at each node of the network where the agents can leave messages, b) the {\em `pure' token} model where the agents carry tokens which they can leave at nodes,  c) the {\em `enhanced' token} model in which the agents can leave tokens at nodes or edges, and d) the time-out mechanism (only for synchronous networks) in which one agent explores a new node while another agent waits for it at a safe node. 

The most powerful inter-agent communication mechanism is having whiteboards at all nodes. Since access to a whiteboard is provided in mutual exclusion, this model could also provide the agents a symmetry-breaking mechanism: If the agents start at the same node, they can get distinct identities and then the distinct agents can assign different labels to all nodes. Hence in this model, if the agents are initially co-located, both the agents and the nodes can be assumed to be non-anonymous without any loss of generality. 

In asynchronous networks and given that all agents initially start at the same safe node, the Black Hole Search (BHS) problem has been studied under the whiteboard model (e.g.,~\cite{dfkprs06,dfps06,dfps07,dfs04}), the `enhanced' token model (e.g.,~\cite{dfks06,dkss06,shi09}) and the `pure' token model in \cite{fis08}.  It has been proved that the problem can be solved with a minimal number of agents performing a polynomial number of moves. Notice that in an asynchronous network the number of the nodes of the network must be known to the agents otherwise the problem is unsolvable (\cite{dfps07}). If the graph topology is unknown, at least $\Delta +1$ agents are needed, where $\Delta$ is the maximum node degree in the graph (\cite{dfps06}). Furthermore the network should be $2$-connected. It is also not possible to answer the question of {\em whether} a black hole exists in the network.

In asynchronous networks, with scattered agents (not initially located at the same node), the problem has been investigated for the ring topology (\cite{dfps03,dfps07}) and for arbitrary topologies (\cite{fkms09,cds07}) in the whiteboard model while in the `enhanced' token model it has been studied for rings (\cite{dss07,dss08}) and for some interconnected networks (\cite{shi09}).  

The consideration of synchronous networks makes a dramatic change to the problem. Now two co-located distinct agents can discover one black hole in any graph by using the time-out mechanism, without the need of whiteboards or tokens. Moreover, it is now possible to answer the question of whether a black hole actually exists or not in the network. No knowledge about the number of nodes is needed. Hence, with co-located distinct agents, the issue is not the feasibility but the time efficiency of black hole search. The issue of efficient black hole search has been studied in synchronous networks without whiteboards or tokens (only using the time-out mechanism) in \cite{ckr06,ckr10,ckmp06,ckmp07,kmrs07,kmrs08,knp09} under the condition that all distinct agents start at the same node. However when the agents are scattered in the network, the time-out mechanism is not sufficient anymore.

Indeed the problem seems to be much more difficult in the case of scattered agents and there are very few known results for this scenario. In this paper we study this version of the problem using very simple agents that can be modeled as finite state automata. Our objective is to determine the minimum resources, such as number of agents and tokens, necessary and sufficient to solve the problem in a given class of networks. For the class of ring networks, recent results \cite{cdlm11} show that having constant-size memory is not a limitation for the agents when solving this problem. We consider here the more challenging scenario of anonymous torus networks of arbitrary size. 
We show that even in this case, finite state agents are capable of locating the black hole in all oriented torus networks using only a few tokens. Note that the exploration of anonymous oriented torus networks is a challenging problem in itself, in the presence of multiple identical agents~\cite{kkm10}. Since the tokens used by the agents are identical, an agent cannot distinguish its tokens from those of another agent.

While the token model has been mostly used in the exploration of safe networks, the whiteboard model is commonly used in unsafe networks. The `pure' token model can be implemented with $O(1)$-bit whiteboards, for a constant number of agents and a constant number of tokens, while the `enhanced' token model can be implemented having a $O(\log d)$-bit whiteboard on a node with degree $d$. In the whiteboard model, the capacity of each whiteboard is always assumed to be of at least $\Omega (\log n)$ bits, where $n$ is the number of nodes of the network. 
In all previous papers studying the Black Hole Search problem under a token model apart from \cite{fis08} and \cite{cdlm11}, the authors have used the `enhanced' token model with agents having non-constant memory. The weakest `pure' token model has been used in \cite{fis08} for co-located non-constant memory agents equipped with  a map in asynchronous networks.

The Black Hole Search problem has also been studied for co-located agents in asynchronous and synchronous directed graphs with whiteboards in \cite{cdkmp09,knp09}. In \cite{gla09} they study the problem in asynchronous networks with whiteboards and co-located agents without the knowledge of incoming link. A different dangerous behavior is studied for co-located agents in \cite{km10}, where the authors consider a ring and assume black holes with Byzantine behavior, which do not always destroy a visiting agent.


\bigskip

\noindent {\bf Our Contributions:}
We consider the problem of locating the black hole in an anonymous but oriented torus containing exactly one black hole, using a team of identical agents that are initially scattered within the torus. Henceforth we will refer to this problem as the BHS problem. 
We focus our attention on very simple mobile agents. The agents have constant-size memory, they can communicate with other agents only when they meet at the same node and they carry a constant number of identical tokens which can be placed at nodes. 
The tokens may be movable (i.e. they can be released and picked up later) or unmovable (i.e. they cannot be moved by the agents once they have been released on a node).
%
We prove the following results: 
\begin{itemize}
\item No finite team of agents can solve the BHS problem in all oriented torus networks using a finite number of \emph{unmovable} tokens.

\item For agents carrying any finite number of \emph{movable} tokens, at least three agents are required to solve the problem.

\item Any algorithm for solving BHS using $3$ agents requires more than one movable token per agent. 

\item The BHS problem can be solved using three agents and only two movable tokens per agent, thus matching both the lower bounds mentioned above.
\end{itemize}

In Section~\ref{sec:model}, we formally describe our model, giving the capabilities of the agents.
In Section~\ref{sec:impos}, we prove lower bound on the number of agents and tokens needed to solve the BHS problem in the torus.
In Section~\ref{sc:simple_algo}, we present two deterministic algorithms for BHS: (i) using $k\geq3$ agents carrying 3 movable tokens per agents, and (ii) using $k\geq4$ agents carrying 2 movable tokens per agent. In Section~\ref{sec:difficultAlgo}, we present a more involved algorithm that uses exactly 3 agents and 2 tokens per agent thus meeting the lower bounds. All our algorithms are time-optimal and since they do not require any knowledge about the dimensions of the torus, they work in any synchronous oriented torus, using only a finite number of agents having constant-size memory.

\section{Our Model}\label{sec:model}

Our model consists of $k \geq 2$ anonymous and identical mobile agents that are initially placed at distinct nodes 
of an anonymous, synchronous torus network of size $n\times m$, $n\geq 3$, $m\geq 3$.  We assume that the torus is oriented, i.e., at each node, the four incident edges are consistently marked as North, East, South and West. Each mobile agent owns a constant number of $t$ identical tokens which can be placed at any node visited by the agent. In all our protocols a node may contain at most three tokens at the same time and an agent carries at most three tokens at any time. A token or an agent at a given node is visible to all agents on the same node, but is not visible to any other agent. The agents follow the same deterministic algorithm and begin execution at the same time and being at the same initial state.

At any single time unit, a mobile agent occupies a node $u$ of the network and may 1) detect the presence of one or more tokens and/or agents at node $u$, 2) release/take one or more tokens to/from the node $u$, and 3) decide to stay at the same node or move to an adjacent node. We call a token {\em movable} if it can be moved by any mobile agent to any node of the network, otherwise we call the token {\em unmovable} in the sense that, once released, it can occupy only the node in which it has been released.

Formally we consider a mobile agent as a finite Moore automaton 
$\A=(\cS,S_0,\Sigma,\Lambda,\delta,\phi)$, where $\cS$ is a set of $\sigma \geq 2$ states; $S_0$ is the {\em initial} state; $\Sigma$ is the set of possible configurations an agent can see when it enters a node; $\delta: \cS\times \Sigma\rightarrow\cS$ is the transition function; and $\phi:\cS\rightarrow\Lambda$ is the output function.
Elements of $\Sigma$ are quadruplets $(D,x,y,b)$ where $D \in \{\tt North, South, East, West, none\}$ is the direction through which the agent has arrived at the node, $x$ is the number of tokens (at most $3$) at that node, $y$ is number of tokens (at most $3$) carried by the agent and $b\in\{true, false\}$ indicates whether there is at least another agent at the node or not.
Elements of $\Lambda$ are quadruplets $(P,s,X,M)$ where  $P\in\{\tt put, pick\}$ is the action performed by the agent on the tokens, $s\in\{0,1,2,3\}$ is the number of tokens concerned by the action $A$, $X \in \{\tt North, South, East, West, none\}$ is the edge marked as dangerous by the agent, and $M \in \{\tt North, South, East, West, none\}$ is the move performed by the agent. Note that the agent always performs the action before the marking and the marking before the move.

Note that all computations by the agents are independent of the size $n\times m$ of the network since the agents have no knowledge of $n$ or $m$.
There is exactly one black hole in the network. An agent can start from any node other than the black hole and no two agents are initially co-located.  Once an agent detects a link to the black hole, it marks the link permanently as dangerous (i.e., disables this link). Since the agents do not have enough memory to remember the location of the black hole, we require that at the end of a black hole search scheme, all links incident to the black hole (and only those links) are marked dangerous and that there is at least one surviving agent. Thus, our definition of a successful BHS scheme is slightly different from the original definition.
%
The time complexity of a BHS scheme is the number of time units needed for completion of the scheme, assuming the worst-case location of the black hole and the worst-case initial placement of the scattered agents.



\section{Impossibility results}\label{sec:impos}

In this section we give lower bounds on the number of agents and the number and type of tokens needed for solving the BHS problem in any anonymous, synchronous and oriented torus.

\subsection{Agents with unmovable tokens}

We will prove that any constant number of agents carrying a constant number of unmovable tokens each, can not solve the BHS problem in an oriented torus. The idea of the proof is the following: We show that an adversary (by looking at the transition function of an agent) can always select a big enough torus and initially place the agents so that no agent visits nodes which contain tokens left by another agent, or meets with another agent. Moreover there are nodes on the torus never visited by any agent. Hence the adversary may place the black hole at a node not visited by any of the agents to make the algorithm fail. 



\begin{theorem}\label{impos-torus-k-k}
For any constant numbers $k,t$, there exists no algorithm that solves BHS in all oriented tori containing one black hole and $k$ scattered agents, where each agent has a constant memory and $t$ unmovable tokens.
\end{theorem}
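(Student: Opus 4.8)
The plan is to construct an adversarial argument showing that for any finite-state agent algorithm using finitely many unmovable tokens, the adversary can defeat it by choosing a sufficiently large torus and a clever initial placement. The key insight suggested by the proof idea in the excerpt is that with unmovable tokens, once tokens are placed they stay put, so an agent's behavior is essentially determined by a finite automaton reading a bounded alphabet. I want to exploit the fact that a single agent, starting fresh and never meeting another agent or encountering foreign tokens, follows a deterministic trajectory governed by its finite-state control.

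Let me think carefully about what a single isolated agent does. An agent has $\sigma$ states and carries at most $t$ tokens. If it never meets another agent and never sees tokens it didn't place itself, then its entire configuration is captured by (current state, number of tokens carried, pattern of tokens on nearby nodes it placed). Since tokens are unmovable and the agent has finite memory, I'd argue the agent's walk on the (infinite/large) torus is eventually periodic in direction. The crucial combinatorial point: an agent with finite memory that places a bounded number of unmovable tokens can only "use" tokens finitely often as landmarks, so after placing all $t$ tokens (or entering a regime where it places no more), its movement becomes a periodic walk determined solely by its state-transition cycle. On a torus, a periodic sequence of moves traces out a straight-line (diagonal) trajectory that visits a sublattice, leaving infinitely many nodes — indeed whole rows/columns worth — unvisited if the torus dimensions are chosen appropriately.

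Here is how I would organize the steps. First, formalize the notion of an isolated agent: assume inductively (to be justified by the placement) that agent $i$ never meets another agent and never encounters a token placed by another agent; under this assumption its trajectory depends only on its own automaton and its own tokens. Second, prove a structural lemma: any such isolated agent, after at most some bounded number $T_0 = f(\sigma, t)$ of steps, enters a movement cycle, i.e., its sequence of moves becomes periodic with period $p \le g(\sigma,t)$, so its long-run displacement per period is some fixed vector $(a,b)$. The reason is that the agent's relevant state (automaton state, tokens in hand, and the local token configuration it can sense) lives in a finite set, and once no new tokens are being placed the transition is deterministic and depends only on finitely much information, forcing a cycle by pigeonhole. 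Third, choose the torus dimensions $n \times m$ large relative to $T_0$, $p$, and the displacement vector so that the periodic trajectory of each agent stays within a bounded "tube" and covers only a strict subset of nodes; in particular pick $n,m$ so that there is a node untouched by the union of all $k$ agents' trajectories. Fourth, choose the initial placement: position the $k$ agents far apart (distances exceeding the diameters of their reachable tubes) so that no two agents' trajectories ever intersect and no agent ever steps on another's tokens, discharging the inductive isolation assumption. Finally, place the black hole at any never-visited node; since no agent ever reaches it or any of its incident edges, no agent can mark those edges, so the algorithm fails.

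The main obstacle, and where I would spend the most care, is the structural lemma establishing eventual periodicity of an isolated agent's walk together with the quantitative bound needed to guarantee unvisited nodes on a large enough torus. The subtlety is that before settling into a cycle, the agent may place tokens and later revisit them, and these self-placed unmovable tokens act as persistent landmarks that could, in principle, keep steering the agent in a non-periodic-looking way. I would handle this by arguing that since only $t$ tokens are ever available, only finitely many distinct "token-landmark configurations" can ever be created, and the agent's finite memory cannot index unboundedly many distinct positions; hence the configuration space the agent can distinguish is finite, forcing periodicity by a pigeonhole argument on the pair (automaton state, sensed local token pattern). Care is needed because on a torus the same physical node may be reached by walks of different lengths, so I must track the agent's trajectory on the universal cover (the infinite grid $\mathbb{Z}^2$) and only afterward project to the torus, ensuring the dimensions are chosen coprime-free or simply large enough that the projected trajectory still misses nodes. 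Getting the quantitative dependence $n, m = h(\sigma, t, k)$ right, so that isolation and non-coverage hold simultaneously for all $k$ agents, is the delicate bookkeeping I would need to complete rigorously.
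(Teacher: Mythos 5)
Your overall strategy --- isolate each agent so that it behaves as if alone, bound the set of nodes an isolated finite-state agent can cover, find an unvisited node, and put the black hole there --- is exactly the paper's strategy (the paper packages your ``structural lemma'' as two quantitative propositions imported from \cite{kkm10}: a coverage bound of $\sigma + t(\sigma-1)^2(n+1) < n^2$ nodes for an isolated agent, and a bound on how many starting positions lead an agent to a given fixed node). However, there is a genuine gap in your step four, the placement argument. You require that the agents be placed so that \emph{no two agents' trajectories ever intersect}, justified by keeping them ``far apart, distances exceeding the diameters of their reachable tubes.'' On a torus this is both stronger than necessary and, in general, impossible: an isolated agent's trajectory need not lie in a bounded-diameter tube. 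For example, an agent can drop a token, go East until it returns to that token (covering an entire row), and then go South forever (covering an entire column); its trajectory is a full row union a full column, of diameter equal to that of the torus. Two such agents' trajectories intersect for \emph{every} choice of distinct starting nodes (any row crosses any column), so no placement achieves disjointness, and your induction hypothesis of isolation cannot be discharged as stated. More generally, the visited set of an isolated agent has size $\Theta(n)$ on an $n\times n$ torus, so the difference set of two such visited sets can have size $\Omega(n^2)$ and a counting argument cannot force disjoint trajectories either.

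The fix --- which is what the paper actually does --- is to demand much less than disjoint trajectories. Meetings are excluded for free: the agents are identical, start in the same state, and move synchronously, so as long as none of them sees a foreign token they all execute the same move at each time step, preserve their initial (nonzero) relative displacements, and hence never co-occupy a node. What must be arranged by the placement is only that no agent ever steps on one of the at most $t$ nodes holding another agent's tokens (and symmetrically, that no agent drops a token on a node another agent visits). Since each agent owns only $t$ token locations, the second proposition of \cite{kkm10} bounds the number of ``bad'' starting positions for each pairwise constraint by $O(t\sigma^2 n)$, and summing over all $O(kt)$ constraints still leaves $n^2 - O(k t^2 \sigma^2 n) > 0$ admissible starting nodes once $n > 2kt^2\sigma^2$. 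Your proposal is missing both this relaxation and the synchronization observation that handles meetings; without them the construction does not go through. The remainder of your argument (eventual periodicity of the isolated walk via pigeonhole on automaton state and sensed token pattern, and the conclusion that the union of all visited nodes has size $k\cdot O(n) < n^2$) is sound and matches the paper.
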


To prove Theorem \ref{impos-torus-k-k} we will need the following two propositions which appeared in \cite{kkm10}.

\begin{proposition}\cite{kkm10}\label{pr-manyuntokens}
Consider one mobile agent with $\sigma$ states and a constant number $t$ of identical unmovable tokens. We can always (for any configuration of the automaton, i.e., states and transition function) select a $n \times n$ oriented torus, where $n >t \sigma^2$ so that no matter what is the starting position of the agent, it cannot visit all nodes of the torus. In fact, the agent will visit at most $\sigma + t(\sigma - 1)^2 (n+1) < n^2$ nodes.
\end{proposition}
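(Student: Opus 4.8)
The plan is to exploit the fact that a single finite-state agent on an anonymous oriented torus is essentially \emph{blind}: away from tokens every node looks the same, so its behaviour is governed only by its internal state, and since the tokens are unmovable there are at most $t$ of them to serve as landmarks. I would first isolate the token-free motion and show it is eventually periodic, then bound how much of the torus such motion can cover, and finally bound how many times the agent can ``restart'' by interacting with one of its $t$ tokens.

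First, call a maximal portion of the trajectory a \emph{free walk} if throughout it the agent enters only empty nodes and carries a constant number of tokens. On such a walk every configuration the agent reads has the form $(D,0,y,\mathit{false})$, so the only thing distinguishing two steps is the state together with the arrival direction $D$. Folding $D$ into the state (a constant blow-up), the successor state is a function of the current augmented state alone; hence the sequence of augmented states is eventually periodic with pre-period plus period at most $\sigma$, and therefore the sequence of moves is eventually periodic. Over one period the agent is displaced by a fixed vector $\vec d \in \mathbb Z_n\times\mathbb Z_n$. Next I would confine such a walk to $O(n)$ nodes: if $\vec d=\vec 0$ the walk revisits its position every period and touches at most $\sigma$ nodes forever; if $\vec d\neq\vec 0$, the base positions after $k$ periods are $P_0+k\vec d \pmod n$, and the crucial arithmetic fact is that the exponent of $\mathbb Z_n\times\mathbb Z_n$ equals $n$, so the order of $\vec d$ (the number of distinct base positions) is at most $n$. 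Within a period the agent occupies at most $\sigma-1$ offsets from the base position, so the nodes of one periodic free walk lie on at most $\sigma-1$ translated ``lines,'' each of size at most $n$; in particular a free walk touches at most $(\sigma-1)(n+1)$ distinct nodes, and once it re-enters a base position already seen in the same augmented state it produces nothing new.

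The last step is to bound the number of restarts. A free walk can only end when the agent performs a token action or steps onto a node carrying a token. It can put a token down at most $t$ times in total, and the $t$ unmovable tokens occupy at most $t$ nodes. A visit to a token node is \emph{productive} only if it launches a genuinely new free walk; but if the agent ever revisits a token node in an augmented state (and with a local token picture) it has been in there before, its whole future is determined and eventually periodic, so it discovers no new node. Hence each token node is productive in at most $\sigma-1$ distinct states, giving at most $t(\sigma-1)$ productive interactions, each spawning one free walk of at most $(\sigma-1)(n+1)$ nodes. Adding the initial, possibly token-free, loop of at most $\sigma$ nodes yields the bound $\sigma+t(\sigma-1)^2(n+1)$, and a direct computation shows this is strictly less than $n^2$ whenever $n>t\sigma^2$; choosing any such $n$ produces a torus the agent cannot fully explore from any start node.

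The delicate point is the last step: rigorously arguing that interactions with the unmovable tokens cannot be leveraged indefinitely. One must show the agent cannot use the same token node in unboundedly many informative ways, the key being the pigeonhole/periodicity argument that a repeated (augmented state, local token picture) at a fixed node forces the remaining trajectory to cycle, so the number of productive restarts depends only on $\sigma$ and $t$, never on $n$. Reconciling the per-period offsets with the orbit lengths, and handling nodes that carry several tokens, is exactly what produces the precise $(\sigma-1)^2$ factor and is the part demanding the most careful bookkeeping.
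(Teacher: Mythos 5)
A preliminary caveat: the paper contains no proof of this proposition --- it is imported verbatim from \cite{kkm10} --- so your proposal can only be judged against the statement itself. Your overall architecture is the right one and matches the known argument: decompose the trajectory into token-free walks, show each such walk is eventually periodic and confined to at most $\sigma-1$ translates of the orbit of the per-period displacement $\vec d$ (whose order divides the exponent $n$ of $\mathbb{Z}_n\times\mathbb{Z}_n$, a point you get exactly right), and bound the number of times token interactions can launch a genuinely new walk. Your closing verification that $\sigma+t(\sigma-1)^2(n+1)<n^2$ whenever $n>t\sigma^2$ is also correct.

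The genuine gap is the restart count --- precisely the step you yourself flag as delicate. You claim that a token node revisited in the same augmented state ``with a local token picture'' has a determined, eventually periodic future, so each of the at most $t$ token nodes is productive in at most $\sigma-1$ states, giving $t(\sigma-1)$ restarts. But determinism applies only to the \emph{global} configuration: the future after leaving a token node depends on the agent's state, the number of tokens it still carries, and the positions of \emph{all} tokens placed so far, not merely the tokens at the current node. Since the agent may still drop tokens elsewhere, the same triple (node, state, local picture) can recur with a genuinely different future --- a later walk launched from that node may be deflected by a token dropped in the interim and reach new territory. The pigeonhole must be stratified by the at most $t+1$ configuration epochs (with unmovable tokens the global configuration changes only at drops, hence at most $t$ times, and the carried count is likewise constant within an epoch); inside an epoch your cycling argument is sound, across epochs it is not. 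The repaired count is on the order of $t(t+1)\sigma$ productive restarts rather than $t(\sigma-1)$, yielding a bound like $\sigma+t(t+1)\sigma(\sigma-1)(n+1)$: still $O(n)$, so the qualitative conclusion survives with a larger threshold on $n$, but the proposition's exact bound $\sigma+t(\sigma-1)^2(n+1)$ with threshold $n>t\sigma^2$ is not established. A smaller slip of the same flavor: folding the arrival direction $D$ into the state multiplies the state space by $5$, so under your argument the pre-period plus period is only bounded by $5\sigma$, not $\sigma$; the tight bound follows instead from observing that in this Moore model the arrival direction is itself determined by the previous state's output, so on token-free stretches the raw state sequence already obeys a one-step recurrence. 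Both slips matter here because the statement asserts precise constants, not merely $O(n)$.
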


\begin{proposition}\cite{kkm10}\label{pr-lemma-nodev}
Let $A$ be an agent with $\sigma$ states and a constant number $t$ of identical unmovable tokens in a $n \times n$ oriented torus, where $n > t \sigma^2$ and let $v$ be a node in that torus. There are at most $\sigma + t(\sigma - 1)^2 (n+1) < n^2$ different starting nodes that we could have initially placed $A$ so that node $v$ is always visited by $A$.
\end{proposition}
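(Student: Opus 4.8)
The plan is to exploit the homogeneity of the oriented torus together with the visit-count bound already established in Proposition~\ref{pr-manyuntokens}. The $n \times n$ oriented torus is vertex-transitive: every node presents the same four labelled directions (North, East, South, West) and carries no identifier. Since $A$ is a single deterministic automaton that begins with all $t$ tokens in hand and with no tokens lying on the torus, its whole execution is governed only by the relative moves it makes and by the token configuration it itself creates, never by an absolute position. Hence, writing $T$ for the set of nodes visited by $A$ when started at the origin $(0,0)$, I claim that starting $A$ at an arbitrary node $u$ produces exactly the translated trajectory, visiting the set $u+T = \{(u+w) \bmod n : w \in T\}$. The whole argument then reduces to this translation-invariance claim plus a one-line count.

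First I would make the invariance precise by induction on the time unit. At each step the next state $\delta(\text{state},(D,x,y,b))$ and the output $(P,s,X,M)=\phi(\text{state})$ depend only on the current internal state and on the quadruple read, namely the arrival direction $D$, the token counts $x,y$, and the flag $b$ (which is always \emph{false} since $A$ is alone). Comparing the execution started at the origin with the one started at $u$, the inductive hypothesis is that at every time unit the two executions are in the same state, their positions differ by the fixed offset $u$, and the token configuration of the second is the offset-by-$u$ image of the first. Each released token is then placed at the shifted image of where it is placed in the origin run, so both agents always read identical values of $D$, $x$ and $y$, make identical state transitions and identical moves, and the offset $u$ is preserved. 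Thus the two visited-node sets coincide up to the translation by $u$, giving $u+T$.

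With this in hand the counting is immediate. The agent launched from $u$ visits the target node $v$ exactly when $v \in u+T$, equivalently when $u \in v-T := \{(v-w) \bmod n : w \in T\}$. Therefore the set of starting nodes from which $v$ is visited is contained in $v-T$, a translate of $T$, and so has cardinality at most $|T|$. Proposition~\ref{pr-manyuntokens} gives $|T| \le \sigma + t(\sigma-1)^2(n+1)$, and its hypothesis $n > t\sigma^2$ already yields $\sigma + t(\sigma-1)^2(n+1) < n^2$; this finishes the bound.

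The only delicate point is the rigorous justification of translation invariance in the presence of the unmovable tokens: one must verify that the tokens $A$ drops in the shifted run land precisely at the shifted images of where they land in the origin run, so that the agent never encounters a differing local configuration in the two executions. This follows from the determinism of $A$ and the fact that token releases are dictated solely by the (translation-invariant) readings, but it is the step deserving care, since a discrepancy in even a single token placement could in principle desynchronize the two runs and break the clean identity $u+T$.
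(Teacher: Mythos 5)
Your proof is correct, but note that there is no in-paper proof to compare it against: the paper states this proposition as an imported result, citing \cite{kkm10} without reproducing the argument. Your route — translation invariance on the anonymous oriented torus plus the visit-count bound of Proposition~\ref{pr-manyuntokens} — is the natural one and, as far as the cited source goes, essentially the standard argument. The one step you rightly flag as delicate is handled soundly by your coupling induction: since the agent starts carrying all $t$ tokens with no tokens pre-placed on the torus, the entire token configuration is self-created, and the induction maintains simultaneously the equality of internal states, the fixed positional offset $u$, and the fact that the token configuration in the shifted run is the $u$-translate of that in the origin run; consequently every element of $\Sigma$ the agent reads (arrival direction $D$, token count $x$ at the node, carried count $y$, and the co-location flag $b$, which is always false for a lone agent) coincides in the two runs, and determinism of $\delta$ and $\phi$ forces identical transitions and moves. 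The final count is then exactly as you say: the starting nodes from which $v$ is visited form the set $v-T$, a bijective image of $T$, so their number is at most $|T| \le \sigma + t(\sigma-1)^2(n+1) < n^2$, the last inequality being part of Proposition~\ref{pr-manyuntokens} under the hypothesis $n > t\sigma^2$.
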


\begin{proofof}\emph{ Theorem \ref{impos-torus-k-k}:}
Consider a constant number of $k$ mobile agents with $\sigma$ states and a constant number of $t$ unmovable tokens each, in an $n \times n$ oriented torus. We show that an adversary can always (for any configuration of the automatons, i.e., states and transition function) initially place the agents on the torus and select $n$ so that there are nodes on the torus never visited by any agent.

Take a $n \times n$ oriented torus, where $n > 2k t^2\sigma^2$ and let $s(A_1)$ be the starting node of agent $A_1$. If agent $A_1$ was alone in the torus would release its tokens at nodes $T_1(A_1), T_2(A_1), \ldots, T_t(A_1)$. According to Proposition \ref{pr-lemma-nodev}, there are at least $n^2 - (\sigma + t(\sigma - 1)^2 (n+1))$ starting nodes at which an adversary could place agent $A_2$ so that $A_2$ does not visit node $T_1(A_1)$. Among these starting nodes (applying again Proposition \ref{pr-lemma-nodev}) there are at most $\sigma + t(\sigma - 1)^2 (n+1)$ nodes that would lead agent $A_2$ to token $T_2(A_1)$, another at most $\sigma + t(\sigma - 1)^2 (n+1)$ nodes that would lead agent $A_2$ to token $T_3(A_1)$ and so on. Therefore there are at least $n^2 - t(\sigma + t(\sigma - 1)^2 (n+1))$ starting nodes at which the adversary could place agent $A_2$ so that $A_2$ does not visit any of the $T_1(A_1), T_2(A_1), \ldots, T_t(A_1)$ nodes. The adversary still needs to place agent $A_2$ at a starting node $s(A_2)$ so that $A_2$ releases its tokens at nodes $T_1(A_2), T_2(A_2), \ldots, T_t(A_2)$ not visited by agent $A_1$. 

Notice that an agent can decide to release a new token at a distance of at most $\sigma$ nodes from a previously released token (an agent cannot count more than $\sigma$ before it repeats a state). Since $n > 2k t^2\sigma^2$ for every two different starting nodes $s(A_2)$ and $s'(A_2)$, agent $A_2$ would release its tokens to nodes $T_1(A_2), T_2(A_2), \ldots, T_t(A_2)$ and $T'_1(A_2), T'_2(A_2), \ldots, T'_t(A_2)$ respectively, where $T_i(A_2) \neq T'_i(A_2), 1 \leq i \leq t$.

Since in view of Proposition \ref{pr-manyuntokens} agent $A_1$ can visit at most $\sigma + t(\sigma - 1)^2 (n+1)$ nodes (if $A_1$ was alone in the torus), there are at most $\sigma + t(\sigma - 1)^2 (n+1)$ starting nodes for $A_2$ for which $A_2$ would place its first token at a node visited by agent $A_1$, another at most $\sigma + t(\sigma - 1)^2 (n+1)$ starting nodes for $A_2$ for which $A_2$ would place its second token at a node visited by agent $A_1$, and so on. Hence we need to exclude another $t(\sigma + t(\sigma - 1)^2 (n+1))$ starting nodes for agent $A_2$. Thus we have left with $n^2 - 2t(\sigma + t(\sigma - 1)^2 (n+1))$ starting nodes at which the adversary can place agent $A_2$ so that $A_2$ does not visit any of the $T_1(A_1), T_2(A_1), \ldots, T_t(A_1)$ nodes and agent $A_1$ does not visit any of the $T_1(A_2), T_2(A_2), \ldots, T_t(A_2)$ nodes. 

For the placement of agent $A_3$, following the same reasoning, and using again Proposition \ref{pr-lemma-nodev}, we have that there are at least $n^2 - 2t(\sigma + t(\sigma - 1)^2 (n+1))$ starting nodes at which the adversary could place agent $A_3$ so that $A_3$ does not visit any of the $T_1(A_1), T_2(A_1), \ldots, T_t(A_1)$ or $T_1(A_2), T_2(A_2), \ldots, T_t(A_2)$ nodes. And using Proposition \ref{pr-manyuntokens} as above by excluding another $2t(\sigma + t(\sigma - 1)^2 (n+1))$ starting nodes for agent $A_3$, we have left with $n^2 - 4t(\sigma + t(\sigma - 1)^2 (n+1))$ starting nodes at which the adversary can place agent $A_3$ so that $A_3$ does not visit any of the $T_1(A_1), T_2(A_1), \ldots, T_t(A_1)$ or $T_1(A_2), T_2(A_2), \ldots, T_t(A_2)$ nodes and agents $A_1$ and $A_2$ do not visit any of the $T_1(A_3), T_2(A_3), \ldots, T_t(A_3)$ nodes.

Following the same reasoning, an adversary can select a node out of 
\[n^2 - 2(k -1)t(\sigma + t(\sigma - 1)^2 (n+1))\] 
nodes to place agent $A_{k}$ so that $A_{k}$ does not visit any of the $T_1(A_j), T_2(A_j), \ldots, T_t(A_j)$ nodes, where $1 \leq j \leq k-1$, and agents $A_j$ do not visit any of the $T_1(A_{k}), T_2(A_{k}), \ldots, T_t(A_{k})$ nodes.

Hence all agents may only visit their own tokens and they have to do it at the same time and being at the same states and therefore they maintain their initial distance forever. Since any agent may only see its own tokens, by Proposition \ref{pr-manyuntokens}, any agent can visit at most $\sigma + t(\sigma - 1)^2 (n+1)$ nodes on the torus and hence all agents will visit at most $k(\sigma + t(\sigma - 1)^2 (n+1)) < n^2$ nodes when $n > 2k t^2\sigma^2$. Therefore there are nodes never visited by any agent and the adversary can place the black hole at such a node.
\end{proofof}

\subsection{Agents with movable tokens}

We first show that the BHS problem is unsolvable in any synchronous torus by two scattered agents having any number of movable tokens even if the agents have unlimited memory.

\begin{lemma}\label{impos-torus-2-k}
Two agents carrying any number of movable tokens cannot solve the BHS problem in an oriented torus even if the agents have unlimited memory.
\end{lemma}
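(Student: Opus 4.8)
The plan is to establish impossibility for two scattered agents via an adversary argument that forces the two agents to behave symmetrically, so that they never break the symmetry that would be needed to safely explore past a potential black hole location. The key intuition is that with only two agents, whenever one agent ventures into an unexplored node that might be the black hole, the other agent is the only possible witness; if the adversary can keep the two agents in lock-step symmetric configurations, then whichever agent first enters the black hole is lost without the survivor being able to deduce its location with certainty, and moreover the adversary can arrange two distinct executions that are indistinguishable to the survivor but have the black hole in different places.

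First I would set up the symmetry. Since the torus is oriented and vertex-transitive, I would place the two agents at antipodal-type positions, i.e. at nodes $(0,0)$ and $(a,b)$ chosen so that the translation by $(a,b)$ is an automorphism of the torus mapping one agent's start to the other's. Because the agents are identical, run the same deterministic algorithm, start in the same state, and act synchronously, I would argue by induction on time that the global configuration (positions, states, token placements) is invariant under this translation: at every step the two agents occupy translated nodes in identical states, and any token configuration created by one is the exact translate of the other's. The only event that could break this invariance is a meeting of the two agents or one agent encountering the other's tokens; here I would invoke the same counting idea used for Theorem~\ref{impos-torus-k-k}, choosing the torus dimensions large enough (relative to the number of tokens and the agents' memory, which for unlimited memory is replaced by a direct distance/periodicity argument on the walk) so that within the time before the black hole is first entered, neither agent reaches the other's ``half'' of the torus.

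Next I would analyze the moment an agent first steps onto the black hole. By the translation invariance, at that instant both agents would be stepping onto translated nodes simultaneously; the adversary places the single black hole at exactly one of these two translated positions. Then exactly one agent is destroyed and the other survives, but from the survivor's local view the situation is symmetric to the alternative execution in which the black hole had been placed at the other translated node. I would make this precise by exhibiting two tori (or two placements on the same torus) that the surviving agent cannot distinguish through its entire subsequent execution, yet in which the set of edges incident to the black hole differs. Since a correct algorithm must mark \emph{exactly} the links incident to the black hole, the survivor must mark different edges in the two indistinguishable executions, a contradiction.

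The main obstacle I anticipate is handling the loss of symmetry the instant one agent dies: after the destruction, only one agent remains and the clean translation invariance no longer applies, so the argument must pivot from ``symmetric configuration'' to ``indistinguishable executions for the survivor.'' The delicate part is guaranteeing that the surviving agent's future trajectory cannot gather disambiguating information — in particular that it never revisits, via a wrap-around path, the now-empty former position of the dead agent in a way that reveals which branch occurred, and that the tokens left by the dead agent before death are themselves symmetric and hence uninformative. I expect to control this again by choosing the torus large enough that any distinguishing evidence would require the survivor to traverse a distance it provably cannot cover before the algorithm must terminate, so the two executions remain indistinguishable for long enough to defeat any claimed correct algorithm.
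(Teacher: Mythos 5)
There is a genuine gap, and it sits exactly where you flagged the difficulty: the pivot from ``symmetric configuration'' to ``indistinguishable executions with different answers.'' Suppose you set up the translation $\tau$ by $(a,b)$ and compare the execution with the black hole at $u$ (agent~1's position at the critical step) against the one with it at $u+(a,b)$. These two executions are related by the automorphism $\tau$, and $\tau$ maps the black hole of one to the black hole of the other; consequently the survivor's correct answer in one execution is carried by $\tau$ to the correct answer in the other, and indistinguishability alone produces \emph{no} contradiction. Working it out: if the survivor's views coincide, its marking in the second execution is the $-(a,b)$-translate of its marking in the first, so correctness of both forces $u-(a,b)=u+(a,b)$, i.e.\ $2(a,b)=0$; but when $2(a,b)=0$ the two executions are genuinely isomorphic and both can be answered correctly, whereas when $2(a,b)\neq 0$ the dead agent's tokens sit at $S_1$ in one execution and at $S_1+(a,b)$ in the other, so the survivor's views need \emph{not} coincide and indistinguishability is exactly what you cannot establish. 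Your fallback --- choose the torus so large that the survivor cannot cover the distance to any distinguishing evidence ``before the algorithm must terminate'' --- is the finite-automaton device of Theorem~\ref{impos-torus-k-k}; it fails here because the lemma allows unlimited memory, so the surviving agent has no bounded exploration radius and no forced termination time, and can in principle sweep the entire torus.

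The paper avoids all of this with a direct reachability argument rather than an indistinguishability one. Since both agents are identical, deterministic, and start in the same state, their first move is in the same direction, say East; the adversary puts the black hole at the East neighbour of agent~1's start, killing it at time~1 with its tokens confined to the black hole's row. Agent~2, which therefore never sees a foreign token, follows its ``alone'' trajectory: either it never moves vertically (place it in a different row, so it never reaches the black hole's row), or it does (place it so that its first vertical move enters the black hole from the North or South). In every case no agent ever stands on the East neighbour of the black hole, so the East link can never be marked --- and the paper's success criterion requires \emph{all} incident links to be marked. If you want to salvage a symmetry-based proof you would need a pair of executions whose relating map does \emph{not} send black hole to black hole, which is precisely what your construction does not provide.
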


\begin{proof} 
 Assume w.l.o.g. that the first move of the agents is going East. Suppose that the black hole has been placed by an adversary at the East neighbor of an agent. This agent vanishes into the black hole after its first move. The adversary places the second agent such that it vanishes into the black hole after its first vertical move, or it is in a horizontal ring not containing the black hole if the agent never performs vertical moves. Observe that the trajectories of the two agents intersect only at the black hole and neither can see any token left by the other agent. Neither of the agents will ever visit the East neighbor of the black hole and thus, they will not be able to correctly mark all links incident to the black hole.
\end{proof}

Thus, at least three agents are needed to solve the problem. We now determine a lower bound on the number of tokens needed by three scattered agents to solve BHS.

\begin{lemma}\label{impos-torus-3-2}
There exists no universal algorithm that could solve the BHS problem in all oriented tori using three agents with constant memory and one movable token each.
\end{lemma}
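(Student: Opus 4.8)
The plan is to argue by contradiction and reduce to the two-agent impossibility of Lemma~\ref{impos-torus-2-k}: I assume a universal algorithm using three agents with $\sigma$ states and one movable token each, and let an adversary choose a sufficiently large $n\times n$ torus together with an initial placement and a black-hole location that defeats it. The guiding idea is that one movable token is exactly enough for an agent \emph{either} to anchor its current position \emph{or} to be carried forward as a probe, but not both at once; consequently a lone agent can never \emph{safely} certify that an adjacent node is the black hole, and safe identification requires two agents cooperating within a bounded distance (a cautious walk), since constant memory forbids any time-out longer than $\sigma$ steps. So the whole argument rests on preventing two live agents from ever being close enough and in compatible states to complete such a cautious walk and the subsequent marking of all four incident links.

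First I would record that, starting in a common state, the three agents behave identically and follow translated copies of the same solo trajectory until a \emph{distinguishing event} occurs: meeting another agent, reading a foreign token configuration, or detecting (within $O(1)$ steps) the disappearance of a probing partner. Using the freedom to enlarge $n$, the adversary places the three agents pairwise at distance $\gg \sigma$ so that, as long as they remain on their solo trajectories, no such event can be triggered. I would then analyze the solo trajectory of a single agent equipped with one movable token. If this trajectory stays within a region sublinear in $n$ (so that unvisited nodes remain), the adversary simply hides the black hole at a node far from all three solo regions; it is never visited, no incident link is ever marked, and the algorithm fails. The remaining, and harder, case is when the solo trajectory is unbounded.

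In the unbounded case the first agent to reach the black hole is destroyed; I would choose the placement so that this is agent $A_1$, arriving along a prescribed direction. The key step is to show that $A_1$'s death yields no usable information to the survivors: with a single token, $A_1$ has either carried its token into the black hole (where it is lost) or dropped it earlier at a node that, by the spacing of the construction, the survivors either never reach or cannot distinguish from an ordinary lone token, since all tokens are identical and no two-token ``special'' marker can be formed. After this death at most two agents remain, each still holding one token, scattered and never having coordinated; I would argue they are now in precisely the predicament of Lemma~\ref{impos-torus-2-k}, so that they cannot pair up in time to run a cautious walk and, in particular, the neighbor of the black hole opposite to $A_1$'s approach is never safely certified, leaving its incident link unmarked.

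The main obstacle is controlling every possible interaction so that the reduction to two effective agents is airtight: I must rule out that the orphaned single token ever leaks the black hole's position, and that the two survivors could meet, re-pair, and jointly carry out both a successful cautious walk and the local tour that marks all four edges around the black hole. Here the one-token restriction is doing the real work, since two agents holding one token each have only two tokens between them, which suffices to anchor-and-probe along a single ring but not to also maintain orientation for the perpendicular marking tour under constant memory. The crux of the write-up will therefore be to formalize, via the distance and identical-token properties, that at no time are two live agents simultaneously close enough and in states that complete identification, forcing some link incident to the black hole to remain unmarked.
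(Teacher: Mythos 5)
Your overall strategy (an adversary argument with a case split on the solo trajectory, spacing the agents far apart so no distinguishing event occurs) matches the paper's in spirit, but the decisive step in the hard case is missing and the substitute you propose does not work. The paper does not reduce to Lemma~\ref{impos-torus-2-k} after one death. It splits on how far the algorithm ever sends an agent \emph{from its own token}: if every excursion away from the token is bounded (hence at most $\sigma$ steps, by constant memory), the token's net displacement per cycle is a fixed constant vector, so the visited set is a constant-width band around a closed line in the torus and some node is never visited; otherwise the algorithm at some point sends an agent a \emph{non-constant} distance from its dropped token, and the adversary places the black hole inside such a long excursion, so that \emph{two} victims die with their single tokens lying more than a constant distance from the black hole. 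The punchline is quantitative: the orphaned tokens do not pinpoint the black hole because they are far from it, so the third agent, even after finding them, cannot decide which links to mark. Your sketch never establishes this --- ``the survivors either never reach [the token] or cannot distinguish [it] from an ordinary lone token'' is precisely the claim that needs proof, and it holds only because of the constant-versus-non-constant excursion dichotomy, which your case split (bounded region vs.\ unbounded trajectory) does not isolate. Note also that a cautious walk moving a single token step by step has an unbounded trajectory yet stays within constant distance of its token, so your two cases do not cleanly separate the relevant behaviours.

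The reduction to Lemma~\ref{impos-torus-2-k} cannot be made airtight as stated: after $A_1$ dies the two survivors are in non-initial states, there is an extra token in the environment, and the adversary committed to all of its choices (placement, torus size, black-hole position) at time $0$, whereas the two-agent construction places the black hole at the East neighbour of a \emph{freshly started} agent and cannot be re-instantiated mid-execution. Moreover, your closing claim --- that two survivors holding one token each could not complete the cautious walk and the marking tour even if they met --- is false: two co-located agents solve BHS by the time-out mechanism essentially without tokens (this is exactly what the algorithm of Section~\ref{sec:difficultAlgo} exploits), so any impossibility argument here must show that no two live agents ever meet, not that a meeting would be useless. Likewise, ``a lone agent can never safely certify that an adjacent node is the black hole'' is not the right invariant: a one-token cautious walk does leave the token adjacent to the black hole when its owner dies; what fails is that a single identical token is then ambiguous, and that in the long-excursion case the token is not adjacent to the black hole at all. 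These are the points the paper's proof pivots on and that your write-up would still need to supply.
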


\begin{proof} 
Clearly, in view of Theorem \ref{impos-torus-k-k}, an algorithm which does not instruct an agent to leave its token at a node, cannot solve the BHS problem. Hence any potentially correct algorithm should instruct an agent to leave its token down. Moreover this decision has to be taken after a finite number of steps (due to agents' constant memory). After that the agents visit new nodes until they see a token. Following a similar reasoning as in Theorem \ref{impos-torus-k-k} we can show that if the agents visit only a constant number of nodes before returning to meet their tokens they cannot visit all nodes of the torus. If they move their tokens each time they see them and repeat the previous procedure (i.e., visit a constant number of nodes and return to meet their tokens), we can show that they will find themselves back at their initial locations and initial states without having met with other agents and leaving some nodes unvisited. An adversary may place the black hole at an unvisited node to make the algorithm fail. Now consider the case that at some point an algorithm instructs the agents to visit a non-constant number of nodes until they see a token (e.g., leave your token down and go east until you see a token). Again in a similar reasoning as in Theorem \ref{impos-torus-k-k}, we can show that an adversary may initially place the agents and the black hole, and select the size of the torus so that two of the agents enter the black hole without leaving their tokens close to it: The agent (say $A$) that enters first into the black hole has been initially placed by an adversary so that it left its token more than a constant number of nodes away from the black hole. The adversary initially places another agent $B$ so that 
it enters the black hole before it meets $A$'s token. Furthermore $B$ leaves its token more than a constant number of nodes away from the black hole. Hence the third agent, even if it meets the tokens left by $A$ or $B$, it could not decide the exact location of the black hole.
\end{proof}

We can prove using Lemma \ref{impos-torus-3-2} and \ref{impos-torus-2-k}, the following theorem.

\begin{theorem}\label{th:imp}
At least three agents are necessary to solve the BHS problem in an oriented torus of arbitrary size.
Any algorithm solving this problem using three agents requires at least two movable tokens per agent.
\end{theorem}

\section{Simple Algorithms for BHS in a torus using moveable tokens}\label{sc:simple_algo}

Due to the impossibility result from the previous section, we know that unmoveable tokens are not sufficient to solve BHS in a torus. In the following, we will use only moveable tokens.
To explore the torus an agent uses the Cautious-Walk technique~\cite{dfps07} using moveable tokens. In our algorithms, a \emph{Cautious-Walk in direction $D$ with $x$ tokens} means the following: (i) the agent releases a sufficient number of tokens such that there are exactly $x$ tokens at the current node, (ii) the agent moves one step in direction $D$ and if it survives, the agent immediately returns to the previous node, (iii) the agent picks up the tokens released in step (i) and again goes one step in direction $D$. If an agent vanishes during step (ii), any other agent arriving at the same location sees $x$ tokens and realizes a potential danger in direction $D$. Depending on the algorithm an agent may use 1, 2, or 3 tokens to implement the Cautious-Walk.

\subsection{Solving BHS using $k\geq 3$ agents and 3 tokens}

We show that three agents suffice to locate the black hole if the agents are provided with three tokens. We present an algorithm (BHS-torus-33) that achieves this. This algorithm uses three procedures : \emph{Mark\_All}, \emph{Wait} and \emph{Cautious-walk}.

\noindent The procedure \emph{Mark\_All}($D$) is executed by an agent when it deduces that the next node $w$ in the direction $D$ from current location is the black hole. The agent traverses a cycle around node $w$, visiting each neighbor of $w$ and marking as ``dangerous" all links leading to node $w$.

\noindent The subroutine \emph{Wait}($x$) requires the agent to take no action for $x$ time units.

\begin{procedure}
 \caption{Cautious-walk(Direction $D$, integer $x$)}
 \tcc{Procedure used by the agent to explore the nodes}
\SetKw{go}{Go}

Put tokens until there are $x$ tokens\;
\go one step along $D$ and then go back\tcc*{test if the node in direction $D$ is the black hole}
Pick up the tokens released in the first step\;
\go one step along $D$\;
\end{procedure}

\begin{algorithm}
\tcc{ Algorithm for BHS in Oriented Torus (using k=3 or more  agents, 3 tokens each)}
\SetKwFunction{WAIT}{Wait}
\SetKwFunction{MarkAll}{Mark-All}
\SetKwFunction{CautiousWalk}{CautiousWalk}
\SetKw{go}{Go}
\SetKw{and}{and}
\SetKw{ore}{or}
\tcc{ One token = Homebase \\
Two tokens = BlackHole in the East \\
Three tokens = BlackHole in the South}
~\\

Found:= false\;
\While{not Found}{
Count := 0\;
Put two tokens\; 
Go one step East and come back\;
Pick one token and go one step East \tcc*{leaving one token at the homebase}
 \Repeat{found two tokens \ore $count=2$}{
   \lIf{found single token}{increment Count\;}
   \CautiousWalk{East,2}\;
 }
 \lIf {found 2 or 3 tokens} { Found:=true\; }
 \Else(\tcc*[f]{The agent found 1 token and must move to the next horizontal ring})
 {
	Pick one token \tcc*{Remove the homebase token}
	\CautiousWalk{South,3} \tcc*{using the token at the current node}
 	  \While(\tcc*[f]{Current node is the homebase of another agent}){found one token}{
 	  	\CautiousWalk{East,2} \tcc*{using the token at the current node} 
	}
   \lIf {found 2 or 3 tokens} { Found:=true; }
 }
}
\lIf {found 2 tokens} { \MarkAll{East}\; }
\lIf {found 3 tokens} { \MarkAll{South}\; }

 \caption{BHS-Torus-33}
  \label{algo:BHS-T-3-3}
\end{algorithm}

\noindent \textbf{Algorithm BHS-torus-33}:\\ 
\noindent An agent explores one horizontal ring at a time and then moves one step South to the next horizontal ring and so on. 
When exploring a horizontal ring, the agent leaves one token on the starting node. This node is called the \emph{homebase} of the agent and the token left (called homebase token) is used by the agent to decide when to proceed to the next ring. The agent then uses the two remaining tokens to repeat Cautious-Walk in the East direction until it has seen twice a node containing one token. Any node containing one token is a homebase either of this agent or of another agent. The agent moves to the next horizontal ring below after encountering two homebases. However before moving to the next ring, it does a cautious walk in the South direction with three tokens (the two tokens it carries plus the homebase token). If the agent survives and the node reached by the agent has one token, the agent repeats a cautious walk in the East direction (with two tokens) until it reaches an empty node. The agent can now use this empty node as its new homebase. It then repeats the same exploration process for this new ring leaving one token at its new homebase. 

Whenever the agent sees two or three tokens at the end of a cautious-walk, the agent has detected the location of the black hole: If there are two (resp. three) tokens at the current node, the black hole is the neighboring node $w$ to the East (resp. South). In this case, the agent stops its normal execution and then traverses a cycle around node $w$, visiting all neighbors of $w$ and marking all the links leading to $w$ as dangerous.

\begin{theorem}
Algorithm BHS-torus-33 correctly solves the BHS problem with $3$ or more agents.
\end{theorem}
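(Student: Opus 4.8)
**The plan is to prove correctness by establishing three things: safety (no agent enters the black hole without leaving a token signal), progress (agents make systematic exploration), and termination with correct output.** The core idea is that the Cautious-Walk procedure guarantees that whenever an agent is destroyed by the black hole, it has left exactly $x$ tokens at the node adjacent to the black hole, which serves as an unambiguous danger signal ($x=2$ meaning "black hole to the East", $x=3$ meaning "black hole to the South"). I would first argue the invariant that at any moment, a node carries either zero tokens, one token (a homebase), two tokens (a death signal pointing East), or three tokens (a death signal pointing South), and that an agent performing Cautious-Walk in direction $D$ with $x$ tokens picks up its own tokens before moving on \emph{unless} it is destroyed. Thus the only way two or three tokens persist at a node is if the agent that placed them vanished into the black hole in the adjacent direction.

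The second step is to verify the progress argument: I would show that an agent, while alive, systematically sweeps its current horizontal ring going East, using the single-token homebases (its own and those of other agents) as markers to decide when one full ring has been explored. After encountering two homebases, the agent descends one step South (via a Cautious-Walk with three tokens, to test that the southern neighbor is not the black hole) and establishes a new homebase on the next ring. I would argue that since the torus has finitely many rings and each ring is finite, an agent not destroyed by the black hole either eventually detects two or three tokens (finding the black hole) or would cycle forever — so I must show this latter case cannot happen without the black hole being located, because the black hole lies on \emph{some} ring and \emph{some} agent's southward sweep must eventually probe the column containing it.

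**The main obstacle I anticipate is ruling out pathological interactions between the three identical agents and, crucially, guaranteeing that at least one agent survives to perform \emph{Mark-All}.** Since agents are anonymous and tokens identical, an agent cannot distinguish its own homebase token from another agent's, nor its own death-signal tokens from another's; I must carefully argue that the homebase-counting rule ("stop after seeing two single-token nodes") does not cause an agent to prematurely or incorrectly conclude a ring is swept, and that two agents entering the black hole from the same direction leave a \emph{consistent} signal rather than a corrupted token count. The key case analysis is: at most one agent can be destroyed entering from the East and at most one from the South before a surviving agent reads the two-token or three-token signal, and with three agents at least one survives. I would handle this by a counting argument on how many agents can be consumed by the black hole (the black hole has four incident links, but the algorithm only ever probes it from the West neighbor going East, giving two tokens, and from the North neighbor going South, giving three tokens), showing the signals at the two relevant neighbors are never ambiguous and that the third agent necessarily reaches one of these signals and executes \emph{Mark-All} correctly.

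**Finally, I would close by invoking the correctness of \emph{Mark-All}:** once a surviving agent reads two (resp. three) tokens, it knows the black hole is the adjacent East (resp. South) node $w$, walks the cycle around $w$ marking all four incident links as dangerous, and halts — satisfying the success condition that all links incident to the black hole (and only those) are marked and at least one agent survives.
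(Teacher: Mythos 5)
Your overall strategy matches the paper's: the paper's (very terse) proof rests on exactly your two pillars, namely that an agent entering the black hole going East (resp.\ South) leaves two (resp.\ three) tokens at the previous node so that no second agent ever probes the black hole through the same link --- hence at most two of the three agents die --- and that a surviving agent reading two or three tokens can safely conclude where the black hole is and run \emph{Mark-All}. Your additional attention to progress (that a surviving agent keeps descending ring by ring and must eventually reach the black hole's ring) fills in something the paper leaves implicit, and is welcome.

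There is, however, one genuine gap: your stated invariant that ``at any moment'' a node with two or three tokens is a death signal is false as written. During step (ii) of a Cautious-Walk, a \emph{live} agent has temporarily left two (or three) tokens at its departure node and has not yet returned to pick them up; an agent that inspected that node during this window would see a spurious death signal. The paper closes this hole with a synchronization argument that your proposal never invokes: all agents start simultaneously, every move of every agent is a Cautious-Walk taking exactly three time units, tokens are retrieved at the second time unit, and token counts are only read at the end of the third. Consequently, at the instants when any agent reads a count, every surviving agent has already reclaimed its Cautious-Walk tokens, so a count of two or three can only have been left by an agent that vanished. Without this timing argument (or some substitute), the step ``two or three tokens $\Rightarrow$ the adjacent node in the corresponding direction is the black hole'' does not follow, and this is precisely the step on which both safety (no second agent enters through the same link) and the correctness of \emph{Mark-All} depend. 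The rest of your outline is sound and consistent with the paper.
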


\begin{proof}
An agent may visit an unexplored node only while going East or South. If one agent enters the black hole going East (resp. South), there will be two (resp. three) tokens on the previous node and thus, no other agent would enter the black hole through the same link. This implies that at least one agent always survives. 
Whenever an agent encounters two or three tokens at the end of a Cautious-Walk, the agent is certain of the location of the black hole since any alive agent would have picked up its Cautious-Walk tokens in the second step of the cautious walk (The agents move synchronously always using cautious walk and taking three steps for each move). 
\end{proof}
\subsection{BHS using $k\geq4$ agents and 2 tokens each}
We now present an algorithm that uses only two tokens per agent, but requires at least $4$ agents to solve BHS.

%
During the algorithm, the agents put two tokens on a node $u$ to signal that either the black hole is on the South or the East of node $u$. Eventually, both the North neighbor and the West neighbor of the black hole are marked with two tokens.  Whenever there is a node $v$ such that there are exactly two tokens at both the West neighbor of $v$ and the North neighbor of $v$, then we say that there exists a \emph{Black-Hole-Configuration} (BHC) at $v$.

\bigskip

\noindent \textbf{Algorithm BHS-torus-42}: 

\noindent The agent puts two tokens on its starting node (homebase). It then performs a Cautious-Walk in the East direction. If the agent survives, it returns, picks up one token and repeats the Cautious-Walk with one token in the East direction (leaving the other token on the homebase) until it reaches a node containing one or two tokens. 

\begin{itemize}
\vspace{-0.3cm}
\item If the agent reaches a node $u$ containing two tokens, then the black hole is the next node on the East or on the South (See Property C of Proposition~\ref{prop:alg42}). The agent stops its exploration and checks whether the black hole is the East neighbor or the South neighbor. 
\vspace{-0.3cm}
\item Whenever an agent reaches a node containing one token, it performs a Cautious-Walk in East direction with two tokens and then continues the Cautious-Walk in the same direction with one token. If the agent encounters three times\footnote{The agent may encounter homebases of two agents which have both fallen into the black hole. (In this case it must continue in the same direction until it locates the black hole)} a node containing one token, it moves to the next horizontal ring below. To do that it first performs a Cautious-Walk with two tokens in the South direction. If the agent survives and reaches the ring below, it can start exploring this horizontal ring. If the current node is not empty, the agent repeats a cautious walk with two token in the East direction until it reaches an empty node. Now the agent repeats the same exploration process using the empty node as its new homebase.
Whenever the agent encounter a node with two tokens, it stops its exploration and checks whether the black hole is the East or South neighbor.

\end{itemize}

\begin{algorithm}
\tcc{ Algorithm for BHS in Oriented Torus (using k=4 or more  agents, 2 tokens) }
\SetKw{and}{and}
\SetKwFunction{WAIT}{Wait}
\SetKwFunction{MarkAll}{Mark-All}
\SetKwFunction{CautiousWalk}{CautiousWalk}
\SetKw{go}{Go}
\SetKw{ore}{or}
\tcc{ One token = Homebase (or Blackhole in the East) \\
Two tokens = BlackHole either in the East or in the South}
~\\

Found:= false\;
\While{not Found}{
 Count := 0\;
 Put one token \tcc*{mark your homebase}
 \CautiousWalk{East,2}\;
 \Repeat{found two tokens \ore $count=3$}{
   \uIf{found single token}{increment Count; \CautiousWalk{East,2};}
   \ElseIf {found no tokens} { \CautiousWalk{East,1}\;}
 }
 
 \uIf{found 2 tokens} { 
   Found:=true\;
  }
 \Else(\tcc*[f]{found 1 token (thrice), so move to the next horizontal ring})
   {
	Pick one token \tcc*{Remove the homebase token}
	\CautiousWalk{South,2}\;
 	  \While(\tcc*[f]{Search for an empty node}){found one token}
	  {
 	  	\CautiousWalk{East,2}\;
 	  }
   	\lIf {found 2 tokens} { 
	Found:=true\;
        }
   }
}
\tcc{The agent found two tokens and knows that one of the neighbors is the black hole}
\go West\;
Wait(1) \tcc*{To Synchronize with Cautious-Walk}
\go South\;

\lIf {found two tokens} { \MarkAll{East}\; }
\Else{
 \uIf {found one token} { \CautiousWalk{East,2}\; }
 \Else { \CautiousWalk{East,1}\; }
 \go North\;
 \MarkAll{East}\;
}

 \caption{BHS-Torus-42}
  \label{algo:BHS-T-4-2}
\end{algorithm}

\noindent In order to check if the black hole is the East neighbor $v$ or the South neighbor $w$ of the current node $u$ (containing two tokens), the agent performs the following actions: The agent reaches the West neighbor $x$ of $w$ in exactly three time units by going west and south (and waiting one step in between). If there are two tokens on this node $x$ then $w$ is the black hole. Otherwise, the agent 
performs a cautious walk in the East direction with one token (or with two tokens if there is already one token on node $x$). If it safely reaches node $w$, then the black hole is the other node $v$. Otherwise the agent would have fallen into the black hole leaving a BHC at node $w$. 
An agent that discovers the black hole, traverses a cycle around the black hole visiting all its neighbors and marking all the links leading to it as dangerous.

\begin{proposition}\label{prop:alg42}
During an execution of BHS-torus-42 with $k \geq 4$ agents, the following properties hold:
\begin{enumerate}[A]
\vspace{-0.25cm}
\item When an agent checks the number of tokens at a node, all surviving agents have picked up their cautious-walk token.

\vspace{-0.25cm}\item At most three agents can enter the black hole:
\begin{enumerate}\vspace{-0.2cm}
\item at most one agent going South leaving two tokens at the previous node.
\item at most two agents going East, each leaving one of its tokens at the previous node.
\end{enumerate}
\vspace{-0.25cm}
\item When an agent checks the number of tokens at a node, if there are two tokens then the black hole is either the East or the South neighbor of the current node.
\vspace{-0.25cm}
\item After an agent starts exploring a horizontal ring, one of the following eventually occurs:
\begin{enumerate}\vspace{-0.2cm}
\item If this ring is safe, the agent eventually moves to the next horizontal ring below.
\item Otherwise, either all agents on this ring fall into the black hole or one of these agents marks all links to the black hole.
\end{enumerate}
\end{enumerate}
\end{proposition}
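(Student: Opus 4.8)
The plan is to prove the four properties in the order A, then B and C together, then D, using the lock-step synchronous behaviour of the agents as the backbone of the whole argument.

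First I would establish Property A, which is essentially a timing statement. Since the agents are identical, start simultaneously in the same state, and run the same deterministic automaton, I would set up a synchronization invariant by induction on the time step: every surviving agent is always at the same position inside a three-step \emph{Cautious-Walk} block (put-and-step-forward, step-back, pick-up-and-step-forward). The only places where the control flow momentarily leaves a Cautious-Walk are the initial homebase marking, the inter-ring transition (\texttt{Pick one token}; Cautious-Walk South; the \texttt{while} loop of East walks) and the final checking phase; I would verify that each of these code paths consumes a number of time units that keeps the common phase intact --- this is precisely the role of the \texttt{Wait(1)} instruction, which pads a bare move so that it fills the same slot as the test step of a Cautious-Walk. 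A token count is read only at a block boundary, and at such a boundary every surviving agent has already performed the pick-up of step~3 of its current Cautious-Walk, so no surviving agent has a Cautious-Walk token lying on a node. Hence the tokens present anywhere on the torus at a checking instant are exactly the homebase tokens and the tokens abandoned by agents that fell into the black hole.

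Granting Property A, Properties B and C reduce to bookkeeping about where tokens accumulate. An agent enters an unexplored node only while stepping East or South, so it can fall into the black hole $B$ only from its unique West neighbour $W$ (going East) or its unique North neighbour $N_b$ (going South). The key point, supplied by Property A, is that the tokens a dying agent leaves at $W$ or $N_b$ are never removed: any agent that would pick them up is itself testing into $B$ and dies first. A South transition always uses a two-token Cautious-Walk, so the first agent testing South from $N_b$ leaves two tokens there; every later agent reaching $N_b$ does so while walking East, reads two tokens, and halts instead of testing South --- this gives the single South-death of part~(a). For the East direction I would follow the count at $W$: an agent reading an empty node does a one-token walk and an agent reading a single token does a two-token walk, so at most two successive fatal tests drive the count at $W$ to two, after which every arrival reads two tokens and halts, giving the at-most-two East-deaths of part~(b). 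Property C is then immediate, since by this analysis a node carries two tokens at a checking instant only if it is $W$ (with $B$ to its East) or $N_b$ (with $B$ to its South); a homebase carries a single token, and the transient two-token states occurring inside a Cautious-Walk are invisible at a block boundary by Property A.

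Finally I would treat Property D as a combined liveness-and-correctness claim. For part~(a) I would show that the East exploration of a safe ring terminates: walking East the agent keeps meeting single-token nodes (homebases of other agents and, after one full revolution, its own), so its counter reaches $3$ within a number of blocks bounded by the ring length, after which it performs the South transition. For part~(b), when $B$ lies on the ring the Cautious-Walk mechanism together with Property~B deposits the two tokens at $W$ and at $N_b$ that form a Black-Hole-Configuration; since at most three agents are lost and $k\ge 4$, a survivor remains, reads two tokens, and runs the checking subroutine (\texttt{go West}; \texttt{Wait(1)}; \texttt{go South}; the one- or two-token East walk; \texttt{go North}), whose correctness I would verify case by case so as to conclude that it marks every link into $B$. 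The step I expect to be the main obstacle is pinning down the \emph{exact} death counts of Property B: the monotonicity argument bounds fatal tests occurring in distinct blocks, but I must also show that at most one agent makes the fatal test from a given node in the \emph{same} block. For this I would combine Property A with the hypothesis that agents start at distinct nodes --- two agents sharing a node and a state at a block boundary, both having just stepped East, must also have shared the previous node, so by descent they would coincide at time $0$, a contradiction unless their behaviours had diverged earlier; ruling out such a re-convergence for this particular automaton is the delicate heart of the proof.
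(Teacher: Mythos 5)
Your treatment of Properties A, B and C is essentially the paper's own argument: the paper also proves A by observing that every move (Cautious-Walk or checking) takes three synchronized time units, with pick-ups at the end of the second unit and token counts read at the end of the third, and it also derives B and C from the same token-accumulation bookkeeping at the West and North neighbours of the black hole that you describe (you are in fact more explicit than the paper about the $0\to1\to2$ progression at $W$ and about the simultaneity issue, which the paper glosses over).

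The genuine gap is in Property D(b). You assert that the mechanism ``deposits the two tokens at $W$ and at $N_b$'' and that a survivor therefore ``reads two tokens and runs the checking subroutine.'' Neither step is established. First, $W$ carries two tokens only after a \emph{second} agent dies going East from it; if only one agent has died there, $W$ carries a single token, indistinguishable from a homebase. Second, and more importantly, nothing in your argument prevents the following failure mode: every surviving agent on the black-hole ring counts three single-token nodes --- all of them homebases or cautious-walk tokens of agents that fall into the black hole --- and departs South without ever reading two tokens, so that no agent of that ring marks the black hole. Ruling this out is the entire content of the paper's proof of D(b): it argues by contradiction that if a surviving agent $A_1$ counts three single tokens all owned by dying agents, then since by Property B at most two agents ($A_2$, $A_3$) can die going East and these are the only dying agents that leave single tokens, two of the three tokens belong to the same agent $A_2$, one of them being $A_2$'s cautious-walk token on the unique West neighbour $v$ of the black hole; but $A_3$ also dies going East and hence also leaves a cautious-walk token on $v$, so $v$ would carry two tokens rather than the one $A_1$ saw --- a contradiction. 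Hence at least one of the three single tokens is the homebase of a \emph{surviving} agent of that ring, and the surviving agent that entered the ring closest to the black hole does encounter a two-token node and runs the checking subroutine. Without this counting argument (or an equivalent substitute), your proof of D(b) does not go through.
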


\begin{theorem}
Algorithm BHS-torus-42 correctly solves the black hole search problem with $k \geq 4$ agents, each having two tokens.
\end{theorem}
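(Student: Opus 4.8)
The plan is to reduce the theorem to Proposition~\ref{prop:alg42}, whose four properties I treat as the technical engine, and then to assemble correctness from three ingredients: at least one agent survives, the black hole is eventually located, and the locating agent marks exactly the links incident to it.

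First I would settle survival. By Property~B, the agents that can ever be destroyed are at most three: at most one entering the black hole from the North (leaving two tokens on the node above) and at most two entering from the West (each leaving a single token on the node to its West). Since the team has $k\ge 4$ agents, at least one agent survives the whole execution. This is the only place where the hypothesis $k\ge 4$ is used.

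Next I would prove that some agent eventually executes Mark-All on the black hole. Write $R^{*}$ for the horizontal ring through the black hole. An agent changes ring only by a South Cautious-Walk, and Property~D(a) guarantees that a surviving agent exploring a safe ring eventually moves one ring South; hence, as the horizontal rings are cyclically ordered in the torus, a surviving agent that never triggers a marking must keep descending and therefore must eventually begin exploring $R^{*}$ (it cannot be stopped by a South Cautious-Walk straight into the black hole, since that would destroy it, contradicting survival). Now apply Property~D(b) to the unsafe ring $R^{*}$: either every agent exploring it falls into the black hole, or one of them marks all links to the black hole. A surviving agent exploring $R^{*}$ witnesses that the first alternative fails, so the second must hold. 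Phrased contrapositively: if no Mark-All were ever executed, a surviving agent would cycle through the rings forever, explore $R^{*}$, and by Property~D(b) mark its links, a contradiction; thus the black hole is located in finite time.

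Finally I would verify that the identification is sound, so that Mark-All is invoked only on the true black hole and marks exactly its four incident links. By Property~C, when an agent halts on a node $u$ carrying two tokens, the black hole is either the East neighbor $v$ or the South neighbor $w$ of $u$, and it remains to distinguish the two cases; here I would trace the check appended to the algorithm. The agent reaches the West neighbor $x$ of $w$ after exactly three synchronized steps (West, Wait(1), South); two tokens at $x$ complete a Black-Hole-Configuration at $w$ and certify that $w$ is the black hole, whereupon Mark-All(East) from $x$ marks the links into $w$. Otherwise a Cautious-Walk East from $x$ either reaches $w$ safely, certifying that the remaining candidate $v$ is the black hole (the agent then steps North to $u$ and runs Mark-All(East) on $v$), or destroys the agent and leaves a Black-Hole-Configuration at $w$ that a later surviving agent detects. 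In every branch Mark-All runs only on the black hole, and since it walks the cycle of the black hole's neighbors and marks precisely the four links entering it, the final configuration has exactly the links incident to the black hole marked dangerous. Together with survival this yields correctness.

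The step I expect to be the main obstacle is this last verification of the East-versus-South check in the synchronous, finite-state setting. It rests on Property~A to ensure that transient Cautious-Walk tokens of other agents are never misread as danger signals while the test is in progress, and on a careful timing analysis (notably the role of the Wait(1) step) so that token counts are sampled at exactly the instants for which Property~C was established; the most delicate point is arguing that when the checking agent vanishes mid-test it leaves a Black-Hole-Configuration that is unambiguously interpreted by some surviving agent, which is what couples the individual check to the global guarantee of Property~D.
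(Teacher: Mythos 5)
Your proposal is correct and follows essentially the same route as the paper: the paper's proof likewise reduces the theorem to Proposition~\ref{prop:alg42}, using Property~B for survival of at least one agent (this is where $k\ge 4$ enters), Property~D for eventual identification of the black hole by a survivor, and Property~C for soundness of the marking. Your additional elaborations (the descent argument through the rings and the trace of the East-versus-South check) fill in details the paper leaves implicit but do not change the structure of the argument.
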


\begin{proof}
Property $B$ of Proposition~\ref{prop:alg42} guarantees that at least one agent never enters the black hole. Property $D$ ensures that one of the surviving agents will identify the black hole. Property $C$ shows that if the links incident to a node $w$ are marked as dangerous by the algorithm, then node $w$ is the black hole. 
\end{proof}

\newcommand{\CWtogether}{\emph{Cautious-Walk-With-Another-Agent}}
\newcommand{\BHStogether}{\emph{BHS-with-colocated-agents}}

\section{Optimal algorithm for BHS using 3 agents and 2 tokens each}\label{sec:difficultAlgo}
\subsection{Sketch of the algorithm}

Using the techniques presented so far, we now present the algorithm that uses exactly 3 agents and two tokens per agent.
The algorithm is quite involved and we present here only the main ideas of the algorithm. The complete algorithm along with a proof of correctness can be found in Subsection~\ref{sec-app-algo} and \ref{sec-app-proof}.

Notice first that we can not prevent two of the three agents from
falling into the black hole (see proof of
Theorem~\ref{impos-torus-2-k}). To ensure that no more than two agents
enters the black hole, the algorithm should allow the agent to move
only in two of the possible four directions (when visiting unexplored
nodes). When exploring the first horizontal ring, an agent always
moves in the East direction, using a Cautious-Walk as before and
keeping one token on the starting node (homebase). This is called
procedure \emph{First-Ring}. Once an agent has completely explored one
horizontal ring, it explores the ring below, using procedure
\emph{Next-Ring}. During procedure Next-Ring, an agent traverses the
already explored ring and at each node $u$ of this ring, the agent
puts one token, traverses one edge South (to check the node just below
node $u$), and then immediately returns to node $u$ and picks up the
token.  Note that an agent may fall into the black hole only when
going South during procedure \emph{Next-Ring} or when going East
during procedure \emph{First-Ring}. We ensure that at most one agent
falls into the black hole from each of these two directions. The
surviving agent must then identify the black hole from the tokens left
by the other agents.
%
%
For this algorithm, we redefine the \emph{Black-Hole-Configuration}
(BHC) as follows: If there is a node $v$ such that there is one or two
tokens at both the West neighbor of $v$ and the North neighbor of $v$,
then we say that a BHC exists at $v$. The algorithm should avoid
forming a black hole configuration at any other node except the
black hole. In particular, when the agents put tokens on their
homebase, these tokens should not form a BHC. This requires
coordination between any two agents that are operating close to
each other (e.g. in adjacent rings of the torus) and it is not always
possible to ensure that a BHC is never formed at a safe node. 

The main idea of the algorithm is to make two agents meet whenever
they are close to each other (this requires a complicated
synchronization and checking procedure). If any two agents manage to
gather at a node, we can easily solve BHS using the standard procedure
for colocated agents\footnote{Note that the agents meeting at a node
  can be assigned distinct identifiers since they would arrive from
  different directions.} with the time-out mechanism (see \cite{cdlm11,ckr06}) .  On the other hand, if the agents are always far away from
each other (i.e. more than a constant distance) then they do not
interfere with the operations of each other until one of them falls
into the black hole. The agents explore each ring, other than the first
ring, using procedure \emph{Next-Ring}.  We have another procedure
called \emph{Init-Next-Ring} that is always executed at the beginning
of \emph{Next-Ring}, where the agents check for certain special
configurations and take appropriate action. If the tokens on the potential homebases of two agents would form a BHC on a safe
node, then we ensure two agents meet.

\bigskip

\noindent \textbf{Synchronization:} 

\noindent During the algorithm, we ensure that two agents meet if they start the procedure \emph{Init-Next-Ring} from nearby locations. We achieve this by keeping the agents synchronized to each other, in the sense that they start executing the procedure at the same time, in each iteration. More precisely, we ensure the following property:

\begin{property}\label{pr:synch}
When one agent starts procedure \emph{Init-Next-Ring}, any other
surviving agent either (i) starts procedure \emph{Init-Next-Ring} at
exactly the same time, or (ii) waits in its current homebase along
with both its tokens during the time the other agent is executing the
procedure or, (iii) has not placed any tokens at its homebase. 
\end{property}

Notice that if there are more than one agent initially located at distinct nodes within the same horizontal ring, an agent cannot distinguish its homebase from the homebase of another agent, and thus an agent would not know when to stop traversing this ring and go down to the next one. We get around this problem by making each agent traverse the ring a sufficient number of times to ensure that every node in this ring is explored at least once by this agent. To be more precise, each agent will traverse the ring until it has encountered a homebase node six times (recall that there can be either one, two or three agents on the same ring). This implies that in reality the agent may traverse the same ring either twice, or thrice, or six times. 
If either all agents start in distinct rings or if all start in the same ring then, the agents would always be synchronized with each other (i.e. each agent would start the next iteration of \emph{Next-Ring} at the same time).
The only problem occurs when two agents start on the same ring and the third agent is on a different ring. In this case, the first two agents will be twice as fast as the third agent. We introduce waiting periods appropriately to ensure that Property~\ref{pr:synch} holds.

For both the procedures \emph{First-Ring} and \emph{Next-Ring}, we
define one \emph{big-step} to be the period between when an agent
arrives at a node $v$ from the West with its token(s) and when it has
left node $v$ to the East with its token(s). During a big-step the agent
would move to an unsafe node (East or South), come back to pick its
token, wait for some time at $v$ before moving to the next node with
its token. The waiting period is adjusted so that an agent can execute
the whole procedure \emph{Init-Next-Ring} during this time.  Thus, the
actual number of time units for each big-step is a constant $D$ which we
call the \emph{magic number}.

\bigskip

\noindent \textbf{Algorithm \emph{BHS-Torus-32}:}\\

\noindent \textbf{Procedure} \emph{First-Ring}:\\
\noindent During this procedure the agent explores the horizontal ring
that contains its starting location.  The agent puts one token on its
homebase and uses the other token to perform cautious-walk in the
direction East, until it enters a node with some tokens. If it finds a
node with two tokens then the next node is the black hole. Thus, the
agent has solved BHS. Otherwise, if the agent finds a node with a
single token this is the homebase of some agent (maybe itself). The
agent puts the second token on this node and continues the walk
without any token (i.e. it imitates the cautious-walk). If it again
encounters a node with a single token, then the next node is the black
hole and the algorithm terminates.  Otherwise, the agent keeps a
counter initialized to one and increments the counter whenever it
encounters a node containing two tokens. When the counter reaches a
value of six, the procedure terminates. At this point the agent is on
a node with two tokens (which it can use for the next procedure).

Unless an agent enters or locates the black hole, the procedure \emph{First-Ring} requires exactly $6nD$ time units for an agent that is alone in the ring, $3nD$ for two agents that start on the same ring, and $2nD$ if all the three agents start on the same ring.

\bigskip

\noindent \textbf{Procedure} \emph{Init-Next-Ring}:\\
\noindent An agent executes this procedure at the start of procedure
\emph{Next-Ring} in order to choose its new homebase for exploring the
next ring. The general idea is that the agent checks the node $u$ on
the South of its current location, move its two tokens to the East,
and then goes back to $u$. If there is another agent that has started
\emph{Next-Ring} on the West of $u$ (i.e., without this Procedure, the
homebases of the two agents would have formed a BHC), the agents can
detect it, and \emph{Init-Next-Ring} is designed in such a way that
the two agents meet. More precisely, when an agent executes
\emph{Init-Next-Ring} on horizontal ring $i$ without falling into the black hole, we ensure that
either (i) it meets another agent, or (ii) it locates the black hole,
or (iii) it detects that the black hole is on ring $i+2$, or (iv) the
token it leaves on its homebase does not form a BHC with a token on
ring $i+1$.  In case (iii), the agent executes
\emph{Black-Hole-in-Next-Ring}; in case (iv), it continues the
execution of \emph{Next-Ring}.

\bigskip

\noindent \textbf{Procedure} \emph{Next-Ring}:\\
\noindent The agent keeps one token on the homebase and with the other
token performs a special cautious-walk during which it traverses the
safe ring and at each node it puts a token, goes South to check the
node below, returns back and moves the token to the East. The agent
keeps a counter initialized to zero, which it increments whenever it
sees a node with a token on the safe ring. When the agent sees a token
on the safe ring, it does not go South, since this may be
dangerous. Instead, the agent goes West and South, and if it does not
see any token there, it puts a token and goes East. If the agent enters
the black hole, it has left a BHC. When the counter reaches a value of
six, the procedure terminates.

During the procedure, an agent keeps track of how many (1 or 2) tokens
it sees in the safe ring and the ring below. This information is
stored as a sequence (of length at most 24). At the end of the
procedure, using this sequence, an agent in the horizontal
ring $i$ can detect whether (i) the Black hole lies in the horizontal
ring $i+1$ or $i+2$, or, (ii) there are two other agents in the ring
$i$ and ring $i+1$ respectively, or, (iii) the ring $i+1$ does not
contain the black hole. In scenario (i), the agent executes procedure
\emph{Black-Hole-in-Next-Ring}; in scenario (ii), the agent meets with
the other agent in the same ring; in scenario (iii), the agent moves
to the next horizontal ring (i.e. ring $i+1$) to start procedure
\emph{Init-Next-Ring} again.

\bigskip

\noindent \textbf{Procedure} \emph{Black-Hole-in-Next-Ring}:\\ The
agent executes this procedure only when it is certain that the
black hole lies in the ring below its current position. The procedure is similar to \emph{Next-Ring};  the main difference being that the agent does not leave a homebase token. During the
procedure, either (i) the agent detects the location of the black hole
and marks all links to the black hole or (ii) the agent falls into the
black hole, forming a BHC at the black hole.






\subsection{Formal description of the algorithm}\label{sec-app-algo}

We now present in details the different procedures that are used in
Algorithm~\ref{algo:BHS-T-3-2}. As explained before, the agents use
procedure \FR to explore the horizontal ring where they start
and \NR to explore the others horizontal rings. 

\begin{algorithm}
\tcc{ Algorithm for BHS in Oriented Torus (using k=3 agents, 2 tokens)}
\SetKw{and}{and}
\SetKw{go}{Go}
\SetKwFunction{Wait}{Wait}
\SetKwFunction{NextRing}{NextRing}
\SetKwFunction{MarkAll}{Mark-All}
\SetKwFunction{CautiousWalk}{CautiousWalk}
\SetKwFunction{Exit}{Exit}
\SetKwFunction{FirstRing}{FirstRing}
\FirstRing\;
\NextRing{true}\;
 \caption{BHS-Torus-32}
  \label{algo:BHS-T-3-2}
\end{algorithm}

In the following, we sometimes denote nodes by their coordinates in
the ring. The $North$ (resp. $East$, $South$, $West$) neighbor of node
$(i,j)$ is the node denoted by $(i-1,j)$ (resp. $(i,j+1)$, $(i+1,j)$,
$(i,j-1)$).

\medskip
Recall that for both the procedures \emph{First-Ring} and
\emph{Next-Ring}, a big-step is the period between when an agent arrives
at a node $v$ with its token(s) to when it has left node $v$ with its
token(s).  Note that in both procedures, a big-step takes the same number
of time units that we denote by $D$. 

\medskip
\noindent\textbf{FirstRing.}
During this procedure the agent explores the horizontal ring that
contains its starting location.  The agent puts one token on its
homebase and uses the other token to perform cautious-walk in the
direction East, until it enters a node with some tokens. This node is
the homebase of some agent $a$ (maybe itself). If there are two tokens
on this node, then it means died on its first move going $East$.
Thus, the agent has solved BHS. Otherwise, the agent puts the second
token on this node and continues the walk using cautious-walk moves
but without moving tokens. Note that, in this case, agent $a$ has
already explored these nodes, and so it is safe for the agent to
continue going $East$ until it reaches a token. If it again encounters
a node with a single token, then it must be the second token of agent
$a$, because otherwise, $a$ would have put its second token on top of
this token. Thus, it implies that the next node is the black hole. 
Otherwise, the agent can only see nodes with two tokens on this ring. The agent keeps a counter
initialized to one and increments the counter whenever it encounters a
node containing two tokens. When the counter reaches a value of six,
the procedure terminates. At this point the agent is back on its
homebase with its two tokens (which it can use for the next
procedure).

When an agent locates the black hole, it marks all links incident to
it, and then it uses the Procedure \texttt{CleanFirstRing} in order to
remove all tokens that have been left on the homebases of the agents. 

Unless an agent dies or locates the black hole, the
procedure \emph{First-Ring} requires exactly $6n$ big-steps (i.e.,
$6nD$ time units) for an agent that is alone in the ring, $3n$
big-steps (i.e., $3nD$ time units) for two agents that start on the
same ring, and $2n$ big-steps (i.e., $2nD$ time units) if all the
three agents start on the same ring.

\begin{algorithm}
\tcc{Algorithm for the first horizontal ring. If two (or three) agents start in the horizontal ring where the black hole is located, the black hole is found.} 
\SetKw{and}{and}
\SetKw{go}{Go}
\SetKwFunction{Wait}{Wait}
\SetKwFunction{NextRing}{NextRing}
\SetKwFunction{MarkAll}{Mark-All}
\SetKwFunction{CautiousWalk}{CautiousWalk}
\SetKwFunction{Exit}{Exit}
\SetKwFunction{Clean}{CleanFirstRing}

Put 2 tokens \;
\Repeat{$n > 0$}
{
  reset $clock$\;
  \go $East$\;
  \go $West$\;
  pick up a token\;
  \go  $East$\;
  $n := $ the number of tokens you see\;
  \lIf{$n = 2$}{    
    \MarkAll{East}, \Clean and $\Exit$\tcc*{The black hole is found}}
  \lElse{Put a token\;}  
  \Wait until $clock$ reaches $magic\_number$ and reset $clock$\;
}
$count:=1$\;
\Repeat{$count=6$}
{
  reset $clock$\;
	\go $East$\;
	\lIf{you see 1 token}
	{
		\MarkAll{East}; \Clean and $\Exit$\tcc*{The black hole is found}
	}
	\lIf{you see 2 tokens}
	{
		$count:=count+1$\;
	}
   \Wait until $clock$ reaches $magic\_number$ and reset $clock$\;     
}
Pick up $2$ tokens \;
 \caption{FirstRing}
  \label{algo:FirstRing}
\end{algorithm}

\begin{procedure}
\tcc{Procedure to use if the Black-Hole is in the first ring in order
  to remove all tokens from this ring except the one on the West of
  the Black-Hole}   
\SetKw{go}{Go}
\SetKwFunction{Wait}{Wait}
\SetKwFunction{NextRing}{NextRing}
\SetKwFunction{MarkAll}{Mark-All}
\SetKwFunction{CautiousWalk}{CautiousWalk}
\SetKwFunction{Exit}{Exit}
\SetKwFunction{FirstRing}{FirstRing}
\Repeat{until you see exactly one token or you meet an agent}
{
  \go West\;
  \lIf{you see some tokens}{pick them up}\;
}
\caption{CleanFirstRing()}
\label{algo:Clean}
\end{procedure}

\medskip
\noindent\textbf{If two agents meet.} 
If two agents meet at a node, then it is quite easy to locate the
black hole using the team of two agents (without the help of any third
agent). This algorithm is described below (see procedure
{\BHStogether}). Notice first that it is always possible to break the
symmetry between the agents who meet at a node, because the agents
would arrive from different directions. In our
algorithm, if two agents meet, they are both close from their two
tokens and they first go to pick their tokens up. Once each agent has
its two tokens, we do the following.
After meeting, one of the agent becomes the leader and the other is
the follower. Together they perform a combined cautious walk
(Procedure \CWtogether) described as follows. The follower stays at
the current node while the leader goes to the next node and returns
immediately if the node is safe. Then both agents move together to the
next node. The starting node is marked with three tokens (recall the
agents have two tokens each, thus four tokens in total). The
cautious-walk is repeated until the agents come back to the starting
node\footnote{Note that there can be at most one node in the torus
that contains three tokens.}. The leader goes to the node directly
below to check if this node is the black hole. If not, the agents now
move to the next ring below along with the tokens, and repeat the
whole process. Only the leader agent may fall into the black hole and
in that case, the follower knows this fact within two time units, and
thus it has located the black hole.

\begin{procedure}
\SetKw{go}{Go}
\SetKwFunction{CWWAA}{Cautious-Walk-With-Another-Agent}
\tcc{Procedure for BHS in Oriented Torus using 2 colocated agents,} 
Pick up any token\; 
\go back to the home base of the first agent with both agents\;
Pick up any token\;
\go back to the home base of the second agent with both agents\;
Pick up any token\;
\Repeat{until black hole is found}
{
	Put 3 tokens\;
	\Repeat{three tokens found}
	{
		\CWWAA{East}\;		
	}
	Pick up 3 tokens\;
	\CWWAA{South}\;	
}

\caption{BHS-with-colocated-agents()}
  \label{proc:BHS-with-colocated-agents}
\end{procedure}

\begin{procedure}
\SetKwFunction{MarkAll}{Mark-All}
\SetKwFunction{Wait}{Wait}
\SetKwFunction{Exit}{Exit}
\SetKw{go}{Go}
\SetKw{and}{and}
first agent moves to direction and go back\;
second agent \Wait{2}\;
\If{second agent does not see first agent}
{\MarkAll{direction} \and \Exit\;}
{both agents move to direction}
\caption{Cautious-Walk-with-another-agent(direction)}
  \label{proc:Cautious-Walk-with-another-agent}
\end{procedure}

\begin{remark}
In our algorithm, as soon as two agents meet, they
execute \BHStogether. Note that, in this case, if the third agent sees
tokens belonging to the two agents executing \BHStogether, it sees a
tower of $3$ tokens. Since in our algorithm, as long as no agents have
met, there is at most two tokens on each node. Thus, the third agent
can detect that the two other agents have met, and in this case, it
stops the algorithm. We also assume that once two agents have met, if
they meet the third agent while executing \BHStogether, then the third
agent also stops executing the algorithm. 
\end{remark}

\begin{remark}
While executing the algorithm, if an agent visits a node which one of
its incident links is marked as dangerous, then the agents stops
executing the algorithm.  
\end{remark}

\begin{procedure}
\SetKw{go}{Go}
\SetKwFunction{BHScoloc}{BHS-with-colocated-agents}
\SetKwFunction{InitNextRing}{InitNextRing}
\SetKwFunction{NextRing}{NextRing}
\SetKwFunction{MarkAll}{Mark-All}
\SetKwFunction{Exit}{Exit}
\SetKwFunction{Wait}{Wait}
\SetKwFunction{BHNextRing}{BlackHoleInNextRing}
\SetKwFunction{Anal}{Analyze}
\SetKw{and}{and}
\SetKw{ore}{or}
\tcc{At any time during the execution, if you meet an agent you call procedure
\BHScoloc }
reset $clock$\;
\lIf{not $first\_time$}{\go South}
\lElse {\Wait{1}\tcc*{to ensure all agents start \InitNextRing at the
    same time.}} 
\InitNextRing\;
$count := 0$; $sequence := \epsilon$; $danger := false$\;
\Repeat
    {$count = 6$}
    {
      \Wait{12}\;
      \tcc{You wait to enable an agent executing \InitNextRing on the
        ring above to meet you if needed.}
      \eIf{danger}
          {\Wait{2}\;
            $danger := false$\;}
          {
            \go $South$\;
            $w:=$ the number of tokens you see\;
            \If{$w > 0$}
               {
                 $sequence:=sequence\concat b_w$\;
               }
               \go $North$\;
          }
      \lIf{$count<3$}{Pick up 1 token\;}
      \go $East$\;
      $n:=$ the number of tokens you see\;
      \lIf{$count <3$}{Put 1 token\;}
      \If{$n > 0$}
          {
            $count := count + 1$\;
            $sequence:=sequence\concat t_n$\;
            \eIf{$count \leq 3$}
               {\lIf{$n = 2$ \ore $w = 2$}
                      {\MarkAll{South} and $\Exit$\tcc*{The black
                          hole is found}}
                \lElseIf{$n=1$ and $w=1$}{$danger := true$\;}      
                \Else
                    {
                      Pick up 1 token\;
                      \go $West$, \go $South$\;
                      Put 1 token\;
                      \go $East$\tcc*{If you die, there is a token
                        North and West of the Black Hole}
                      \go $West$\;
                      Pick up 1 token\;
                      \go $North$, \go $East$\;
                      Put 1 token\;          
                    } 
               }
               {\lIf{$w \geq 1$}{$danger := true$\;}}
          }    
        \Wait until $clock$ reaches $magic\_number$ and reset $clock$\;
    }
Pick up all tokens\;
\Anal{sequence}\;
       
\caption{NextRing(first\_time)}
    \label{proc:NextRing}
\end{procedure}

\medskip
\noindent\textbf{NextRing.} Once an agent has finished exploring the
horizontal ring where it started executing the algorithm, it uses
Procedure \NR to explore the other rings.  An agent executing \NR on
ring $i$ knows that ring $i$ is safe, and it wants to explore the
nodes of ring $i+1$. To do so, the agent first executes \InitNR that
we describe below. If an agent $a$ continues executing \NR after it
has executed \InitNR, we know that if $a$ is on node $(i,j)$, then
there is no token on node $(i+1,j-1)$, and thus $a$ can safely leave a
token on $(i,j)$ without creating a BHC with the node
$(i+1,j-1)$. Then, the agent does a special cautious walk, i.e., it
repeats the following moves until it meets a token on ring $i$: It
leaves a token on its current node on ring $i$, goes $South$, goes
$North$, picks up its token and goes $East$. If the agent dies, then it
could have only died when it went $South$, and in this case, it has left a
token on the node on the $North$ of the black hole. In case the agent safely reaches the node on the $South$, if it sees some tokens, it remembers how many tokens it
sees.  

When agent $a$ has reached a node $v$ on ring $i$ where there is one
or two tokens, $a$ remembers how many tokens it sees.  Either $v$ is
the homebase of some agent $b$ (that may be the same as $a$), or the
token (or the two tokens) on $v$ indicates that the black hole is on
the $South$ or on the $East$ of this node. However, if $a$ is
executing \NR on ring $i$, it implies that ring $i$ is safe and thus
the black hole cannot be on ring $i$. Thus, either $v$ is the homebase
of an agent $b$, or the black hole is on the $South$ of $v$. 

If there are two tokens on $v$, then $v$ is the homebase of some agent
$b$ and the black hole is on the $South$ of $v$; in this case, $a$
locates the black hole.  If there is only one token on $v$, $a$ cannot
safely go $South$. However, we would like to check if the black hole
is on the $South$ of $v$. To do so, $a$ goes $West$ and then $South$
with its token; let $u$ be the node reached (note that $u$ is on
the $South-West$ of $v$). If there is no token on $u$, $a$ leaves its
token, and then goes $East$. If the black hole is on the $South$ of
$v$, $a$ dies leaving a black hole configuration.  If the black hole
is not on the $South$ of $v$, $a$ picks up its token and goes back to
$v$.

If there is one or two tokens on $u$, it is a black hole
configuration, and thus $a$ cannot safely go $East$. If there are two
tokens on $u$, then necessarily the black hole is on the $South$ of
$v$, and $a$ locates it. However, if there is one token on $u$, and
one token on $v$, it does not necessarily means that the black hole is
on the $South$ of $v$. Indeed, suppose that $v$ is the homebase of
$a$, that the black hole is on the $South$ of $u$, and that an agent
$c$ has started executing \NR on ring $i+1$ at the same time $a$
started executing \NR on ring $i$. After $v$ has executed \InitNR,
there was no token on $u$, but $c$ has died leaving its token on $u$
later.  Thus, if there is one token on $v$, and one token on $v$, $a$
continues to execute $\NR$.

If $a$ has neither died, nor located the black hole at $v$, it
continues to perform its special cautious walk until it sees some
tokens on ring $i$ twice. Each time it sees some tokens on ring $i+1$,
$v$ remembers how many tokens it sees. Each time $a$ sees some tokens
on ring $i$, it remembers how many tokens it sees and checks if the
black hole is on the $South$ as explained above. 

Note that if the black hole is on ring $i+1$, it implies that $v$ is
not the homebase of $a$, and that another agent died leaving a token
on $North$ of the black hole. Consequently, if $v$ dies, it enters the
black hole from the $East$. 

Once $a$ has seen three times some tokens on ring $i$, we can show
that if the black hole is on ring $i$, either $a$ died, or $a$ located
the black hole, or there is a BHC around the black hole. Thus, agent
$a$ leaves its second token on top of the token at its current node.
Then it performs a special cautious walk, avoiding entering nodes
marked by a BHC, until it sees tokens on ring $i$ three more
times. Again, while doing this, it remembers how many tokens it saw on
nodes on rings $i$ and $i+1$. However, during this final traversal, whenever $a$ reaches a node on ring
$i$ that contains one or two tokens, it does not check if the node on
the $South$ is the black hole.

\begin{remark}
In order to remember how many tokens it sees on the nodes of rings $i$
and $i+1$, the agent builds a sequence over the alphabet
$\{b_1,b_2,t_1,t_2\}$. Initially, its sequence is empty; each time the
agent sees one (resp. two) token on ring $i+1$, it adds a $b_1$
(resp. $b_2$) to its sequence; each time the agent sees one
(resp. two) token on ring $i$, it adds a $t_1$ (resp. $t_2$) to its
sequence.

We can show that when an agent sees some tokens on ring $i+1$, these
are either tokens left by dead agents, or homebase-tokens. This
implies that the sequence is of length at most $24$: an agent
with finite memory can remember such a sequence.

In the description of the algorithm, the $\concat$ symbol stands for
the standard concatenation of string and $\epsilon$ for the empty
string. 
\end{remark}

\begin{procedure}
\SetKw{go}{Go}
\SetKwFunction{BHScoloc}{BHS-with-colocated-agents}
\SetKwFunction{InitNextRing}{InitNextRing}
\SetKwFunction{NextRing}{NextRing}
\SetKwFunction{MarkAll}{Mark-All}
\SetKwFunction{Exit}{Exit}
\SetKwFunction{Wait}{Wait}
\SetKwFunction{BHNextRing}{BlackHoleInNextRing}
\SetKwFunction{Anal}{Analyze}
\SetKw{and}{and}
\SetKw{ore}{or}
\uIf{$sequence =b_1t_1b_1t_1b_1t_1b_2t_2b_2t_2b_2t_2$ $\ore$
  $sequence$ does not contain any $b$}
    {
      \tcc{Either you have seen no tokens on the ring below or you are
        a single agent that has seen tokens of another alive single
        agent} 
      \go $South$\;
      \NextRing($false$)\;
    }
\uElseIf{$sequence$ contains less than $3$ $t_2$}
       {
         \eIf{you have only one token}
             {\tcc{In this case, the two other agents were in the same
                 ring as you, they both died, and there is only one
                 node in the ring containing two tokens: the node that
                 is $North$ of the black hole} 
              \Repeat{you see two tokens}{\go $East$}
              \MarkAll $South$ and \Exit\;
             }
             {\tcc{In this case, there was another agent in the ring
                   with you, but you are the  only one that is still alive} 
              \BHNextRing\;
              }
       }
\uElseIf{$sequence$ contains two consecutive $t$}
       {\tcc{You know that there is another active agent  in the ring,
           and that your sequence is different from the sequence of the
           other agent.}
	 \eIf{$sequence$ start with $b$}
	        {\Wait until you meet an agent}
	        {
	          \Repeat{you meet an agent}{\go $East$}
	        }
       }
\Else
    {
         \tcc{You know that the next ring is safe and that an agent dies
           exploring the ring below it.}
         \go $South$\;
         \BHNextRing\;  
    }
\caption{Analyze(sequence)}
\end{procedure}

\medskip
\noindent\textbf{How to use the sequence constructed during \NR?}
At the end of Procedure \NR, the agent $a$
executing \NR calls Procedure \Analyze. This procedure enables an
agent to distinguish which of the following cases happen (see
Lemma~\ref{lem-next-ring}).

\begin{itemize}

\item $a$ does not see any tokens on ring $i+1$ and ring $i+1$ is
safe; in this case $a$ executes \NR on ring $i+1$.

\item there is no other agent on ring $i$ and there is an agent that
  has executed \NR without dying on ring $i+1$ when $a$ was
  executing \NR on ring $i$; in this case, $a$ executes \NR on ring
  $i+1$.

\item there were two other agents executing \NR on ring $i$,
  the black hole is on ring $i+1$, and $a$ is the only agent that is
  still alive; in this case, $a$ locates the black hole.

\item there was another agent executing \NR on ring $i$, the black
  hole is on ring $i+1$, and $a$ is the only agent that is still
  alive; in this case, $a$ executes \BlackHoleInNextRing on ring $i$. 

\item there is another agent on ring $i$, ring $i+1$ is safe and the
  tokens the agents see on ring $i+1$ enable the two agents to meet.

\item there is no other agent on ring $i$, black hole is in ring
  $i+2$, and there are the tokens of one or two dead agents on ring
  $i+1$; $a$ executes \BlackHoleInNextRing on ring $i+1$ with its two
  tokens. 

\end{itemize}

\medskip
\noindent\textbf{InitNextRing.}
The aim of Procedure \InitNR is to ensure that when an agent $a$
start executing \NR on ring $i$, the homebase of $a$ does not form a
BHC with a token on ring $i+1$. 

In Figure~\ref{fig-init-nr}, we have shown the different
possible trajectories for an agent executing \InitNR. The figure can be read as follows:
\begin{itemize}
\item the double-circled node is the place where the agent starts
  executing \InitNR. 
\item the black node, if any, represents the black hole.
\item the numbers in black on top of each node represent the time
  units where the agents arrive on the node. 
\item the numbers in red in the node represent the number of tokens
  belonging to other agents that the agent sees when it visits the
  node. If the agent visits the node twice and if it does not see the
  same number of tokens, we write $x/y$ that means that it sees
  $x$ tokens the first time and it sees $y$ tokens the second time. 
\item the intervals in green below each node represent the nodes where
  the agent left its own tokens; note that any agent executing
  \InitNR always moves its two tokens together. If an interval $[x - y]$ is
  written below a node, it means that the agent left its two tokens on
  this node between time units $x$ and $y$. When an agent left its
  tokens on a node $x$ time units after it started executing \InitNR
  and that they are not moved before the end of \InitNR, we write
  $[x -]$. 
\end{itemize}

\begin{figure}
\scalebox{0.90}{
\input{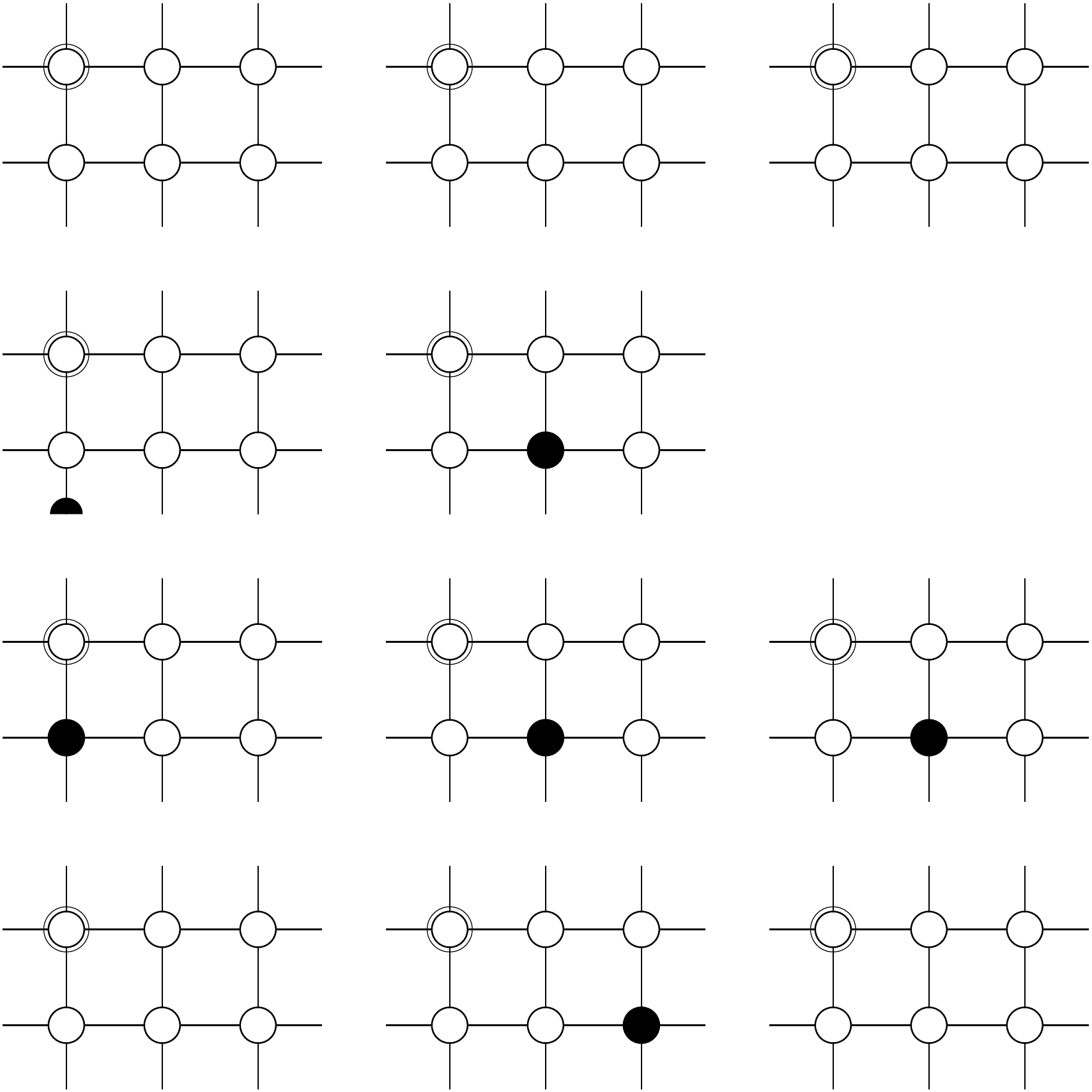_t}  
}
\caption{The different trajectories an agent can follow while
  executing \InitNR.}\label{fig-init-nr}
\end{figure}

First, $a$ leaves its two tokens on the node $(i,j)$ where it
starts \InitNR, and then it goes $South$ on node $(i+1,j)$ and
remembers how many tokens it sees. Then the agent moves its two tokens
$East$ on node $(i,j+1)$. If the agent sees two tokens on node
$(i,j+1)$, then it implies that the black hole is on node $(i+1,j+1)$
and agent $a$ locates it. 

Otherwise, if the agent has seen zero or two tokens on node $(i+1,j)$,
it goes back to the node and checks how many tokens it sees. If $a$
has seen zero tokens both times, we know that no agent has started
executing \InitNR at the same moment as $a$ did. In this case, $a$
enters node $(i+1,j+1)$ from the $North$ (where it has left its two
tokens) to meet an agent $b$ that may be waiting there (if there is
another agent in ring $i$, and if $b$ is in the middle of the
execution of \NR); if it does not meet an agent, it goes back on node
$(i,j+1)$.

If the agent has seen twice two tokens on node $(i+1,j)$, it means
that the black hole is either on node $(i+1,j+1)$, or on node
$(i+2,j)$. In this case, agent $a$ enters node $(i+1,j+1)$ from the
$North$ (where it has left its two tokens); if $a$ dies, it leaves a
black hole configuration, and otherwise, the black hole is on node
$(i+2,j)$ and $a$ locates it. 

If the agent has seen first zero (resp. two), and then two (resp. zero)
tokens on node $(i+1,j)$, then it implies that another agent $b$ has
started executing \InitNR on node $(i+1,j-1)$ (resp $(i+1,j)$) at the
same moment as $a$ did. In this case, $a$ waits for $b$ on the two
tokens $b$ left on node $(i+1,j)$ (resp. $(i+1,j+1)$).

\begin{procedure}

\SetKwFunction{BHScoloc}{BHS-with-colocated-agents}
\SetKwFunction{NextRing}{NextRing}
\SetKwFunction{INR}{InitNextRing}
\SetKw{and}{and}
\SetKw{go}{Go}
\SetKwFunction{MarkAll}{Mark-All}
\SetKwFunction{Wait}{Wait}
\SetKwFunction{Exit}{Exit}
\SetKwFunction{BHNextRing}{BlackHoleInNextRing}
\SetKwFunction{OTB}{OneTokenBelow}

\tcc{Procedure used by the agent to proceed to the black hole search in the next ring.\\ At any time during the execution, if you meet an agent you call procedure
\BHScoloc }
Put 2 tokens\;
 \go $South$\;
$n_1 :=$ the number of tokens you see\;
{
	\go $North$\;
         Pick up 2 tokens\;
         \go $East$\;
	\lIf{you see 2 tokens}
            {\MarkAll{South} and \Exit\;}
           \lElse{Put 2 tokens\;}
}
\eIf{$n_1 = 1$}
{
  \OTB\;
}    
{       \go $West$, \go $South$\;
	$n_3 :=$ the number of tokens found\tcc*{same place as for
          $n_1$}
\Switch{the value of $(n_1,n_3)$}
{
	 \Case{(0,0)}
	 {
	 \go $North$, \go $East$\tcc*{back to your new homebase}
         \go $South$, \go $North$\tcc*{if you die, you leave 2
           tokens on top of the Black-Hole}  
	\Wait{3}\tcc*{to ensure all agents finish \INR at the same time}
	 }
	 \Case{(0,2)}
	 {
	 \tcc{You know that there is someone in the ring below you that did exactly 
 			the same thing as you did and it will come back to its homebase}
		\Wait until you meet an agent\;
	}
	 \Case{(2,0)}
	 {
	 \tcc{You know that there is someone in the ring below you that did exactly 
 			the same thing as you did and it will go back to its home base at East} 
                \go $East$\;
		\Wait until you meet an agent\;
	}
	\Case{(2,2)}
	{
		\tcc{An agent have died and have left two tokens next to the black hole}
		\go $North$, \go $East$, \go $South$\;
		\tcc{If you die, you leave the good configuration indicating the black hole and the third agent will find it}
                \go $North$\;
                Pick up 2 tokens\;
                \go $West$\;
		\go $South$\;
		\lIf{no agent is waiting} {\MarkAll{South} and \Exit}\;
	}
}
}
\caption{InitNextRing()}
  \label{proc:InitNextRing}
\end{procedure}

The last case to consider is when agent $a$ has seen one token on node
$(i+1,j)$; in this case, agent $a$ executes Procedure \OneTokenBelow.
Since the agents are synchronized, we can show that this token belongs
to a dead agent $b$: the black hole is either in ring $i$, or in ring
$i+1$. In this case, agent $a$ first enters node $(i+1,j+1)$ from the
$North$ (where it has left its two tokens); if $a$ dies, it leaves a
black hole configuration, and otherwise, it remembers how many tokens
it sees on node $(i+1,j+1)$. If it does not see any token on node
$(i+1,j+1)$, then $a$ moves its two tokens on node $(i,j+2)$: it knows
that its homebase-token will not form a BHC with a node on ring
$i+1$. If $a$ sees at least one token on node $(i+1,j+1)$, it moves
its two tokens on node $(i,j+2)$ and enters node $(i+1,j+2)$ from the
$North$. If $a$ died, it leaves a black hole configuration. Otherwise,
it knows that the tokens $a$ saw on nodes $(i+1,j)$ and $(i+1,j+1)$
are either homebase-tokens, or tokens indicating that the black hole
is on ring $i+2$. However, if the black hole is on ring $i+1$, then it
implies that both tokens $a$ saw were homebase tokens, and thus two
agents have executed \FR on ring $i+1$. But in this case, we know that
the black hole has already been found and all homebase tokens of ring
$i+1$ have been removed. Consequently, if $a$ sees one token on node
$(i+1,j)$, and at least one token on node $(i+1,j+1)$, then the
black hole is in ring $i+2$. In this case, agent $a$
executes \BlackHoleInNextRing on ring $i+1$.

\begin{procedure}

\SetKwFunction{BHScoloc}{BHS-with-colocated-agents}
\SetKwFunction{NextRing}{NextRing}
\SetKwFunction{INR}{InitNextRing}
\SetKwFunction{OTB}{OneTokenBelow}
\SetKw{and}{and}
\SetKw{go}{Go}
\SetKwFunction{MarkAll}{Mark-All}
\SetKwFunction{Wait}{Wait}
\SetKwFunction{Exit}{Exit}
\SetKwFunction{BHNextRing}{BlackHoleInNextRing}

\tcc{Procedure used by the agent that sees one token during InitNextRing. When the agent starts executing this procedure, it is one node east and one node north from the node containing one token. \\ 
At any time during the execution, if you meet an agent you call procedure
\BHScoloc }

\Wait{1}\;
\go $East$\;
\eIf{you see 2 tokens}
    {\Wait{1}\tcc*{If an agent is doing the same thing, you wait for it.} 
      \go $West$; Pick up 2 tokens; \go $East$\;
          {\MarkAll{South} and \Exit\;}
    }
    {
      \go $West$, \go $South$\;
      \tcc{if you died you leave the good configuration indicating the
        black hole and the third agent will find it.}
      $n_2 := $ the number of tokens you see\;
      \go $North$\;
      Move 2 tokens to the $East$\;
      \eIf{$n_2 > 0$}
         {
           \go $South$\;
           \tcc{If you died you leave the good configuration
             indicating the black hole and the third  
	        agent will find it.}
              \go $North$\;
              \tcc{There are two nodes with tokens in the ring below you and for
                both marked positions, the link to the $East$ is
                safe. Thus, the ring below you is safe and the
                black hole is on the ring below this ring.}
              Pick up $2$ tokens\;
              \go $South$\;
              \BHNextRing\;
         }
         {\Wait{3}\tcc*{to ensure all agents finish \OTB and \INR at
             the same time.}} 
    }
\caption{OneTokenBelow()}
  \label{proc:one-token-below}
\end{procedure}

\medskip
\noindent\textbf{BlackHoleInNextRing. } 
When it executes \InitNR, or \NR, an agent may locate the ring
containing the black hole without locating the black hole itself. In
this case, it executes the procedure \BlackHoleInNextRing. When an
agent $a$ executes \BlackHoleInNextRing on ring $i$, agent $a$ has its
two tokens, it knows that ring $i$ is safe and that the black hole is
on ring $i+1$. First, agent $a$ reaches a node on ring $i$ that does
not contain any token; this ensures  that agent is not on the $North$
of the black hole. Then, agent $a$ traverses the ring $i$ until it
finds a token on node $(i,j)$. If there is a token on node $(i+1,j-1)$
($a$ can check this safely by moving $West$ and then $South$), i.e.,
agent $a$ discovers a black hole configuration, then we can show that
the black hole is on node $(i,j)$, and agent $a$ locates it. If there
is no token on node $(i+1,j-1)$, $a$ leaves its two tokens on node
$(i+1,j-1)$ and enters node $(i+1,j)$ going $East$. If $a$ dies, it
leaves a black hole configuration. Otherwise, agent $a$ repeat this
procedure until it dies or locates the black hole. Since we know that
there is a token on the node located on the $North$ of the black hole,
we know that either $a$ locates the black hole, or $a$ dies entering
the black hole from the $East$ leaving a black hole configuration.

\begin{procedure}
\SetKw{go}{Go}
\SetKwFunction{BHScoloc}{BHS-with-colocated-agents}
\SetKwFunction{InitNextRing}{InitNextRing}
\SetKwFunction{NextRing}{NextRing}
\SetKwFunction{MarkAll}{Mark-All}
\SetKwFunction{Exit}{Exit}
\SetKwFunction{Wait}{Wait}
\SetKw{and}{and}
\SetKw{ore}{or}

\tcc{An agent executes this procedure when it knows that the ring it
  is moving in is safe, and that the Black Hole is in the ring below.\\
  At any time during the execution, if you meet an agent you call procedure
\BHScoloc}

\While{you see some tokens}
      {\go $East$\;}

\Repeat
    {you find the Black Hole}
    {
      \Repeat
          {you see some tokens}
          {\go $East$\;}
      \tcc{The Black Hole may be South; one needs to check this.}
      \go $West$\;
      \go $South$\;
      \eIf{you see some tokens}
      {
             \MarkAll{East}\;\tcc{You located the Black Hole since no agent had come back to pick the token.}
      }
      {
        Put 2 tokens\;
        \go $East$
        \tcc*{If you die, there are some tokens North and West of the
          Black Hole.}
        \go $West$\;
        Pick up 2 tokens\;
        \go $North$, \go $East$\;
      }
    }
\caption{BlackHoleInNextRing()}
  \label{proc:BH-NextRing}
\end{procedure}


\subsection{Proof of Correctness for Algorithm BHS-Torus-32}\label{sec-app-proof}

\begin{remark}
During the execution of our algorithm, once two agents have met, they
execute \texttt{BHS-with-co-located-agents}, and they eventually find
the black hole. We are mainly interested in showing that even if 
agents do not meet, our algorithm is correct.

In a lot of the following lemmas, we implicitly assume that each agent
that is still alive has not met any other agent.
\end{remark}

\begin{lemma}\label{lem-cautious-walk}
Assume that when an agent starts executing \InitNR on ring $i$, each
other alive agent is either waiting with its two tokens on its
homebases, or it starts executing \InitNR, or it starts executing
\BlackHoleInNextRing. Moreover, assume that ring $i$ is safe, and that
on every node of ring $i$ that contains tokens, there is an agent on
its two tokens.

When an agent $a$ is the only agent on a node $v$, if it
leaves $v$ to visit a node $w$ that may be the black hole, the
following holds:
\begin{itemize}
\item $a$ always leaves $v$ going $East$, or $South$,
\item $a$ always leaves ones or two tokens on $v$,
\item all the tokens on $v$ when $a$ left belong to $a$,
\item if $w$ is not the black hole, the next move of $a$ is to go back to $v$.
\end{itemize}
\end{lemma}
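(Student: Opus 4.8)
The plan is to prove the four assertions by direct inspection of the pseudocode. Under the standing convention that $a$ has met no other agent, the only procedures $a$ can be running after it has started \InitNR on ring $i$ are \InitNR, \OneTokenBelow, \NR and \BlackHoleInNextRing (the dispatcher \Analyze performs no fresh exploration, and \FR together with the colocated procedures are excluded). In each of these I would isolate the set of \emph{probe moves}: the steps by which $a$ leaves its current node onto a node $w$ whose safety has not yet been established. Since the agents only ever explore new ground to the East (along a ring) or to the South (into the ring below), every probe is an East or a South step, which is exactly the first bullet. The moves that are \emph{not} probes are the returns (West or North) and the re-visits to nodes certified safe earlier; identifying these correctly is the bookkeeping heart of the proof, because claims two to four are only asserted for probes.

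For the second and fourth bullets I would track, big-step by big-step, the positions of $a$ and of its two tokens. In \InitNR, $a$ deposits its two tokens before the South step from its (old or new) homebase and its next instruction is the matching North step; the South step of case $(2,2)$ and the two South steps inside \OneTokenBelow have the same shape. In \NR I would verify the invariant that at the start of each big-step $a$ stands on a node of ring $i$ on which its walking token (and, at the homebase, also the homebase token) lies, so the opening South probe leaves one or two of $a$'s tokens on the node it quits, and the following North step restores the position. The single East probe along ring $i+1$ --- occurring in the BHC-avoiding branch of \NR and throughout \BlackHoleInNextRing --- is always preceded by $a$ placing a token on the (empty) node it stands on, with a West step as the next instruction; this gives bullets two and four uniformly.

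The third bullet is where the global structure intervenes. Here I would use Property~\ref{pr:synch} and the hypothesis that every token-bearing node of ring $i$ currently carries its owner on its two tokens. Because the statement restricts to nodes $v$ on which $a$ is the unique agent, any foreign token on $v$ would belong to a live agent $b \neq a$ which, by that hypothesis, would have to stand on $v$ --- contradicting uniqueness; hence all tokens on $v$ are $a$'s. I would extend this from the initial instant over the whole iteration by induction on big-steps: since the agents advance in lockstep, no agent can slip a token onto another agent's current node between two consecutive probes, so the ``owner-on-its-tokens'' invariant persists. For the ring-$(i+1)$ probe nodes of the else-branch of \NR and of \BlackHoleInNextRing, the token on $v$ is the one $a$ has just laid on a node that the preceding West-then-South test found empty, so again it is $a$'s.

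The main obstacle is the correct delimitation of the probe set, on which the third bullet rests. The delicate cases are the steps that superficially look like probes but are in fact re-visits: the West-then-South detour in \NR and \OneTokenBelow lands on the node directly below, already tested at the start of the same big-step; and, crucially, since \NR traverses ring $i$ several times (until six homebase-encounters), in every lap after the first the node reached by a South step has already been certified safe earlier, so it is not a probe --- which is why $a$ may there carry its token with nothing left behind without violating claim two. Once these synchronization and multi-lap facts are pinned down, each of the four bullets is read off directly from the lines of the relevant procedure.
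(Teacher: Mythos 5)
Your proposal takes essentially the same route as the paper: the paper's entire proof is an enumeration of the code lines at which a lone agent steps onto a possibly-dangerous node (Line 4 of \FR, Lines 10 and 29 of \NR, Lines 2, 17 and 25 of \InitNR, Lines 8 and 13 of \OneTokenBelow, Line 13 of \BlackHoleInNextRing) followed by the single sentence that ``one can check that in all these cases, all the properties hold.'' Your delimitation of the probe moves, the big-step bookkeeping for the second and fourth bullets, and especially your synchronization argument for the third bullet (a foreign token on ring $i$ would have to be accompanied by its living owner, contradicting the uniqueness of $a$ on $v$) make explicit what the paper leaves entirely to the reader, so on those points you are more complete than the original. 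One caveat on the non-probe South steps: the paper's precise condition is that the South move of \NR is dangerous only while $count=0$, i.e.\ before $a$ has met any token on ring $i$ --- after that, the nodes $a$ reaches going South were certified either by $a$ itself or by the owner of the token it met --- which is slightly sharper than your ``every lap after the first'' phrasing in the case where several agents share ring $i$.

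The one substantive divergence is your exclusion of \FR. The paper's case list includes Line 4 of \FR, and the lemma is subsequently applied to tokens left by agents that died during \FR (for instance in the final claims of Lemma~\ref{lem-1-south}, where an agent $b$ that died while performing \FR on ring $i+1$ is argued via Lemma~\ref{lem-cautious-walk} to have left a token on the node West of the black hole, and again in the concluding lemmas on \BlackHoleInNextRing). Your reading is defensible given that the stated hypothesis mentions only the moment an agent starts \InitNR, but as the lemma is actually used you need the four bullets for the \FR probe as well; that case is settled by the identical inspection (the agent probes East from a node carrying one or, at its homebase, two of its own tokens, and its next move is West if it survives), so the omission is easy to repair but should not be silently left out.
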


\begin{proof}
The only times that an agent that is alone goes to a node that maybe
the black hole are during the execution of \FR, \NR,
\BlackHoleInNextRing or \InitNR. When an agent executes \FR, it can
only enter the black hole from the $East$, and this can only happen
before it has seen any token. When an agent executes the main loop of
\NR on ring $i$, agent $a$ can die from the $North$ only before it
meets a token on ring $i$. Thus, the only moves where $a$ can enter
the black hole is when it executes Line 4 of Procedure \FR, Lines 10
(when $count = 0$) and 29 of Procedure \NR, Line 2, 17, 25 of
Procedure \InitNR, Lines 8 and 13 of \OneTokenBelow, and Line 13 of
Procedure \BlackHoleInNextRing.  One can check that in all these
cases, all the properties hold.
\end{proof}


\begin{lemma}\label{lem-first-ring}
If two or three agents start in the horizontal ring containing the
black hole, then the black hole is found during the exploration of the
first ring. 

Moreover, if two agents start in the horizontal ring containing the
black hole, then the agent that locates the black hole removes all
tokens left on the homebases of the agents before the third agent
visits any node of this ring.

If one or more agents starts on a ring that does not contain the black
hole, they finish executing \FR on the vertex where they start
without marking any link.
\end{lemma}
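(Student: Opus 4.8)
The plan is to exploit the fact that the agents run \FR in lockstep: they all start simultaneously and each big-step lasts the same number $D$ of time units, so after $t$ big-steps every surviving agent sits exactly $t$ nodes East of its own homebase. I would place the black hole at node $0$ of the ring, so that its West neighbour is $b=n-1$, and order the homebases lying in this ring as $h_1<h_2<\dots<h_k$ with $k\in\{1,2,3\}$. By induction on big-steps I would prove two invariants: (i) during the first loop of \FR the only nodes carrying tokens are the homebases and the current node of each living agent, and no two living agents ever share a node (their offsets $h_i+t$ are pairwise distinct); and (ii) a node carries two tokens precisely when either some agent died on it on its very first East move (which forces that node to be $b$), or some living agent has reached it in its first loop and deposited its second token on a homebase. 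Invariant (ii) is exactly what legitimises the ``one token versus two tokens'' case distinction made by the procedure.

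For the first statement I would distinguish whether $b$ is a homebase. If $h_k=b$, that agent dies on big-step $1$ and leaves two tokens on $b$; no homebase lies strictly between $h_{k-1}$ and $b$, so agent $A_{k-1}$ walks East over empty nodes and meets $b$ (two tokens) as the first non-empty node of its first loop, hence declares the black hole to the East. If $h_k<b$, then $A_k$ is the first agent to reach $b$, finds it empty, and dies one big-step later leaving a single token on $b$; the next agent $A_{k-1}$ first meets the homebase $h_k$ (one token, owner dead), promotes it to two tokens, enters the second loop along the nodes $A_k$ has already cleared, and reaches $b$ carrying a single token, which the second loop reads correctly as ``black hole to the East''. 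In both cases $A_{k-1}$ survives and runs \texttt{Mark-All}; invariant (i) guarantees that $A_{k-1}$ never runs into a transient token of a living agent, so $b$ really is the first relevant token it encounters. This proves the statement.

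For the second statement, with exactly two agents in the black-hole ring the third agent lies in a safe ring; by the third statement it spends $6n$ big-steps inside \FR and can enter the black-hole ring only afterwards, through the first South move of \NR. The finder locates the black hole within the $\sim 3n$ big-steps that two co-ring agents need, and \texttt{CleanFirstRing} then sweeps West, removing tokens until one remains, at an extra cost of at most $n$ big-steps; since $3n+n<6n$, the ring is cleaned before the third agent's first South move can reach it. For the third statement no agent dies, and each agent's first loop ends at the next homebase to its East, which still carries a single token when the agent arrives, since that homebase is upgraded to two tokens only by this very agent and only upon arrival; hence no agent ever reads two tokens in the first loop and none calls \texttt{Mark-All}. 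Since every homebase then holds two tokens, the second loop can only increment the counter, and the threshold $6=\operatorname{lcm}(1,2,3)$ ensures that after exactly six homebase sightings each agent halts on its own homebase whatever $k\in\{1,2,3\}$ is.

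The main obstacle I expect is not a single estimate but keeping invariants (i) and (ii) valid simultaneously for agents that may be in different loops during the same big-step. In particular I would have to handle the tie-breaking timings inside a single big-step (for instance when $h_{k-1}=b-1$, so that $A_{k-1}$ reaches $b$ on the very big-step at which the first death occurs), and verify that an agent entering the second loop only ever traverses nodes already certified safe by the agent whose homebase it promoted, so that the ``continue East without moving tokens'' phase is genuinely free of danger.
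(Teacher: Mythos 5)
Your proposal is correct and follows essentially the same route as the paper's proof: order the co-ring agents by East-distance to the black hole, show the closest one dies leaving either two tokens on its homebase or one token there plus one on the West neighbour of the black hole, let the second-closest agent read this configuration off in its first or second loop, and settle the cleaning and safe-ring claims by the same lockstep, speed-one counting (the paper enumerates the six homebase sightings case by case where you invoke $\mathrm{lcm}(1,2,3)$, and it bounds locate-plus-clean by $2n$ big-steps rather than your looser $4n$, though both suffice against the $6n$ threshold). The only substantive point the paper treats that you omit is the behaviour of the third co-ring agent when all three start in the black-hole ring — the paper verifies it either meets the surviving agent during or after the cleaning phase, or reaches the node whose East link is already marked and stops — but that is needed for the algorithm's global correctness rather than for the literal statement of this lemma, and your within-big-step timing caveat (reads happen only after cautious-walk tokens are retrieved) is exactly the implicit assumption the paper also relies on.
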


\begin{proof}
Suppose first that two or three agents starts in the horizontal ring
containing the black hole.  Consider two agents $a$ and $b$ such that
when moving $East$ on the horizontal ring where the agents starts, $a$
is the closer to the black hole, and $b$ is the second closer.

Note that $b$ cannot die while executing the loop between lines $2$
and $11$ of Procedure \FR, since it sees some tokens when it
arrives on the homebase of $a$.

If $a$ dies on its first move to the $East$, then $a$ left two tokens
on its homebase. Otherwise, $a$ comes back and picks up a token before
continuing its cautious walk, and thus $b$ sees only one token when it
arrives on the homebase of $a$. 
Consequently, the first time $b$ sees some token, the black hole is
located to the $East$ of its current position if and only if $b$ sees
exactly two tokens. 

Suppose the black hole is not located immediately on the $East$ of the
homebase of $a$. Agent $a$ dies while executing the loop between lines
$2$ and $11$ of Procedure \FR and leaves a token on the node on
the $West$ of the black hole. Thus, $b$ first visits the homebase of
$a$ where it sees one token, and then visit a node with one token: it
will mark the black hole links.

If there are three agents on the ring, let $c$ be the third
agent. Then it is easy to see that one of the following happens:
\begin{itemize}
\item either the third agent meets $b$ while $b$ is performing its
  cleaning phase, or once $b$ has finished it,
\item or $c$ reaches the homebase of $b$ before $b$ has terminated its
  cleaning phase.
\end{itemize}

In the second case, $c$ first visit a node with one token (the
homebase of $b$), puts a token on top of it and then continue going
$East$, and thus $b$ picks up all the tokens that have been left on
the homebases of $a$, $b$ and $c$. If $c$ arrives on the homebase of
$a$ before $b$ has picked up the tokens, then $c$ sees two tokens on
the node and continue going $East$ to reach the node where $b$ marked
the $East$ link as leading to the black hole. If $c$ arrives on the
homebase of $a$ once $b$ has picked up the tokens, then after it has
visited the homebase of $b$, it continues going $East$ and reach the
node where $b$ marked the $East$ link as leading to the black hole.

Suppose now that there is one or more agents starting in a ring that
does not contain the black hole. 

If there is only one agent in the ring, each time the agent sees some
token, it is back on its homebase: the first time, there is only one
token on its homebase and it adds a token. All the other times, it
sees two tokens. Thus, the agent performs $6$ turns of the ring during
the execution of \FR.

If there are two agents $a$ and $b$ in the ring, the first time $a$
sees some tokens, it is on the homebase of $b$ and sees only one
token. The next time it sees some token, it is back on its homebase,
but $b$ has arrived there before and left its second token on the
node. The next time $a$ sees some tokens, it is successively on the
homebases of $b$, $a$, $b$, and $a$. Consequently, $a$ is back on its
homebase and has performed $3$ turns of the ring during the execution
of \FR.

When there are three agents $a$, $b$, $c$ starting in the same ring,
let assume that when we traverse the ring going $East$ starting from
the homebase of $a$, we reach the homebase of $b$ before the homebase
of $c$.  For the same reasons as before, no agent has marked any link
as leading to the black hole. And when agent $a$ successively sees
some tokens, it is successively on the homebases of $b$, $c$, $a$,
$b$, $c$ and $a$. Consequently $a$ is back on its homebases when
$count = 6$ and it has performed $2$ turns of the ring during the
execution of \FR.

When two agents start on the ring containing the black hole, it takes
less than $2n$ big-steps to the surviving agent to locate the black hole
and to remove tokens left on homebases. An agent that is alone in its
ring needs $6n$ big-steps to finish executing \FR if the black hole is not
in its ring. Thus, if two agents start on the ring $i$ containing the
black hole, the only token the third agent can see on ring $i$ is the
token located on the West of the black hole. Note that when the third
agent reaches this node, the link going $East$ has been marked as
leading to the black hole and thus the agent stops executing the
algorithm.
\end{proof}

\begin{lemma}
Consider two alive agents $a$ and $b$ and assume the black hole has
not been found yet. When $a$ starts the execution of \InitNR on some
ring $i$, either $b$ is also starting the execution \InitNR on some
ring $j$, or $b$ is starting the execution of \BlackHoleInNextRing, or
$b$ is in the middle of the execution of \NR, and is waiting on its
homebase with its two tokens.
\end{lemma}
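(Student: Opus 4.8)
The plan is to read the statement as the master synchronization invariant of the algorithm: it is exactly the hypothesis that Lemma~\ref{lem-cautious-walk} takes for granted, and it is the precise form of the informal Property~\ref{pr:synch}. I would prove it by induction on time measured in \emph{big-steps}. Two facts already in hand make the induction go through. First, every iteration of the main loop of \FR and of \NR, as well as the whole of \InitNR, ends with ``wait until $clock$ reaches $magic\_number$'', so each big-step lasts exactly $D$ time units and a full \InitNR fits inside the leading $\mathtt{Wait}(12)$ window reserved at the head of every \NR iteration. Second, by Lemma~\ref{lem-first-ring} and the big-step counts stated with it, an agent alone on a ring spends $6n$ big-steps there, a pair $3n$, and a triple $2n$. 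The arithmetic that drives everything is $6n=2\cdot 3n=3\cdot 2n$: a lone agent is exactly twice as slow as a pair and three times as slow as a triple, so round boundaries recur on a common grid.

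For the base case I would analyse the exit from \FR according to the three initial placements. If all three agents start on distinct rings, each leaves \FR at $6nD$ and, after the single $\mathtt{Wait}(1)$ at the top of the first call to \NR, they all begin \InitNR together; this is case (i). If all three start on one safe ring they all leave at $2nD$ and again begin \InitNR simultaneously. The only asymmetric placement is two agents on one ring and one on another: the pair leaves \FR at $3nD$, descends to a common ring, and performs a full $3n$-big-step \NR round during $[3nD,6nD]$, so that the pair's second \InitNR and the lone agent's first \InitNR both begin at time $6nD+1$, re-establishing case (i) for all three.

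The inductive step is bookkeeping except for the two-fast-one-slow configuration, which I expect to be the real obstacle. Here a round boundary of the pair falls at the \emph{midpoint} of a round of the lone agent, so that the pair starts an \InitNR while the lone agent is still inside its own \NR round; the invariant can then survive only through case (iii), and the heart of the proof is to verify that the lone agent is indeed parked on its homebase holding both tokens at exactly that instant. The mechanism I would make precise is this: the lone agent's counter reaches the value $3$ after precisely $3n$ big-steps, at its homebase, and the manoeuvre \NR performs at that sighting (when it sees a single token on the safe ring and nothing below it) ends by returning both tokens to the homebase; the very next iteration then opens with a $\mathtt{Wait}(12)$ during which the agent sits on its homebase with both its tokens, and this $\mathtt{Wait}(12)$ is precisely the window the pair uses to execute its \InitNR. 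Establishing that this coincidence is forced, i.e.\ that the phase offset between a pair and a lone agent is always either $0$ (case (i)) or exactly half a lone-agent round (case (iii)), is what the choice of the constant $6$ together with the $\mathtt{Wait}(1)/\mathtt{Wait}(3)/\mathtt{Wait}(12)$ calibrations are designed to guarantee.

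Two further points complete the argument. I would verify that an alive agent leaves a \NR round only in one of three ways: by starting the next \NR round (hence another \InitNR, case (i)), by branching into \BlackHoleInNextRing when \Analyze or \InitNR has detected that the black hole lies one ring further (case (ii)), or by meeting another agent, the last of which is excluded by the standing assumptions that the black hole has not been found and that no two agents have met. Finally I would treat the transient during which a fast pair has entered \NR while the slow agent is still finishing \FR: if the pair's exploration ever reaches the ring carrying the slow agent's homebase it reads the two homebase tokens and the meeting machinery of \InitNR/\NR takes over (so the no-meeting assumption no longer applies and there is nothing to prove), while otherwise the two groups remain on non-interacting rings and the substantive instances of the statement are confined to the post-\FR regime analysed above. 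Making this dichotomy fully rigorous, and thereby pinning down the exact phase offset in every reachable configuration, is the remaining delicate point, but it reduces to the same time-accounting over one super-round of $6n$ big-steps.
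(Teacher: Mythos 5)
Your proposal follows essentially the same route as the paper: an induction driven by the big-step arithmetic $6n = 2\cdot 3n = 3\cdot 2n$, a case split on the three possible initial placements (all on one ring, all on distinct rings, a pair plus a loner), and, in the mixed case, the observation that the phase offset is either zero or exactly half a lone-agent round, with the lone agent parked on its homebase holding both tokens during the pair's \InitNR window. The paper phrases the parity step via the counter $q$ (the number of \FR/\NR rounds the pair has completed): for $q$ even the loner starts \InitNR simultaneously, for $q$ odd it has done $3n$ big-steps of its current round and is waiting on its two tokens --- this is precisely your ``half-round'' mechanism, including the remark that the counter reaches $3$ at the homebase after $3n$ big-steps and both tokens are then deposited there.

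The one place where your argument is genuinely thinner than the paper's is the point you yourself flag as ``the remaining delicate point'': ruling out that the fast pair ever descends onto the slow agent's ring, which would put three agents on one ring with incompatible phases and destroy the $3n$-versus-$6n$ accounting. You resolve this by saying the pair ``reads the two homebase tokens'' of the slow agent and the meeting machinery takes over, but that is not the mechanism the algorithm actually uses. The paper shows that when the pair executes \NR on ring $i$ with the loner on ring $i+1$, the loner's tokens occupy exactly one node of ring $i+1$, so the two members of the pair build \emph{different} sequences (one beginning $b_1t_1t_1$, the other $t_1b_1t_1$); Procedure \Analyze then detects the two consecutive $t$'s and makes the two pair members meet \emph{each other} (one waits, the other walks East), so neither ever starts \NR on ring $i+1$. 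Without this sequence-based argument the dichotomy ``either they meet or they stay on non-interacting rings'' is asserted rather than proved, so you should import that step from Procedure \Analyze to close the induction. (Both you and the paper are slightly loose about the transient in which the pair's first \InitNR occurs while the loner is still mid-\FR; there the loner is likewise sitting on its homebase with both tokens, which is what Property~\ref{pr:synch} actually requires.)
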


\begin{proof}
We prove the lemma by induction on the numbers of rings explored by
agent $a$, and we distinguish different cases. 
 
First suppose that all agents have started in the same ring. If an
agent dies during \FR, the black hole is found by the other agents. It
takes exactly $2n$ big-steps (i.e., $2nD$ time units) to each agent to
execute \FR or \NR on each ring. Consequently, while no agent is dead,
all agents always start the execution of \NR simultaneously. Note that
if an agent dies while executing \NR, it enters the black hole from
the $North$, and either another agent find the black hole when it sees
the tokens of the dead agent, or it dies entering the black hole from
the $East$.

Now assume that all agents have started in different rings. As long as
no agent is dead, it takes exactly $6n$ big-steps (i.e., $6nD$ time units)
to each agent to execute \FR or \NR on each ring. Thus, if agent $a$
and $b$ are neither dead, nor executing \BlackHoleInNextRing, they
starts executing \NR simultaneously.

The last case to consider is when two agents $a$ and $b$ start in the
same ring, while the third agent $c$ is in another ring. We also show
that the three agents will never execute \NR in the same ring.

While agents $a$ and $b$ are not in the same ring as $c$, it takes
exactly $3n$ big-steps (i.e., $3nD$ time units) to agents $a$ and $b$ to
execute \FR or \NR on each ring, while it takes $6n$ big-steps to agent
$c$. Consequently, as long as all agents are not in the same ring,
each time agent $c$ starts \NR, agents $a$ and $b$ start \NR at the
same time, unless they are dead, or executing
\BlackHoleInNextRing. 

Suppose now that the three agents are not in the same ring and that
$a$ starts the execution of \NR. If $b$ is not dead, or executing
\BlackHoleInNextRing, $b$ starts executing \NR at the same
time. Consider now agent $c$, and assume that it is still alive and
that it is not executing \BlackHoleInNextRing.  Let $q$ be the number
of times $a$ has executed \FR or \NR so far, i.e., $a$ has performed
$3nq$ big-steps since it has started executing the algorithm. If $q$ is
even, then agent $c$ has executed \FR or \NR $q/2$ times, and starts
executing \NR at the same time as $a$. Otherwise, $c$ has executed \FR
or \NR $(q-1)/2$ times and has performed $3n$ big-steps of \NR. In this
case, since $c$ is alone in its ring, it means $c$ is back in its
homebase with its two tokens, and waiting for $D$ time units. 

We now show that the three agents cannot be in the same ring when they
start executing \NR. Since agent $a$ and $b$ are twice as fast as $c$,
and since the agents start in two different rings, there will be a
big-step where agent $c$ is executing \NR on ring $i+1$ while agents $a$
and $b$ are executing \NR on ring $i$. Assume that agent $c$ does not
meet any other agent during \InitNR.  If $c$ does not die while
executing \NR on ring $i+1$, then assume without loss of generality
that agent $a$ sees the token $c$ left on its base before agent
$b$. If agent $c$ starts \NR on ring $i+1$ at the same time as agents
$a$ and $b$ start \NR on ring $i$, agent $c$ carries one of its token
to perform a special cautious walk while it has left its other token
on its homebase. If agent $c$ is in the middle of executing \NR on
ring $i+1$ when agents $a$ and $b$ start \NR on ring $i$, both tokens
of agent $c$ are on its homebase.  In both cases, since the tokens of
$a$ and $b$ always stay on ring $i$, agents $a$ and $b$ can see tokens
on exactly one node of ring $i+1$. 

Then the sequence $a$ builds while executing \NR starts with
$b_1t_1t_1$ (or $b_2t_1t_1$), while the sequence of $b$ starts with
$t_1b_1t_1$ (or $t_1b_2t_1$). Due to the design of Procedure \Analyze,
this implies that agents $a$ and $b$ will not execute \NR on ring
$i+1$.
\end{proof}

\begin{lemma}\label{lem-nextring-emptyring}
When an agent starts \NR on ring $i$, it knows that ring $i$ is
safe, and that on every node of ring $i$ that contains tokens, there
is an agent on its two tokens.
\end{lemma}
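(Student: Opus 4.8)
The plan is to prove both assertions simultaneously as a single invariant, by induction on the number of rings the agent has already explored (equivalently, on the number of times it has entered \NR). The invariant is exactly the statement of the lemma. I will rely on two earlier results: Lemma~\ref{lem-first-ring} for the base case, and the synchronization lemma proved immediately above, which guarantees that whenever an agent begins \InitNR, every other living agent is either also beginning \InitNR, or beginning \BlackHoleInNextRing, or sitting on its homebase holding both its tokens. For the base case the agent runs \NR(true) on its starting ring; since it is alive and has not marked the black hole, Lemma~\ref{lem-first-ring} shows the black hole is not on that ring (so the ring is safe) and that every surviving agent finishes \FR back on its homebase and then executes the closing ``pick up $2$ tokens'' step, so the ring carries no tokens at all and the second assertion holds vacuously.

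For the inductive step I would treat the two assertions in turn. Safety: the agent enters ring $i$ only through the ``Go South'' opening \NR(false), which \Analyze invokes after \NR on ring $i-1$; during that pass the agent probes every node of ring $i$, going South from each token-free node of ring $i-1$ and, via the West--South--East detour, checking the node directly below each token-bearing node. Since it survives all these probes and \Analyze routes it downward rather than into \BlackHoleInNextRing or a marking step, ring $i$ cannot contain the black hole. Token structure: since ring $i$ is safe, as just shown, no agent has died on it leaving tokens; and by the synchronization lemma, at the instant our agent begins \InitNR on ring $i$ no companion on ring $i$ has yet entered the token-moving main loop of \NR there (same-ring agents start \NR in lockstep), so no stray single token from a South-death into ring $i+1$ can be present either. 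The only tokens on ring $i$ are therefore the consolidated pairs of the agents physically on ring $i$, each of which --- by the synchronization lemma --- is either co-starting \InitNR with its two tokens or waiting on its homebase with both tokens, and hence is standing on its pair, as required.

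The safety half of the inductive step is where the real work lies. Establishing that the \NR-pass over ring $i-1$ genuinely probes every node of ring $i$ forces one to account for the West--South--East maneuver below token nodes, for the \emph{danger}-flag iterations that deliberately omit a South probe, and for probes contributed by a synchronized companion; and the whole argument must not assume the correctness of \Analyze's branch selection, which is the content of Lemma~\ref{lem-next-ring}. I expect to discharge it by the following observation: the one situation in which the agent skips probing a node $(i,j)$ is exactly when it has already detected a black-hole configuration at $(i,j)$ (a token on its North neighbor and on its West neighbor); if such an $(i,j)$ were safe, the configuration would be formed by the homebase tokens of two distinct agents, which the design of \InitNR forces to meet --- contradicting the standing no-meeting assumption. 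Hence every node of ring $i$ that the agent does not directly certify safe must be the black hole, which would prevent \Analyze from sending the agent downward; so whenever the agent does descend, ring $i$ is safe.
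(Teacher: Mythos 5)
Your overall skeleton (induction on the number of rings, base case from Lemma~\ref{lem-first-ring}, inductive step from the fact that \NR on ring $i-1$ probes ring $i$) matches the paper's, but the step that actually carries the proof is missing. The situation you must exclude is precisely the one your synchronization appeal does not cover: a companion that reached ring $i$ \emph{earlier} than you (two agents sharing a starting ring advance twice as fast as a lone agent), entered the main loop of \NR on ring $i$, and died going South, leaving its homebase token and its cautious-walk token on ring $i$ with nobody standing on them. Ring $i$ being safe does not prevent this --- the agent dies on ring $i+1$ but its tokens stay on ring $i$ --- and ``same-ring agents start \NR in lockstep'' says nothing about an agent one ring ahead of you. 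The paper excludes this case by arguing about the \emph{sequence} recorded during \NR on ring $i-1$: any such stray token would have been seen and would put a $b_1$ or $b_2$ into the sequence, so \Analyze sends the agent down to ring $i$ only if the sequence is exactly $(b_1t_1)^3(b_2t_2)^3$; and that sequence forces at least six further tokens (three towers of two) on ring $i$, which is impossible with three agents carrying two tokens each. Your substitute --- ``which would prevent \Analyze from sending the agent downward'' --- asserts exactly this conclusion without the counting argument that justifies it, and you cannot lean on Lemma~\ref{lem-next-ring} for it since, as you note yourself, that lemma is downstream.

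Two smaller gaps. In the base case you conclude that ring $i$ ``carries no tokens at all'', considering only agents that started on ring $i$; the paper must (and does) also handle two faster agents that started on ring $i-1$ and, at the very moment you finish \FR, have completed \FR plus one full pass of \NR on ring $i-1$ and are deciding whether to descend onto ring $i$ --- it shows their two recorded sequences necessarily differ, so \Analyze makes them meet rather than settle on ring $i$. And in your safety argument, the claim that a black-hole configuration at a safe node of ring $i$ must consist of two homebase tokens is unjustified: one of the two tokens can be the cautious-walk token of an agent that died probing ring $i+1$. Both points are resolved in the paper by the same sequence analysis, which is the ingredient your proposal needs to add.
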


\begin{proof}
Consider an agent $a$ that starts executing \NR on ring $i$ at time
$t$. First assume that this is the first time agent $a$ executes
\NR. From Lemma~\ref{lem-first-ring}, we know that the black hole is
not in ring $i$ (otherwise, $a$ is either dead or has located the
black hole). In this case, we know that all agents in ring $i$ are
back on their homebases with their two tokens. Moreover, since the
agents are synchronized, the only case to consider is when there are
two agents $b$ and $c$ that started in ring $i-1$. While $a$ executed
\FR, agents $b$ and $c$ have executed \FR and a first iteration of
\NR. However, during the execution of \NR, the only tokens $b$ and $c$
see on ring $i$ are the two tokens of $a$ on its homebase and thus the
sequence computed by $b$ and $c$ are $b_2t_1t_1b_2t_1t_2b_2t_2t_2$ and
$t_1b_2t_1t_1b_2t_2t_2b_2t_2$. In this case $b$ and $c$ have different
sequences and they meet without leaving a unique token on ring $i$.

Assume now that agent $a$ has already executed \NR on ring
$i-1$. Suppose that there exists a unique token on a node. Since all
agents are synchronized, all alive agents are with their two tokens at
time $t$ (either before starting \NR, or in the middle of the
execution of \NR). Thus, agent $a$ has executed \NR on ring $i-1$, and
$a$ has seen this token on ring $i$ while executing \NR on ring $i-1$.
Thus, its sequence is different from $t_1^3t_2^3$. Moreover, if the
sequence of $a$ at the end of the execution of \NR on ring $i-1$ is
$(b_1t_1)^3(b_2t_2)^3$, it means that there are at least $6$ other
tokens on the ring $i$ (three towers of $2$) after it sees the token
at position $(i,j+1)$. Since we have only three agents, this is
impossible.
\end{proof}


\begin{lemma}\label{lem-2-east}
When executing \InitNR on ring $i$, starting on node $(i,j)$, if an
agent sees some tokens on node $(i,j+1)$ without meeting an agent,
then the black hole is located on node $(i+1,j+1)$, and the agent sees
$2$ tokens on this node.
\end{lemma}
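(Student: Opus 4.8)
The plan is to show that, under the stated hypotheses, the only token configuration that can appear on node $(i,j+1)$ at the instant agent $a$ reads it, while $a$ survives and meets no one, is the two-token pile left by a second agent that has just died going South into the black hole. First I would pin down the timeline of \InitNR: after leaving its two tokens on $(i,j)$, agent $a$ dips South to $(i+1,j)$ to read $n_1$, returns North, picks up both of its tokens, and steps East onto $(i,j+1)$, all within a constant number of time units of starting the procedure. Since the statement presupposes that $a$ reaches $(i,j+1)$ alive and reads its tokens there, $a$ survived the initial South dip, so $(i+1,j)$ is not the black hole; and by Lemma~\ref{lem-nextring-emptyring}, ring $i$ is safe, so $(i,j+1)$ is not the black hole either.

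The core of the argument is to control which tokens can sit on $(i,j+1)$ when $a$ arrives. By Lemma~\ref{lem-nextring-emptyring}, at the moment $a$ starts \NR (equivalently \InitNR) on ring $i$, every token-bearing node of ring $i$ carries a two-token pile guarded by an agent, so in particular there is no stray single token on ring $i$ and $a$'s own tokens sat only on $(i,j)$. I would then invoke the synchronization guaranteed by Property~\ref{pr:synch} (and the lemma preceding Lemma~\ref{lem-nextring-emptyring}): at that instant every other surviving agent is either starting \InitNR on its own (distinct) homebase, waiting on its homebase with both tokens, or starting \BlackHoleInNextRing. A short bookkeeping step then shows that, during the constant window before $a$ reaches $(i,j+1)$, none of these agents can write a token onto $(i,j+1)$: an agent starting \InitNR on ring $i-1$ only moves its pair of tokens East along ring $i-1$ and merely reads, empty-handed, the single node of ring $i$ directly South of its homebase; agents on rings other than $i$ never touch ring $i$; and a waiting agent does not move at all.

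It then remains to analyze the agent $b$, if one exists, whose homebase is $(i,j+1)$. If $b$ is waiting there with its two tokens, then $a$ meets $b$ upon arrival, contradicting the hypothesis; hence $b$ must be starting \InitNR simultaneously with $a$, so $b$ places two tokens on $(i,j+1)$ and steps South to $(i+1,j+1)$. If $(i+1,j+1)$ were safe, $b$ would return, retract both tokens, and move East, leaving $(i,j+1)$ empty by the time $a$ arrives, contradicting the assumption that $a$ sees tokens. Therefore $(i+1,j+1)$ is the black hole, $b$ dies on this move, and its two tokens remain; since no other agent reaches $(i,j+1)$ within the window, $a$ reads exactly two tokens and meets no one. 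If instead no agent has homebase $(i,j+1)$, then the node is empty at the start and nothing is added, so $a$ would see nothing. This yields the claim and simultaneously rules out seeing a single token, since the number of tokens $b$ can leave on its homebase during the opening of \InitNR is either zero (if it survives and retracts) or exactly two (if it dies at once), never one.

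I expect the main obstacle to be the second paragraph, namely rigorously certifying that no nearby agent writes a token onto $(i,j+1)$ during this short window. This requires tracing the opening moves of \InitNR, together with its branch into \OneTokenBelow, for agents operating on ring $i-1$, and confirming that they only read from ring $i$ (and reach ring $i$ for writing only strictly after $a$'s read at $(i,j+1)$). The residual case analysis on the agent $b$ is then routine.
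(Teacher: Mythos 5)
Your proof is correct and follows essentially the same route as the paper's: both rest on the synchronization of the agents together with Lemma~\ref{lem-nextring-emptyring} to rule out stray or single tokens on ring $i$, and on the no-meeting hypothesis to eliminate a waiting owner, leaving only the scenario of a neighbour that started \InitNR simultaneously at $(i,j+1)$ and died going South into $(i+1,j+1)$. The only presentational difference is that where the paper invokes Lemma~\ref{lem-cautious-walk} to conclude that a dead agent's two tokens place the black hole East or South of $(i,j+1)$ (and then discards East because ring $i$ is safe), you reach the same conclusion by directly tracing the unique way tokens can appear on $(i,j+1)$ during the short window before the agent reads that node.
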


\begin{proof}
Consider an agent $a$ executing  \InitNR in ring $i$, starting
at position $j$ and that sees some tokens at position $j+1$ on ring
$i$ (line 7 of Procedure \InitNR). 

We first prove that agent $a$ sees two tokens on position $(i,j+1)$.
Suppose that there is only one token on position $(i,j+1)$. Since the
agent are synchronized, this token belongs to a dead agent $b$.  Since
an agent is moving its two tokens together during \InitNR, agent $b$
has died before $a$ started \InitNR. But, from
Lemma~\ref{lem-nextring-emptyring}, we know that this is impossible. 


Thus, agent $a$ sees two tokens at position $(i,j+1)$.  If these
tokens belong to a dead agent, then the black hole is either East or
South of this node.  Since agent $a$ is performing \InitNR on
ring $i$, it knows ring $i$ is safe, and thus, if there is a black
hole, it has to be the South node. 

Suppose these tokens belong to an agent $b$ that is still alive. Since
the agents are synchronized, either the agent is waiting on its
homebase, or the agent has started the execution of \InitNR at
the same moment as $a$ did. In the first case, the two agents meet. In
the second case, it would mean that agent $b$ and $a$ started the
execution of \InitNR on the same node; which is impossible.
\end{proof}

Consider an agent that starts executing \InitNR on node $(i,j)$. While
executing \InitNR, if the agent sees two (resp. zero) tokens the first
time it goes to node $(i+1,j)$ and sees zero (resp. two) tokens the
second time it visits node $(i+1,j)$, we say that the agent sees two
tokens appearing (resp. disappearing). 

\begin{lemma}\label{lem-moving-tokens}
During \InitNR, if an agent sees tokens appearing or
disappearing, then two agents meet, or the black hole is located.
\end{lemma}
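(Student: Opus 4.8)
The plan is to read the two possibilities directly off the code of \InitNR. An agent $a$ that starts \InitNR on node $(i,j)$ visits node $(i+1,j)$ exactly twice: once immediately after it starts, when it reads the value $n_1$, and once after it has carried its two tokens East to $(i,j+1)$ and come back via $(i,j)$, when it reads the value $n_3$; from the trajectories of Figure~\ref{fig-init-nr} these visits occur at time units $1$ and $5$. Seeing tokens \emph{appear} (resp. \emph{disappear}) on $(i+1,j)$ is then precisely the case $(n_1,n_3)=(0,2)$ (resp. $(2,0)$) of the \texttt{Switch} statement. Since $a$ never puts or removes a token on ring $i+1$ (throughout \InitNR its own two tokens stay on ring $i$), the change in the number of tokens on $(i+1,j)$ between its two visits must be produced by another agent $b$. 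First I would rule out that $b$ is dead: a dead agent never moves its tokens, and no agent can die on $(i+1,j)$ in the window between the two visits, because $a$ itself occupied $(i+1,j)$ safely at time unit $1$, so this node is not the black hole. Hence $b$ is alive and alters the contents of $(i+1,j)$ during this short window.

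Next I would pin down $b$ using synchronization. By Property~\ref{pr:synch} and the preceding (unnumbered) synchronization lemma, every other alive agent is, while $a$ runs \InitNR, either (i) running \InitNR started at the very same time, (ii) waiting on its homebase with both tokens, or (iii) without any placed token. Types (ii) and (iii) move no token, so $b$ is of type (i). Moreover only an agent whose tokens sit on ring $i+1$ can change $(i+1,j)$: an agent running \InitNR on ring $i$ keeps its tokens on ring $i$, and an agent running \BlackHoleInNextRing on ring $i+1$ leaves its tokens on ring $i+2$. Thus $b$ runs \InitNR with its homebase on ring $i+1$, and since the only token move in \InitNR is the single East step executed at time unit $3$ — strictly between $a$'s two visits — the appearing case forces $b$'s homebase to be $(i+1,j-1)$ (it carries its two tokens from $(i+1,j-1)$ onto $(i+1,j)$), while the disappearing case forces it to be $(i+1,j)$ (it carries them off $(i+1,j)$ onto $(i+1,j+1)$).

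Finally I would establish the meeting. In the appearing case $a$ executes the $(0,2)$ branch and waits on $(i+1,j)$, which is exactly the new homebase of $b$; in the disappearing case $a$ executes the $(2,0)$ branch, steps East, and waits on $(i+1,j+1)$, again the new homebase of $b$. It then remains to check that $b$ returns to its new homebase: if $b$ itself reads $(0,0)$ below its homebase, it comes back there after its own South check and meets the waiting $a$. The only other possibility is that $b$ in turn sees tokens appearing or disappearing on ring $i+2$, which by the same analysis is caused by the third agent $c$ running \InitNR synchronously one ring lower; then $b$ is itself a waiter on $c$'s new homebase, and $c$ (reading $(0,0)$) returns there, so $b$ and $c$ meet. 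In every case two agents meet, after which they locate the black hole via \BHStogether, which is exactly the conclusion of the statement.

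The step I expect to be the main obstacle is the identification of $b$ in the second paragraph, and specifically excluding that the tokens on $(i+1,j)$ are moved by an agent executing \BlackHoleInNextRing on ring $i$ (which does place tokens on ring $i+1$) rather than \InitNR on ring $i+1$. Dismissing this requires the synchronization machinery together with Lemma~\ref{lem-nextring-emptyring} to argue that two agents cannot be simultaneously operating on ring $i$ in such a configuration without having already been forced to meet; some care is also needed to confirm that the only token manipulation that can explain the observed change is the East step at time unit $3$, and not some other, differently timed move.
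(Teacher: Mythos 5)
Your overall strategy is the same as the paper's: synchronization forces the token-mover to be an alive agent $b$ executing \InitNR on ring $i+1$ with homebase $(i+1,j-1)$ (appearing) or $(i+1,j)$ (disappearing), $a$ ends up waiting on the node to which $b$ carried its two tokens, and one argues that $b$ comes back to that node so the two meet. However, your case analysis of what $b$ does after time unit $3$ has genuine holes. You assert that the only alternatives are that $b$ reads $(0,0)$ below its homebase or that $b$ in turn sees tokens appearing/disappearing; this is false. Agent $b$ may read $(2,2)$, or read $n_1=1$ and branch into \OneTokenBelow, and in each of these branches its return to the node where $a$ waits happens at a different time (the paper checks explicitly that $b$ is back on its two tokens at time $6$ or $7$, or meets the third agent at time $6$, in every such branch). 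More seriously, you never consider the possibility that the node to which $b$ moves its tokens East already carries the two tokens of a \emph{dead} agent: in that case $b$ does not put its tokens down at all, it executes \texttt{Mark-All} and exits, and no meeting with $a$ ever occurs. The lemma still holds there, but only via its other disjunct (``the black hole is located''), which is exactly the subcase the paper dispatches with Lemma~\ref{lem-2-east}; your closing claim that ``in every case two agents meet'' is therefore not just unproved but wrong for that subcase.

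The second gap is in the recursive step. When $b$ itself sees tokens appearing or disappearing on ring $i+2$, you conclude that the third agent $c$ ``(reading $(0,0)$)'' returns to its relocated tokens, but nothing in your argument forces $c$ to read $(0,0)$; a priori $c$ could read $(2,2)$, enter \OneTokenBelow, or itself see tokens appearing/disappearing, and your recursion would not terminate. The paper closes this off differently: it observes that if $c$ also saw tokens appearing or disappearing, the mover would have to be $a$ (there are only three agents), forcing a torus with three all-safe horizontal rings, a contradiction; hence $c$ falls into one of the non-moving cases and the full case analysis (which you have not carried out) applies to the pair $(b,c)$. So the skeleton of your proof is right, but the load-bearing steps --- the exhaustive enumeration of $b$'s branches with their timings, the dead-tokens-to-the-East subcase, and the three-safe-rings contradiction --- are missing rather than merely deferred.
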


\begin{proof}
Suppose that an agent $a$ executing \InitNR on ring $i$ at
position $j$, sees $0$ tokens (resp. $2$ tokens) the first time it goes
on ring $i+1$ at position $j$ and $2$ tokens (resp. $0$ tokens) the
second time it goes at position $(i+1,j)$. Since the agents are
synchronized, all alive agents are either waiting at their homebases
with their two tokens, or executing \InitNR. Since the tokens
have appeared (resp. disappeared), we know that there was an alive
agent $b$ that executes the lines $1$ to $6$ of Procedure
\InitNR.  Consequently, we know that ring $i+1$ is safe and thus
agent $a$ can move East if the tokens have disappeared.

Thus, we know that $5$ (resp. $6$) time units after the beginning of
the execution of \InitNR, agent $a$ is waiting on the two tokens
located at the East of the homebase of agent $b$.  These two tokens
can be the two tokens of a dead agent, or the tokens of $b$ that $b$
moved during \InitNR. In the first case, agent $b$ locates the
black hole (see Lemma~\ref{lem-2-east}). 

Otherwise, if when going $South$, agent $b$ sees either twice $0$
tokens, or twice $2$ tokens, then $b$ is back on its two tokens $7$
time units after the beginning of \InitNR. If when going
$South$, agent $b$ sees $1$ token, then $b$ cannot die before the $8$
time units after the beginning of \InitNR. Moreover, either
agent $b$ meets the third agent at time $6$ on the node located at the
$East$ of the node where $a$ is waiting, or $b$ is back on its two
tokens at time $6$ or $7$. Thus the two agents meet. 

Suppose now that agent $b$ sees first $2$ (resp. $0$) tokens the first
time it goes $South$ and $0$ (resp. $2$) the second time. In this
case, it means that there is an agent $c$ executing \InitNR.
Consequently, we know that ring $i+2$ is safe. 
Thus, $a$, $b$ and $c$ are the three agents executing the algorithm,
and they are located on lines $i$, $i+1$, $i+2$. If, when executing
\InitNR, agent $c$ sees tokens appearing, or disappearing, then it
means that these tokens are the tokens of agent $a$ (since there is
only $3$ agents executing the algorithm). Thus the torus has three
horizontal lines and all of them are safe, which is
impossible. Consequently, agent $c$ does not see tokens appearing, or
disappearing, while it executes \InitNR, and thus agents $b$ and $c$
meet.
\end{proof}

\begin{lemma}
When executing \InitNR on ring $i$ at position $j$, if an agent
sees twice two tokens on ring $i+1$ at position $j$, then one of the
following holds:
\begin{itemize}
\item either, the black hole is on ring $i+1$ at position $j+1$, and
  the agent dies leaving a black hole configuration, 
\item or the black hole is on ring $i+2$ at position $j$ and the agent
  locates it,
\item or two agents meet.
\end{itemize}
\end{lemma}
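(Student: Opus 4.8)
The plan is to show that the hypothesis $(n_1,n_3)=(2,2)$ drives agent $a$ into Case $(2,2)$ of \InitNR, and then to read the three conclusions off the trajectory prescribed there, after first pinning down what the two tokens on node $(i+1,j)$ really are. First I would dispose of the meeting case: throughout \InitNR an agent abandons the procedure and calls \BHStogether the instant it shares a node with another agent, so if any agent stands on $(i+1,j)$ when $a$ visits it (to read $n_1$ or $n_3$), two agents meet and the third conclusion holds. Assume from now on that $a$ meets nobody. Then at both visits the pair of tokens on $(i+1,j)$ is unattended, and since $a$ reads the same value $2$ both times, the pair neither appears nor disappears; so we are not in the situation of Lemma~\ref{lem-moving-tokens}, and the pair is genuinely static and unattended.

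Next I would identify the tokens using the synchronization lemma together with Property~\ref{pr:synch}. At the instant $a$ starts \InitNR every other alive agent is either waiting on its homebase together with both its tokens (hence physically present on that pair, excluded since $a$ meets no one), or itself executing \InitNR (hence moving its two tokens together, which would make $a$ see the count change between its two visits, i.e.\ the appearing/disappearing case), or executing \BlackHoleInNextRing (in which it carries its tokens and only drops a pair immediately before a possibly fatal step). Hence a static unattended pair on $(i+1,j)$ can only be a death marker left by a dead agent $b$. The invariant to exploit is that an agent strands two tokens together on a node only just before a fatal step to the $East$ (in which case the pair lies on the $West$ neighbour of the black hole: the first $East$ step of \FR, by Lemma~\ref{lem-first-ring}, or an $East$ step of \BlackHoleInNextRing) or to the $South$ (in which case the pair lies on the $North$ neighbour of the black hole: a $South$ step of \InitNR, during which both tokens are carried together). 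Consequently the static pair on $(i+1,j)$ certifies that the black hole is the $East$ neighbour $(i+1,j+1)$ or the $South$ neighbour $(i+2,j)$ of $(i+1,j)$; the $North$ and $West$ neighbours are ruled out because $a$ executes \NR on ring $i$, so ring $i$ is safe by Lemma~\ref{lem-nextring-emptyring} and the token-bearing node is itself safe.

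Finally I would trace Case $(2,2)$. There $a$ goes $North$, then $East$ onto $(i,j+1)$ where its own two tokens already lie, then $South$ onto $(i+1,j+1)$. If the black hole is $(i+1,j+1)$, agent $a$ dies on this step; at that moment its two tokens sit on $(i,j+1)$, the $North$ neighbour of the black hole, and $b$'s two tokens sit on $(i+1,j)$, its $West$ neighbour, so a black hole configuration is left around the black hole, which is the first conclusion. Otherwise $(i+1,j+1)$ is safe and $a$ survives: it returns $North$, picks up its two tokens, goes $West$ then $South$ back to $(i+1,j)$, finds no agent waiting, and marks all links leading to $(i+2,j)$ as dangerous, thereby locating the black hole, which is the second conclusion. (Had an agent been waiting at $(i+1,j)$ at this point, the two would meet, which is again the third conclusion.)

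The main obstacle is the middle step: rigorously establishing the danger-marker invariant, i.e.\ that no static unattended pair can arise except as a death marker pointing $East$ or $South$. This requires combining the global synchronization (every alive agent being in one of the three controlled states whenever $a$ starts \InitNR) with a tedious enumeration of every line at which an agent may die while holding two tokens, and with the fact that \FR leaves a safe ring token-free once its occupants finish (so the transient doubled homebases created during \FR are gone and do not masquerade as markers), thereby extending the clean-configuration reasoning of Lemma~\ref{lem-nextring-emptyring} from ring $i$ to ring $i+1$. Once this invariant is in hand, checking that the Case $(2,2)$ trajectory produces exactly the claimed configuration or marking, and that the formed black hole configuration is stable because neither pair is disturbed again, is routine.
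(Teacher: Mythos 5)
Your proof has a genuine gap at the step where you conclude that the two tokens on $(i+1,j)$ are ``genuinely static and unattended.'' Reading the count $2$ at both visits does \emph{not} imply the tokens are the same pair: between $a$'s first visit (time $1$) and its second visit (time $5$), an alive agent $b$ that started \InitNR on $(i+1,j)$ picks up its two tokens at time $2$ and moves them East, while an alive agent $c$ that started \InitNR on $(i+1,j-1)$ deposits its own two tokens on $(i+1,j)$ at time $3$. Agent $a$ then sees two tokens at both visits, meets nobody at either visit, yet the pair has been swapped and no agent is dead. Your explicit claim that an alive agent executing \InitNR ``would make $a$ see the count change between its two visits'' is therefore false, and the subsequent identification of the pair as a death marker --- on which both the ``dies leaving a BHC'' and ``locates the black hole on ring $i+2$'' conclusions rest --- does not follow. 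The paper treats exactly this swap scenario as a separate case: it argues that $b$ and $c$ must both be alive and executing \InitNR on ring $i+1$, hence ring $i+1$ is safe, neither dies going South, both return to their tokens $7$ time units after starting, and since ring $i+1$ is safe $a$ returns to $(i+1,j)$ at time $11$ and meets an agent there. So the lemma's third alternative still saves the statement, but your argument as written never reaches it for this configuration.

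Apart from this omission, the remainder of your argument tracks the paper's: once the pair is known to belong to a dead agent, Lemma~\ref{lem-cautious-walk} places the black hole at $(i+1,j+1)$ or $(i+2,j)$, and tracing Case $(2,2)$ of \InitNR yields either the black hole configuration (your tokens on $(i,j+1)$ plus the dead agent's on $(i+1,j)$) or the location of the black hole at $(i+2,j)$. To repair the proof you need to add the case distinction ``same tokens vs.\ swapped tokens'' and supply the synchronization argument showing that in the swapped case two agents necessarily meet.
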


\begin{proof}
Suppose that an agent $a$ executing \InitNR on ring $i$ at
position $j$, sees $2$ tokens the two times it goes on ring $i+1$ at
position $j$. 

There are two cases to consider: either the two tokens that $a$ sees
the first time have not been moved, or an agent $b$ picked up these
two tokens, while an agent $c$ moved its two tokens to this node.

In the second case, it means that agents $b$ and $c$ are alive when
agent $a$ starts \InitNR. Since the agents are synchronized,
both agents are also executing \InitNR and thus ring $i+1$ is
safe.  Since the three agents are located on ring $i$ and $i+1$,
agents $b$ and $c$ do not see any token on ring $i+2$ and since they
both moved their tokens to the $East$ of their starting positions,
none of them died going $South$. Thus, they are back on their tokens
$7$ time units after the beginning of Procedure \InitNR. 
Since ring $i+1$ is safe, agent $a$ is back on ring $i+1$ at position
$j$, $11$ time units after it started executing Procedure
\InitNR: $a$ meets another agent. 

Suppose now that the two tokens seen by agent $a$ the second time are
the same as the two tokens it sees the first time. Since the agents
are synchronized, all agents that are still alive are either waiting
on their two tokens, or are executing \InitNR and have picked up their
tokens $2$ time units after they started \InitNR. If $a$ does not meet
another agent the first time it goes South, it implies that the agent
that put these two tokens is dead. From Lemma~\ref{lem-cautious-walk},
we know that the black hole is located either on ring $i+2$ at
position $j$, or on ring $i+1$ at position $j+1$.

Assume that agent $a$ does not meet any agent while it executes
\InitNR. If the black hole is located at position $j+1$ on ring
$i+1$, agent $a$ dies while going to this node, after it left its two
tokens on ring $i$ at position $j+1$. Thus, its two tokens and the two
tokens on ring $i+1$ at position $j$ form a black hole configuration. 
Otherwise, agent $a$ does not die while executing \InitNR and
locates the black hole that is on ring $i+2$ at position $j$. 
\end{proof}

\begin{lemma}\label{lem-1-south}
When executing \InitNR in ring $i$, if an agent sees one token
the first time it goes $South$, it knows another agent is dead, and
one of the following holds.
\begin{itemize}
\item either it meets another agent, 
\item or it locates the black hole,
\item or it dies leaving a black hole configuration, 
\item or it knows ring $i+1$ is safe and the black hole is in ring
  $i+2$, and executes \BlackHoleInNextRing on ring $i+1$,
\item or it continues executing \NR without leaving a black hole
  configuration with a token on ring $i+1$.
\end{itemize}
\end{lemma}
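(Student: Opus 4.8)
The plan is to read off which configuration produced the lone token and then to follow Procedure~\OneTokenBelow (which \InitNR invokes exactly when $n_1=1$) branch by branch, checking that each branch ends in one of the five listed conclusions.

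\textbf{The token belongs to a dead agent, and the black hole is close.} Throughout the interval in which $a$ runs \InitNR every other surviving agent keeps both of its tokens on one node: by Property~\ref{pr:synch} each such agent is, when $a$ starts \InitNR, either itself starting \InitNR (and an agent running \InitNR always carries and moves its two tokens together), or waiting on its homebase with both tokens, or holding no placed token; and the main loop of \NR opens with a wait of $12$ time units so that no agent begins a cautious walk before $a$ leaves \InitNR. As $a$ carries its own two tokens, the single token it reads on $(i+1,j)$ belongs to a dead agent $b$ --- this is the first assertion of the lemma. By Lemma~\ref{lem-cautious-walk} a dying agent leaves its tokens on the node it departs and steps only $South$ or $East$; hence this token is either one that $b$ dropped just before entering the black hole from the $North$ of $(i+1,j)$ or from the $West$ of $(i+1,j)$, or it is $b$'s homebase token on ring $i+1$. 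Combined with the facts that ring $i$ is safe (Lemma~\ref{lem-nextring-emptyring}) and that $a$ survived the step $South$ to $(i+1,j)$, this places the black hole on ring $i+1$ (at $(i+1,j+1)$) or on ring $i+2$.

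\textbf{Following \OneTokenBelow.} Here $a$ stands on $(i,j+1)$ with its two tokens and probes $(i,j+2)$, then $(i+1,j+1)$, then possibly $(i+1,j+2)$; I would split into the branches of the procedure. If $a$ reads two tokens on $(i,j+2)$, they were either left by an agent that died going $South$ (so the black hole is $(i+1,j+2)$) or belong to the third, still alive agent performing the symmetric move; the built-in one-unit wait together with the global rule ``if you meet an agent, call {\BHStogether}'' forces $a$ either to meet that agent or to mark the black hole $South$ of $(i,j+2)$ --- the first two conclusions. Otherwise $a$ steps $South$ onto $(i+1,j+1)$: if this is the black hole, $a$ dies there while its two tokens sit on $(i,j+1)$ (the $North$ neighbor) and the lone token sits on $(i+1,j)$ (the $West$ neighbor), leaving a valid BHC --- the third conclusion. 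If $(i+1,j+1)$ is safe, $a$ records $n_2$, shifts its tokens to $(i,j+2)$, and branches on $n_2$. When $n_2=0$, the only token near $a$ on ring $i+1$ is the one on $(i+1,j)$, so $a$'s homebase token on $(i,j+2)$ cannot complete a BHC with ring $i+1$, and $a$ resumes \NR --- the fifth conclusion. When $n_2>0$, there are tokens on the adjacent nodes $(i+1,j)$ and $(i+1,j+1)$ while $a$ has just verified (by surviving the two $South$ steps) that the $East$ links out of both are safe; the tokens therefore signal danger to the $South$, so the black hole is on ring $i+2$ and $a$ runs \BlackHoleInNextRing on ring $i+1$ (or dies on $(i+1,j+2)$ leaving a BHC, again the third conclusion) --- the fourth conclusion.

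\textbf{Where the work is.} The delicate points are the exhaustiveness of this split and, above all, the fifth conclusion: I must verify that in the $n_2=0$ branch $a$ truly leaves no black-hole configuration on a safe node, since a spurious BHC would break the correctness of the whole algorithm. Both issues reduce to the synchronization invariants --- that a dead agent is the only possible source of a lone token, and that two agents executing the symmetric steps of \OneTokenBelow are compelled to meet --- so the real task is to confirm that these timing invariants survive each branch, exactly as in Lemmas~\ref{lem-2-east} and~\ref{lem-moving-tokens}.
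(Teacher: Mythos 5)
Your skeleton---establish that the lone token on $(i+1,j)$ belongs to a dead agent, then follow \OneTokenBelow branch by branch---matches the paper's, and the easy branches (death at $(i+1,j+1)$ leaving a BHC; two tokens found at $(i,j+2)$) are handled essentially as the paper handles them. But the two claims that carry the real content of the lemma are not actually proved, and one of them is justified by reasoning that fails as stated. In the $n_2>0$ branch you conclude that the black hole is on ring $i+2$ because the tokens on $(i+1,j)$ and $(i+1,j+1)$ ``signal danger to the $South$.'' That inference is only valid if those tokens are cautious-walk (danger) tokens; they could instead be homebase tokens of dead agents whose homebases lay on ring $i+1$, in which case they signal nothing and the black hole could sit anywhere on ring $i+1$---and sending the survivor into \BlackHoleInNextRing on ring $i+1$ would then be fatal, since that procedure walks East along a ring it believes to be safe. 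The paper closes exactly this hole with a separate argument: if the black hole were on ring $i+1$, both tokens would have to be homebase tokens, so two agents would have executed \FR on ring $i+1$, and Lemma~\ref{lem-first-ring} then guarantees the black hole was already found and those homebase tokens removed---a contradiction. Some such step is indispensable.

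For the fifth conclusion you explicitly defer the verification that the homebase token left on $(i,j+2)$ never becomes part of a BHC, saying it ``reduces to the synchronization invariants.'' That deferral is precisely the nontrivial part: the danger is a token \emph{appearing} on $(i+1,j+1)$ after $a$ has inspected it and moved on. The paper rules this out concretely---the only agent that could deposit such a token would be alive and executing \InitNR on ring $i+1$, and would therefore have seen a single token on $(i+1,j+1)$, which Lemma~\ref{lem-2-east} shows is impossible. Relatedly, your case split never accounts for an agent executing \InitNR on ring $i-1$ at positions $j$, $j+1$ or $j+2$ (the paper's Claim~\ref{claim-above}), whose excursions down into ring $i$ must be shown to force a meeting rather than perturb the other branches; nor for an agent of ring $i$ starting \InitNR at position $j+1$ (the paper's second claim). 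Without these pieces the case analysis is not exhaustive, and the fourth and fifth bullets of the lemma are not established.
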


\begin{proof}
Consider an agent $a$ executing \InitNR on ring $i$, starting at
position $j$.  Since the agents are synchronized, when agent $a$
executes \InitNR, all alive agents are either waiting on their
homebases, or executing \InitNR and moving their two tokens
together.  Thus, if agent $a$ sees a token on the node, then it is the
token of a dead agent.  This token may be the homebase token of the
dead agent, or a token indicating that the black hole is located at
the $South$ or at the $East$ of this token. In any case, the black
hole is in ring $i+1$, or in ring $i+2$.

\begin{claim}\label{claim-above}
If an agent $b$ starts \InitNR on ring $i-1$, at position $j$,
$j+1$ or $j+2$, $a$ and $b$ meet.
\end{claim}

If an agent $b$ starts at position $j$, or $j+1$, the claim follows
from Lemma~\ref{lem-moving-tokens}. Suppose now that an agent starts
at position $j+2$ on ring $i$. From
Lemma~\ref{lem-nextring-emptyring}, we know that $b$ cannot see a
unique token on ring $i$ at position $j+2$. Since ring $i$ is safe,
agent $b$ does not see two tokens on ring $i-1$ at position $j+3$, and
thus, agent $b$ goes back at position $(i,j+2)$ $5$ time units after
it started \InitNR. Since agent $a$ is on position $(i,j+2)$ at this
moment, the two agents meet.

Assume now that no agent starts \InitNR on ring $i-1$, at
position $j$, $j+1$, or $j+2$. 

\begin{claim}
If an agent $b$ starts on ring $i$ at position $j+1$, either the black
hole is at position $(i+1,j+1)$ and $a$ locates it, or $a$ and $b$
meet.
\end{claim}

If the black hole is at position $(i+1,j+1)$, then $b$ dies leaving
its two tokens on node $(i,j+1)$. From Lemma~\ref{lem-2-east}, $a$
locates the black hole. 

Suppose now that the black hole is not at position $(i+1,j+1)$.  First
note that since both $a$ and $b$ have their tokens with them when they
start \InitNR, and since there is one token on node $(i+1,j)$,
from Lemma~\ref{lem-2-east}, agent $b$ cannot see any token at
position $(i,j+2)$.  Moreover, $b$ can either see $0$ or $1$ token at
position $(i+1,j+1)$. If $b$ does not see a unique token on node
$(i+1,j+1)$, then $b$ is at position $(i,j+1)$ $4$ time units after it
started \InitNR; Since at this moment, $a$ is also on this node,
the two agents meet. Suppose now that $b$ sees one token at position
$(i+1,j+1)$. Since $b$ cannot see any token on node $(i,j+3)$, $b$ is
at position $(i,j+2)$ $6$ time units after it started
\InitNR. Since $a$ sees $2$ tokens on node $(i,j+2)$ $5$ time
units after it started \InitNR, it waits there one time unit,
and thus the two agents meet on this node.

Assume now that agent $a$ does not meet any agent while executing
\InitNR. 
\begin{claim}
If agent $a$ sees some tokens at position $(i,j+2)$, then $a$ sees $2$
tokens, the black hole is located at node $(i+1,j+2)$, and $a$ locates
it.
\end{claim}

For the same reasons as in the proof of Lemma~\ref{lem-2-east}, since
agent $a$ is executing \InitNR, $a$ can either see $0$ or $2$
tokens on node $(i,j+2)$. 

Suppose the two tokens located at node $(i,j+2)$ belong to an agent
$b$ that is still alive. Since the agent are synchronized, either $b$
is waiting on its two tokens, or agent $b$ has started executing
\InitNR at the same moment $a$ did. In the first case, $a$ meets
$b$. In the second case, agent $b$ has picked up its two tokens and
moved to the $East$ $3$ time units after it started
\InitNR. Consequently, it implies that agent $b$ started on node
$(i,j+1)$, but in this case, we know from the previous claim that $a$
and $b$ meet. 

Thus, the two tokens $a$ sees on node $(i,j+2)$ belong to a dead
agent. From Lemma~\ref{lem-cautious-walk}, the black hole is either
located at the $South$ or at the $East$ of this node. Since $a$ is
executing \NR on ring $i$, we know that ring $i$ is safe, and
thus the black hole is located at node $(i+1,j+2)$. Since $a$ does not
meet any other agent, it locates the black hole.

Assume now that agent $a$ does not meet any other agent, and does not see
any token on nodes $(i,j+1)$ and $(i,j+2)$. 

\begin{claim}
If the black hole is located on node $(i+1,j+1)$ or $(i+1,j+2)$, $a$
dies leaving a black hole configuration.
\end{claim}

If the black hole is on node $(i+1,j+1)$, agent $a$ dies $7$ time
units after it started \InitNR leaving its two tokens on node
$(i,j+1)$. With the unique token on node $(i+1,j)$, this leaves a
black hole configuration. 

If the black hole is on node $(i+1,j+2)$, then the token $a$ sees on
node $(i+1,j)$ is a homebase token of some dead agent $b$. From
Lemma~\ref{lem-cautious-walk}, $b$ left a token on node $(i+1,j+1)$
that $a$ sees  at time $7$. Thus, agent $a$ dies at time $10$ after it
left its two tokens on node $(i,j+2)$. Consequently, with the token on
node $(i+1,j+1)$, $a$ dies leaving a black hole configuration. 

Assume now that while it executes \InitNR, agent $a$ does not meet
any other agent, and does not see any token on ring $i$. Furthermore,
assume that the black hole is neither on node $(i+1,j+1)$, or
$(i+1,j+2)$. 

\begin{claim}
If agent $a$ sees one or two tokens on node $(i+1,j+1)$, then the
black hole is in ring $i+2$, and ring $i+1$ is safe.
\end{claim}

We already know that the black hole is either on ring $i+1$, or in
ring $i+2$. Since nodes located at position $(i+1,j+1)$ and
$(i+1,j+2)$ are safe, we know that each of these tokens is either a
homebase token, or a token indicating that the black hole is South. 

Suppose that the black hole is on ring $i+1$. Then, the tokens left on
both nodes are homebase tokens, and consequently either two agents
have started in the ring and at least one of them found the black
hole, or two agents died.  If two agents have started executing the
algorithm on ring $i+1$, then by Lemma~\ref{lem-first-ring}, the black
hole has been found, and while executing \NR on ring $i-1$ (or \FR on
ring $i$), agent $a$ visited a node such that the link to the $South$
was marked as leading to the black hole; i.e., $a$ is not executing the
algorithm any more.  Otherwise, at least one agent has executed \NR on
node $i$ before leaving its homebase on ring $i+1$, and thus it died
leaving its homebase token on ring $i$, and not on ring
$i+1$. Consequently, only one of the two tokens on positions $(i+1,j)$
and $(i+1,j+1)$ can be a homebase token, and consequently, the black
hole is on ring $i+2$, and thus ring $i+1$ is safe. In this case, $a$
executes \BlackHoleInNextRing on ring $i+1$. 


\begin{claim}
If agent $a$ continues the execution of \NR, then its homebase
token is not part of a black hole configuration with a token on ring
$i+1$ when it terminates \InitNR.
\end{claim}

If agent $a$ continues the execution of \NR, then it means that
$a$ does not meet any agent, does not see tokens on nodes $(i,j+1)$,
$(i,j+2)$, or $(i+1,j+1)$, and that it leaves a homebase token on node
$(i,j+2)$. Suppose that its token is part of a homebase
configuration.

Suppose that an agent $b$ leaves a token on node $(i+1,j+1)$ after
agent $a$ visited it. Then it means that agent $b$ is still alive when
agent $a$ starts \InitNR. Since the third agent is dead, $b$ is
either executing \BlackHoleInNextRing on node $i+1$, or \InitNR
on ring $i+1$. In the first case, $b$ does not leave any token on ring
$i+1$. In the second case, the agent is executing \InitNR on
ring $i+1$ and it means agent $b$ has visited node $(i+1,j+1)$ ans has
seen a unique token on the node; which is impossible from
Lemma~\ref{lem-2-east}. 
\end{proof}

\begin{lemma}\label{lem-nobhconf}
When executing \InitNR in ring $i$, if an agent does not see any
token and does not meet any agent, then it continues executing
\NR without leaving a black hole configuration with a token on
ring $i+1$.
\end{lemma}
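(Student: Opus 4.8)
The plan is to pin down exactly where $a$ places its tokens and then to show that the one dangerous node stays empty. Under the hypotheses $a$ sees $n_1=0$ going $South$, sees no token (hence not two) when it moves $East$ to $(i,j+1)$, and therefore enters the $(n_1,n_3)$ switch with $n_3=0$: it is in Case $(0,0)$ of \InitNR. Consequently $a$ leaves its two tokens on node $(i,j+1)$ at time $3$ and, after surviving the $South$ probe of $(i+1,j+1)$, keeps them there as its new homebase when it continues \NR. The homebase token on $(i,j+1)$ can belong to a black hole configuration together with a token on ring $i+1$ only at the node $(i+1,j+1)$: that token is the $North$ neighbour of $(i+1,j+1)$, whose $West$ neighbour is $(i+1,j)$, while the only other configuration using $(i,j+1)$ (as the $West$ neighbour of $(i,j+2)$) would require a token on $(i-1,j+2)$, which lies on ring $i-1$. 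So it suffices to prove that no token sits on node $(i+1,j)$ when $a$ terminates \InitNR.

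First I would record $a$'s observations along the Case $(0,0)$ trajectory of Figure~\ref{fig-init-nr}: $a$ stands on $(i+1,j)$ at times $1$ and $5$ and finds it empty both times, and it survives its $South$ step onto $(i+1,j+1)$ at time $8$, so $(i+1,j+1)$ is safe. A token present on $(i+1,j)$ at termination that $a$ did not see at times $1$ or $5$ must have been deposited there strictly after time $5$; and it cannot have been deposited by $a$, which never leaves a token on ring $i+1$ since it survives the whole procedure. Hence some \emph{other} agent put it there.

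Next I would split on that other agent $b$ using the synchronization lemma proved above: while $a$ runs \InitNR, every alive $b$ either waits on its homebase with both tokens, runs \InitNR simultaneously, or runs \BlackHoleInNextRing. A waiting $b$ deposits nothing, and if its homebase were $(i+1,j)$ then $a$ would have counted two tokens at times $1$ and $5$. If $b$ runs \InitNR, its tokens stay on the ring it explores from, so to touch $(i+1,j)$ it must start on ring $i+1$ at column $j$, $j-1$, or $j-2$: starting at $(i+1,j)$ puts two tokens there already at time $0$, seen by $a$ at time $1$; starting at $(i+1,j-1)$ shifts them onto $(i+1,j)$ at time $3$, so $a$ sees two tokens appearing at time $5$ and Lemma~\ref{lem-moving-tokens} forces a meeting or a located black hole; and starting at $(i+1,j-2)$ can only reach $(i+1,j)$ via \OneTokenBelow, whose $East$ step lands $b$ on $(i+1,j)$ at exactly time $5$, precisely when $a$ is there, so the two agents meet. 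Finally, a \BlackHoleInNextRing agent only drops tokens on the ring just below the safe ring it traverses, so to place a token on $(i+1,j)$ it must traverse ring $i$ and reach a node bearing a token on $(i,j+1)$ --- but that node holds $a$'s own homebase tokens, and the meeting design again forces $a$ and $b$ together. In every branch a hypothesis of the lemma is contradicted, so $(i+1,j)$ is empty at termination.

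The hard part will be the bookkeeping in the third paragraph rather than any single idea: the hypotheses ``sees no token'' and ``meets no agent'' constrain only the handful of nodes $a$ actually visits, so I must exclude a token being slipped onto the unvisited node $(i+1,j)$ by a neighbour whose path $a$ never inspects. The most delicate branch is the agent running \OneTokenBelow one ring below and two columns West, which could land its homebase on $(i+1,j)$ while leaving that node empty at both instants $a$ checks it; the resolution hinges on the exact time-stamps of Figure~\ref{fig-init-nr} showing the two agents coincide on $(i+1,j)$ at time $5$. Pinning down these coincidences, and confirming that transient cautious-walk tokens cannot survive on $(i+1,j)$ through the end of \InitNR, is where the care is needed; the synchronized big-step clock guarantees the time-stamps line up across agents.
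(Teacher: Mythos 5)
Your proof is correct and follows essentially the same route as the paper's: reduce the claim to showing that node $(i+1,j)$ stays empty, note that $a$ inspects it at times $1$ and $5$, and then use the synchronization property to rule out any depositor, handling a start at $(i+1,j-1)$ via Lemma~\ref{lem-moving-tokens} (tokens appearing) and a start at $(i+1,j-2)$ via the time-$5$ coincidence on $(i+1,j)$ (the paper cites the first claim of Lemma~\ref{lem-1-south} for this last case). You are in fact slightly more explicit than the paper, which states the reduction to $(i+1,j)$ only implicitly and silently omits the waiting-agent and \BlackHoleInNextRing branches that you sketch.
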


\begin{proof}
Consider an agent $a$ that does not see any token and does not meet
any agent while executing \InitNR. If agent $a$ starts the execution
on node $(i,j)$, then it leaves its homebase token on node $(i,j+1)$,
and there was no token on ring $(i+1,j)$ when agent $a$ visited it. If
a token appears on node $(i+1,j)$, then there is an agent $b$
executing \InitNR on ring $i+1$ that has started on node $j-1$ or
$j-2$. In the first case, agent $a$ has seen the two tokens on agent
$b$ the first time it went $South$. In the second case, agent $b$
moves its two tokens to node $(i+1,j)$ only if it has seen a unique
token on node $(i+2,j-1)$. But in this case, from
Case~\ref{claim-above} of Lemma~\ref{lem-1-south}, both agents have
met.
\end{proof}

\smallskip
In the following lemma, we show that when an agent executes \NR, if it
has not meet any other agent during \InitNR, then there cannot be an
agent that is located just below it. This implies that while
executing \InitNR, an agent cannot see the token another agent uses
for its special cautious walk.

\begin{lemma}\label{lem-no-agent-below}
Once an agent has finished executing \InitNR and continues the
execution of \NR (i.e., it is not dead, it has not meet any other
agent, it is not executing \BlackHoleInNextRing), it knows that there
is no alive agent located immediately below it that executes \NR.
\end{lemma}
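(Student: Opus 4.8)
I would prove this by contradiction: assume agent $a$ has passed \InitNR and continues \NR on ring $i$ without ever meeting another agent and without being sent to \BlackHoleInNextRing, and yet some alive agent $c$ is executing \NR on ring $i+1$ during this time. The plan is first to use the synchronization results (Property~\ref{pr:synch} and the synchronization lemma following Lemma~\ref{lem-first-ring}) to pin down the possible temporal relationships between $a$ and $c$, and then to show that each of them forces $a$'s concluding \Analyze call to route $a$ to a meeting or to \BlackHoleInNextRing, contradicting the hypothesis. First I would dispose of the case where all three agents start on the same ring: they then advance in lockstep at $2n$ big-steps per ring and stay co-ring forever, so no agent is ever on ring $i+1$ while $a$ is on ring $i$, and the statement holds vacuously. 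Otherwise, by synchronization, at the big-step when $a$ begins \InitNR on ring $i$ every other alive agent either begins \InitNR simultaneously or is midway through \NR waiting on its homebase with both tokens; since $c$ is running \NR (not \BlackHoleInNextRing), these are the only two regimes, and in both $c$ keeps one token on a fixed homebase node $(i+1,h_c)$ and carries a single cautious-walk token.

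The heart of the argument is the equal-speed regime, where $a$ and $c$ each advance one node East per big-step. Here the column offset between $a$'s position and $c$'s cautious-walk token is a fixed nonzero constant, so at the sub-step where $a$ steps $South$ to inspect the node directly below it, $c$'s cautious-walk token is always elsewhere. Consequently the only token $a$ ever records on ring $i+1$ is $c$'s lone homebase token: $a$ records at least one symbol $b_1$ (when it passes above $(i+1,h_c)$ during its traversals) but never a $b_2$ or a $t_2$. I would then feed such a sequence into \Analyze: it is neither free of $b$'s nor equal to $b_1t_1b_1t_1b_1t_1b_2t_2b_2t_2b_2t_2$, and it contains fewer than three $t_2$, so the second branch of \Analyze applies and sends $a$ (which still carries two tokens) into \BlackHoleInNextRing — contradicting that $a$ continues \NR.

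For the unequal-speed regime, $a$ belongs to a fast pair (two agents that started together on ring $i$) with the slow agent $c$ one ring below. Here I would simply invoke the sequence computation already carried out in the proof of the synchronization lemma: $a$ now additionally sees the homebase token of its partner on ring $i$, its sequence begins with a block of the form $b_\ast t_1 t_1$ containing two consecutive $t$'s, and by the design of \Analyze this again prevents $a$ from continuing \NR (it is routed to meet its partner, or to \BlackHoleInNextRing). The symmetric configuration — a slow $a$ with the fast pair below — does not persist, since two co-ring agents meet while executing \NR (as in the argument of Lemma~\ref{lem-nextring-emptyring}), so they would not both still be running \NR. Together with Lemmas~\ref{lem-moving-tokens}, \ref{lem-2-east} and \ref{lem-1-south}, which already guarantee that any genuinely dangerous adjacency detectable directly beneath the homebase during \InitNR leads to a meeting or a black-hole location, this covers all cases.

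The hard part will be making the equal-speed bookkeeping of the second paragraph rigorous: one must show that the constant nonzero column offset, combined with the exact sub-step timing of the paired $South$-excursions and with the torus wrap-around, genuinely prevents $a$ from ever catching $c$'s moving cautious-walk token on any node that $a$ inspects, while simultaneously confirming that $a$ does pass above and record $c$'s fixed homebase token, so that its sequence provably contains a $b_1$ and no subscript-$2$ symbol. Care is also needed to confirm that a possible (dead) third agent cannot deposit a two-token node on ring $i+1$ that would perturb this sequence.
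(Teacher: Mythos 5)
Your proposal rests on a misreading of the statement, and the stronger claim you try to refute by contradiction is in fact true in legitimate executions, so no contradiction is available. The lemma does not assert that no alive agent executes \NR anywhere on ring $i+1$ while $a$ executes \NR on ring $i$; it asserts only that, at the moment $a$ exits \InitNR and enters the main loop of \NR, the single node immediately South of $a$'s current position (node $(i+1,j+1)$ if $a$ saw no token below its start, node $(i+1,j+2)$ if it ran \OneTokenBelow) does not host a live agent executing \NR. The configuration you treat as absurd --- a lone agent $c$ running \NR on ring $i+1$ at a fixed nonzero column offset while $a$ runs \NR on ring $i$ --- is explicitly handled and permitted by the algorithm: in that case $a$'s sequence is $(b_1t_1)^3(b_2t_2)^3$, because both $a$ and $c$ add a second token to their homebases after the third (synchronized) visit, so $a$ does record $b_2$'s and $t_2$'s in the second half; this matches the \emph{first} branch of \Analyze and sends $a$ on to execute \NR on ring $i+1$ (see the case ``ring $i+1$ is safe, $a$ is alone in its ring and ring $i+2$ is safe'' in the proof of Lemma~\ref{lem-next-ring}). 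Your assertion that $a$ ``never records a $b_2$ or a $t_2$'' is therefore wrong, the branch of \Analyze you invoke does not fire, and the contradiction at the heart of your second paragraph never materializes. Moreover, even if that endgame worked, it would only constrain what happens at the \emph{end} of \NR, whereas the lemma must hold at the end of \InitNR: Lemma~\ref{lem-next-ring} uses it to rule out spurious black-hole configurations and stray tokens \emph{during} the traversal itself.

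The intended argument is local and is only gestured at in your closing sentence. One examines the two ways $a$ can exit \InitNR and continue \NR. If $a$ saw no token at $(i+1,j)$, it stands at $(i,j+1)$; Lemma~\ref{lem-moving-tokens} excludes any agent having started \InitNR at $(i+1,j-1)$ or $(i+1,j)$ at the same time, so no agent finishing \InitNR can end up at $(i+1,j+1)$, and by synchronization the only other candidate is an agent waiting on its homebase there with both tokens --- which $a$ would meet when it dips South from its new homebase at time $t+8$. If $a$ saw exactly one token at $(i+1,j)$, that token belongs to an agent that died before time $t$, $a$ ends at $(i,j+2)$, and a token-counting argument on the sequence a hypothetical live agent $c$ at $(i+1,j+2)$ would have had to build while previously executing \NR on ring $i$ (it would force too many two-token nodes for three agents carrying two tokens each) excludes $c$. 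Neither of these two case analyses appears in your proposal.
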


\begin{proof}
Consider an agent $a$ that started executing \InitNR on node $(i,j)$
at time $t$.  Once $a$ has finished executing \InitNR, it continues to
execute \NR only if it has seen no token on node $(i+1,j)$, or if it
sees one token on node $(i+1,j)$ and no token on node $(i+1,j+1)$. In
the first case, $a$ is on node $(i,j+1)$. Since $a$ did not see any
token on node $(i+1,j)$ and did not meet any agent, we know from
Lemma~\ref{lem-moving-tokens} that no agent has started executing
\InitNR on node $(i+1,j-1)$, or $(i+1,j)$ at time $t$. Thus, any agent
that has started executing \InitNR at time $t$ cannot continue
executing \NR and be on node $(i+1,j+1)$. Since the agents are
synchronized, we know that only an agent that was waiting on its
homebase with its two tokens can be on node $(i+1,j+1)$; but in this
case, agent $a$ meets it at time $t+8$. 

Suppose now that agent $a$ has seen exactly one token on node
$(i+1,j)$. In this case, $a$ is on node $(i,j+2)$ when it has finished
executing \InitNR, and we know that the token on node $(i+1,j)$
belongs to an agent $b$ that dies before $a$ started executing
\InitNR. Assume there is an agent $c$ that has finished executing
\InitNR on node $(i+1,j+2)$ and that neither $c$ has met any other
agent, nor $c$ has located the black hole, nor $c$ has started
executing \BlackHoleInNextRing. Note that this implies that $c$ has
started executing \InitNR on ring $i+1$.  If this is the first time
$a$ performs \InitNR, then $b$ died while performing \FR on ring
$i+1$, and $c$ also executed \FR on ring $i+1$. From
Lemma~\ref{lem-first-ring}, this implies that $c$ locates the black
hole while executing \FR. Consequently, $a$ and $c$ have already
respectively executed \NR on ring $i-1$ and on ring $i$. But in this
case, $c$ saw the token left by $b$ on ring $i+1$ while executing \NR
on ring $i$. Thus the sequence computed by $c$ contained at least one
$b_1$. Since $c$ has executed \NR on ring $i$, its sequence should
have been $(b_1t_1)^3(b_2t_2)^3$. But this means that $c$ saw at least
$6$ other tokens on the ring $i$ (three towers of $2$) after it sees
the token at position $(i+1,j)$. Since we have only three agents, this
is impossible.
\end{proof}

\begin{lemma}\label{lem-next-ring}
When an agent $a$ continue executing \NR on ring $i$ after it has
finished \InitNR, either $a$ dies entering the black hole, or $a$
locates the black hole, or $a$ does not see any token on ring $i+1$,
or $a$ has seen some token on ring $i+1$ (i.e., $sequence$ contains
$b_1$ or $b_2$), and then $a$ is in one of the following case and it
can detect in which case it is if from the sequence it constructed.
\begin{itemize}

\item there were two other agents executing \NR on ring $i$,
  the black hole is on ring $i+1$, and $a$ is the only agent that is
  still alive; $a$ locates the black hole.

\item there was another agent executing \NR on ring $i$, the black
  hole is on ring $i+1$, and $a$ is the only agent that is still
  alive; $a$ executes \BlackHoleInNextRing on ring $i$.

\item there is another agent on ring $i$, ring $i+1$ is safe and the
  tokens the agents see on ring $i+1$ enable the two agents to meet.


\item there is no other agent on ring $i$ and there is an agent that
  has executed \NR without dying on ring $i+1$ when $a$ has executed
  \NR on ring $i$; $a$ executes \NR on ring $i+1$.

\item there is no other agent on ring $i$, black hole is in ring
  $i+2$, and there are the tokens of one or two dead agents on ring
  $i+1$; $a$ executes \BlackHoleInNextRing on ring $i+1$ with its two
  tokens. 




\end{itemize}
\end{lemma}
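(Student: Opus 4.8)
The plan is to prove this lemma as a verification that Procedure \Analyze, applied to the $sequence$ that $a$ builds in the main loop of \NR, correctly recovers the global configuration and routes $a$ to the action claimed for each case. Since $a$ has survived the loop and (by the standing assumption in the remark opening this subsection) has not met another agent, I would first shrink the space of configurations that are consistent with this, and then for each surviving configuration compute the sequence $a$ produces and check that \Analyze's guards classify it correctly. I would also need to argue that the listed cases are exhaustive.

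The first step is to fix the possible global situations. By the synchronization lemma and Lemma~\ref{lem-nextring-emptyring}, when $a$ enters the loop every other alive agent is synchronized with $a$ and every token sitting on ring $i$ is held by an agent that owns it; by Lemma~\ref{lem-no-agent-below} no alive agent executing \NR lies immediately below $a$. I would then classify configurations by the number $r\in\{0,1,2\}$ of other alive agents executing \NR on ring $i$ and by the location of the black hole, which, since ring $i$ is safe, lies on ring $i+1$, on ring $i+2$, or on a ring too far to leave tokens that $a$ can read; Lemma~\ref{lem-first-ring} further guarantees that if two agents ever started on the black hole's ring, the homebase tokens there have already been cleaned, which constrains what $a$ can see. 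Finally, Lemma~\ref{lem-cautious-walk} fixes the shape of the tokens a dead agent leaves (one or two tokens, deposited to the North or West of the black hole), which is exactly what $a$ reads off ring $i+1$ during its South-checks.

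The second step is to compute the sequence in each configuration. The loop appends a $t_1$ or $t_2$ each of the six times $a$ lands on a token-bearing node of ring $i$ (three in each traversal phase, since $a$ drops its second token after $count=3$ and thereafter reads two tokens at a homebase), and appends a $b_1$ or $b_2$ whenever a South-check reveals tokens on ring $i+1$. The clean end-cases are that an agent alone on ring $i$ with nothing below produces a sequence containing no $b$, and that a single agent on ring $i$ shadowing a single alive agent on ring $i+1$ produces exactly $(b_1t_1)^3(b_2t_2)^3$; both are caught by the first guard of \Analyze and send $a$ South to continue \NR on the safe ring $i+1$. For the remaining configurations I would show that three features separate them—the presence of any $b$, the number of $t_2$, and whether two $t$ ever occur consecutively—and match these to the branch order of \Analyze: a black hole on ring $i+1$ (with one or two agents already dead on ring $i$) gives fewer than three $t_2$ (branch 2), where the "only one token" sub-test further splits two dead agents (so $a$ walks East to the unique two-token node and marks the black hole) from one dead agent (so $a$ runs \BlackHoleInNextRing on ring $i$); two live agents on ring $i$ shadowing a third agent on the safe ring $i+1$ give at least three $t_2$ together with a consecutive-$t$ pattern (branch 3, meet); and a black hole on ring $i+2$ with dead agents' tokens on ring $i+1$ gives at least three $t_2$ but strictly alternating $t$'s and a $b$-pattern different from $b_1^{(3)}b_2^{(3)}$ (branch 4, \BlackHoleInNextRing on ring $i+1$). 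I would then verify that each branch's action coincides with the conclusion claimed for its case.

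I expect the main obstacle to be separating the two "at least three $t_2$" configurations—a second live agent on ring $i$ (branch 3) versus a black hole on ring $i+2$ (branch 4)—since the sole discriminator is whether two $t$ symbols ever appear consecutively. Proving this cleanly requires tracking, big-step by big-step, where the synchronized agents deposit and pick up tokens, and invoking Lemma~\ref{lem-cautious-walk} to pin the relative positions of a dead agent's North-of-black-hole token and its homebase token, so as to show that dead-agent tokens alone can never manufacture a consecutive-$t$ reading while a genuine second live agent always does. The accompanying bookkeeping—confirming that the sequence has length at most $24$ and that its six $t$-symbols split three-and-three across the two traversal phases—is routine but is what underpins every sub-case.
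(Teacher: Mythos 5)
Your overall strategy is the same as the paper's: enumerate the global configurations by the number of other agents executing \NR on ring $i$ and by the position of the black hole, compute the sequence $a$ builds in each configuration, and check that the guards of \Analyze route $a$ to the claimed action. The supporting lemmas you invoke (the synchronization lemma, Lemma~\ref{lem-nextring-emptyring}, Lemma~\ref{lem-no-agent-below}, Lemma~\ref{lem-cautious-walk}, Lemma~\ref{lem-first-ring}) are the ones the paper uses, and you correctly identify that the delicate point is separating the configurations whose sequences contain at least three $t_2$.

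There is, however, one cell of your own classification grid that your detailed discussion skips, and it is exactly the cell on which your proposed discriminator fails: \emph{two} live agents $a$ and $b$ executing \NR on ring $i$ while the black hole is on ring $i+2$ (so the third agent $c$ dies, or has already died, exploring ring $i+1$). You assign ``black hole on ring $i+2$'' exclusively to the case where $a$ is alone on ring $i$, and you assert that ``a genuine second live agent always'' produces a consecutive-$t$ reading. That assertion is false: in the configuration just described, when $a$ reaches the node above the black hole only after passing the homebase of $b$, the paper computes $a$'s sequence as $(b_1t_1)^3(b_1t_2)^3$ and $b$'s as $(b_1t_1)^3(b_1t_2)^3$ or $t_1(b_1t_1)^2(b_1t_2)^3$ --- at least three $t_2$ and \emph{no} two consecutive $t$'s, despite a second live agent being present on ring $i$. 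If you tried to prove your dichotomy as stated you would get stuck here. The resolution is not to repair the dichotomy but to observe that the fallback branch of \Analyze is still correct in this configuration: both surviving agents conclude that ring $i+1$ is safe and execute \BlackHoleInNextRing on ring $i+1$, which is the right action since the black hole is on ring $i+2$ (they do not meet, but they do not need to). Once you add this sub-case --- and, symmetrically, the sub-case of that same configuration in which $a$ passes the node above the black hole before the homebase of $b$, where consecutive $t$'s do appear and the agents meet --- the rest of your plan goes through essentially as in the paper.
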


\begin{proof}

Consider an agent $a$ that starts executing \NR on ring $i$ at time
$t$; note that from Lemma~\ref{lem-nextring-emptyring}, ring $i$ is
safe. If $a$ does not see any token on ring $i+1$, and if ring $i+1$
is safe, $a$ executes \NR on ring $i+1$ once it has terminated
executing \NR on ring $i$.

\begin{case}
The three agents start executing \NR on  ring $i$ at time $t$ and the
black hole is on ring $i+1$. 
\end{case}

Let $v$ be the node on the $North$ of the black hole, and $u$ the node
on the $East$ of the black hole. Without loss of generality, assume
that $c$ visits $v$ before $b$, and that $b$ visits $v$ before
$a$. Then, $c$ dies leaving one or two tokens on $v$, and $b$ dies
leaving one token on $u$. When $a$ arrives at $v$, if $c$ left two
tokens on $v$, $b$ has located the black hole. Otherwise, $v$
continues the execution of \NR without entering the black hole (it has
set its variable $danger$ to $true$). During the execution of \NR on
ring $i$, $a$ visits successively the homebase of $b$, the homebase of
$c$, $u$, $v$, the homebase of $a$, the homebase of $b$, the homebase
of $c$. It puts its second token on $v$ when it arrives in $v$ the
first time, leaving a black hole configuration. Thus the sequence $a$
has constructed is $t_1, t_1, t_1, b_1, t_1, t_1, t_1$. In this case,
$a$ goes to $v$ (this is the only node with two tokens) and locates
the black hole.

\begin{case}\label{case-2agents}
Ring $i+1$ is safe, two agents $a$ and $b$ start executing \NR on ring
$i$ at time $t$. 
\end{case}

Note that $a$ and $b$ cannot die while executing \NR on ring $i$.

First, assume that ring $i+2$ is also safe. Then the tokens $a$ and $b$
see on ring $i+1$ belong to agent $c$ that is alive all along the
execution of \NR on ring $i$ by agent $a$ and $b$. Since rings $i+1$
and $i+2$ are safe, agent $c$ is either executing \FR or \NR.  Suppose
first that $a$ and $b$ are executing \NR for the first time, i.e.,
they have just finished executing \FR and thus they have performed
exactly $3n$ big-steps. During these first $3n$ big-steps $c$ has visited its
homebase three times and it has put a second token on its homebase
when $a$ and $b$ start executing \NR. Moreover, it takes $3n$ big-steps to
$a$ and $b$ to execute \NR. During these $3n$ more big-steps, $c$ visits
its homebase $3$ times and put a second token on its homebase at the
end of these $3n$ big-steps. Consequently, during the whole execution of
\NR by $a$ and $b$, $c$ does not move its tokens. Suppose that $c$ is
in the middle of the execution of \NR on ring $i+1$ when $a$ and $b$
start the execution of \NR on ring $i$. Then $c$ has just put two
tokens on its homebase, and during the next $3n$ big-steps its tokens will
not move. If $c$ has started executing \NR on ring $i+1$ when $a$ and
$b$ start the execution of \NR on ring $i+1$, then during the $3n$
big-steps it takes $a$ and $b$ to perform \NR on ring $i$, $c$ visits its
homebase exactly three times and does not put a second token on its
homebase before the moment $a$ and $b$ have finished executing \NR on
ring $i$.

Without loss of generality, assume that $a$ visits the homebase of $c$
before $b$.  Consequently, when executing \NR, $a$ visits successively
the homebases of $c$, $b$, $a$, $c$, $b$, $a$, $c$, $b$, $a$, while
$b$ visits successively the homebases of $a$, $c$, $b$, $a$, $c$, $b$,
$a$, $c$, $b$. Each time $a$ or $b$ visits the homebase of $c$, they
either always see one token, or always see two tokens.

The first three homebases on ring $i$ that $a$ or $b$ sees contains
one token, and both $a$ and $b$ add a token on the third homebase on
ring $i$ they see. Thus, the last third homebases $a$ and $b$ visit on
ring $i$ contain two tokens. Consequently, the sequence of $a$ is
$tb_1b_1tb_1b_2tb_2b_2$, while the sequence of $b$ is
$b_1tb_1b_1tb_2b_2tb_2$ where $t$ is either $t_1$ or $t_2$. In this
case, $b$ waits on its homebase while $a$ moves along ring $i$ to meet
$b$ on its homebase. 

Suppose now that the black hole is in ring $i+2$. Since the agents $a$
and $b$ some tokens on ring $i+1$, it implies that $c$ has executed
\NR on ring $i+1$. We know that $c$ has started executing \NR on ring
$i+1$ at time $t$ or before. In any case, $c$ is either dead before
time $t$, or it dies during the first $n$ big-steps of \NR, i.e., before
it comes back to its homebase. Once $c$ is dead, there are two tokens
left on ring $i+1$ (they may be both on the homebase of $c$ if the
black hole is the node below). Without loss of generality, assume that
$a$ visits the homebase of $c$ before $b$. Let $v$ be the node where
is the second token of $c$, i.e., the node on top of the black
hole. 

If $a$ visits $v$ before it visits the homebase of $b$, then $c$ is
dead before $a$ and $b$ visit $c$ and the sequence of $a$ is $b_1^2t_1
^2b_1^2t_1t_2b_1^2t_2^2$ (or $b_2t_1 ^2b_2t_1t_2b_2t_2^2$ if $v$ is
the homebase of $c$) while the sequence of $b$ is $t_1
b_1^2t_1^2b_1^2t_2^2b_1^2t_2$ (or $t_1b_2t_1^2b_2t_2^2b_2t_2$ if $v$
is the homebase of $c$). In this case, $b$ waits on its homebase,
while $a$ moves on ring $i$ to meet $b$ on its homebase. 

If $a$ visits $v$ after it visits the homebase of $b$, then the
sequence of $a$ is $(b_1t_1)^3(b_1t_2)^3$.  Note that if $c$ has
started \NR on ring $i$ at time $t$, the first time $b$ visits $v$,
$c$ is not yet dead. Thus, the sequence of $b$ is either
$(b_1t_1)^3(b_1t_2)^3$, or $t_1(b_1t_1)^2(b_1t_2)^3$.  In this case,
both agents execute \BlackHoleInNextRing.

\begin{case}\label{case-2-plus1}
The black hole is in ring $i+1$, two agents $a$ and $b$
start executing \NR on ring $i$ at time $t$. 
\end{case}

Let $v$ be the node on the $North$ of the black hole and assume
without loss of generality that $b$ visits $v$ before $a$.

First suppose that no agent has explored ring $i+1$ yet. If $b$ has
started \NR on $v$, $b$ dies after its first move of \NR. In that
case, the first time $a$ sees some tokens while executing \NR is on
$v$ and it sees two tokens: $a$ locates the black hole. Otherwise, $a$
first sees the homebase token of $b$, and then the token $b$ left on
$v$. In this case, $a$ dies entering the black hole from the West.

Let $c$ be the third agent and assume that $c$ has already executed
\FR on ring $i+1$, or \NR on ring $i$. If $c$ has executed \NR on ring
$i$, it died leaving its tokens on ring $i$; but in this case, $a$ and
$b$ have executed \NR on ring $i-1$ and they have seen the tokens $c$
left. Thus, it implies that $c$ died while executing \FR. 

If $c$ died on its first move, then it left its two tokens on its
homebase, and thus $b$ dies entering the black hole from $v$, but it
leaves a black hole configuration, and $a$ will locate the black hole
while executing \NR. Otherwise, let $u$ be the node on the West of the
black hole. Then $b$ dies entering the black hole from $v$ and $a$ sees
one token on $u$ and one token on $v$ and thus it does not enter the
black hole. If we consider the tokens $a$ sees on ring $i$ while
executing \NR, $a$ visits successively the homebase of $b$, $v$, the
homebase of $a$, the homebase of $b$, $v$ and the homebase of $a$. It
adds a second token on its homebase the first time it visits it. Thus,
the sequence $a$ computed contained less than three $t_2$, and in this
case, $a$ will execute \BlackHoleInNextRing.

\begin{case}
Ring $i+1$ is safe, $a$ is alone in its ring and ring $i+2$ is safe.
\end{case}

Note that there cannot be two agents executing \NR on ring $i+1$ while
$a$ is executing \NR on ring $i$. Indeed, suppose $a$ has executed \NR
$q$ times, i.e., $a$ has performed $6(q+1)n$ big-steps, then if two agents
have started in the same ring, they have executed \FR once and \NR
$(12q+1)$ times, and thus they cannot be in the ring below $a$. 

Since both rings $i+1$ and $i+2$ are safe, the tokens $a$ sees on ring
$i$ are homebase tokens. If there is an agent $b$ that has executed
\NR on ring $i+1$ while $a$ is executing \NR on ring $i$, then between
two times $a$ goes back to its homebase, it sees the base of
$b$. Since the agents are synchronized, the third time $a$ is on its
homebase, $b$ is also on its homebase and both of them put a second
token on their homebases. Moreover, from previous lemmas, we know that
the two homebases of $a$ and $b$ do not form a black hole
configuration. Thus the sequence computed by $a$ is
$(b_1t_1)^3(b_2t_2)^3$. In this case, $a$ executes \NR on ring $i+1$.

\begin{case}
The black hole is in ring $i+1$ and $a$ is alone
in its ring.   
\end{case}

If $a$ is the first agent to explore ring $i+1$, $a$ dies entering the
black hole from the North.

If an agent $b$ has explored ring $i+1$ before $a$, then $b$ explored
the ring using \FR, because otherwise, $b$ would have left tokens on
ring $i$, and $a$ would have seen them while executing \NR on ring
$i-1$. In any case, $a$ dies entering the black hole from the North.

If two agents $b$ and $c$ have already explored ring $i+1$, they have
explored it using \FR and the black hole has been found.  

\begin{case}
Ring $i+1$ is safe, $a$ is alone in its ring, and the black hole is in
ring $i+2$. 
\end{case}

As before, we know it is not possible that two agents start executing
\NR on ring $i+1$ at time $t$. 

Since $a$ sees some tokens on ring $i+1$, we know that at least one
agent has started executing \NR on ring $i+1$ at time $t$ or before. 

First suppose that there is exactly one agent $b$ that has started
executing \NR on ring $i+1$ before time $t$, i.e, $b$ died while $a$
was executing \NR on ring $i-1$ or before. From the previous case, we
know that $b$ died entering the black hole from the $North$, leaving
one or two tokens on the node $v$ on $North$ of the black hole. If $b$
died leaving its two tokens on $v$, then during the execution of \NR,
$a$ visits $v$ and its homebase six times, and the sequence computed
by $v$ is $(b_2t_1)^3(b_2t_2)^3$. If $b$ died leaving only of of its
tokens on $v$, $a$ visits $v$, the homebase of $b$ and its homebase
six times, and the sequence it computes is
$(b_1b_1t_1)^3(b_1b_1t_2)^3$. In any case, $a$ executes
\BlackHoleInNextRing on ring $i+1$ once it has finished executing \NR
on ring $i$.

Suppose now that no agent executed \NR on ring $i+1$ before time $t$,
and that exactly one agent $b$ starts executing \NR on ring $i+1$ at
time $t$. First suppose that $b$ dies leaving two tokens on a vertex
$v$; then it implies that the black hole is on the node on $South$ of
$v$. Note that in this case, $b$ dies while executing \InitNR, and we
know that if $a$ starts \InitNR on node $(i,j)$ and $b$ starts \InitNR
on node $(i+1,j-1)$, or $(i+1,j)$, $a$ and $b$ meet. Moreover, since
$b$ is dead when $a$ finished \InitNR, we know from previous lemmas
that the two tokens of $b$ and the homebase token of $a$ do not form a
black hole configuration.  In this case, when executing \NR, $a$ never
sees a black hole configuration, $a$ visits six times $v$ and its
homebase, and the sequence it computes is $(b_2t_1)^3(b_2t_2)^3$. In
this case, $a$ executes \BlackHoleInNextRing on ring $i+1$ once it has
finished executing \NR on ring $i$.

Suppose now that $b$ does not die while it executes \InitNR, i.e.,
when $b$ dies, it has left one token on its homebase, and one token on
the node $v$ that is $North$ of the black hole. Note that the first
time $a$ visits $v$, $b$ may still be alive (if $a$ visits $v$ before
it visits the homebase of $b$), but the second time $a$ reaches $v$,
$b$ is dead and $a$ sees two tokens on $v$. During the execution of
\NR, $a$ visits successively six times $v$, the homebase of $b$, and
its homebase, or the homebase of $b$, $v$, and its homebase. In the
first case, the sequence computed by $a$ is
$b_1t_1(b_1b_1t_1)^2(b_1b_1t_2)^3$ while in the second case, the
sequence of $a$ is $(b_1b_1t_1)^3(b_1b_1t_2)^3$.  Moreover, from
previous lemmas, we know that the homebase tokens of $a$ and $b$ do
not form a black hole configuration. However it is possible that the
homebase token of $a$ and the token left on $v$ form a black hole
configuration. But in this case, there is one token on $a$, one token
on $v$ and thus, $a$ does not visit the node $u$ below its homebase
and continues executing \NR. Note that there cannot be a token on node
$u$, because, it can only be the homebase token of $b$, but this is
impossible from Lemma~\ref{lem-no-agent-below}. Thus, $a$ executes
completely \NR on ring $i$, and then it executes \BlackHoleInNextRing
on ring $i+1$. 

Suppose now that two agents $b$ and $c$ have already \NR on ring
$i+1$, starting at time $t$ or before. Since $a$ sees some tokens on
ring $i+1$, at least one of these two agents has executed \NR on ring
$i+1$.  Again, we distinguish different cases: either $b$ and $c$
started executing \NR on ring $i+1$ at the same time $t' \leq t$, or
$b$ has executed \NR on ring $i$ at time $t' < t$ while $c$ has
started executing \NR on ring $i+1$ at time $t'' \leq t$.

Suppose that $b$ and $c$ started executing \NR on ring $i+1$ at the
same time $t' \leq t$. For the same reasons as before, we know that
$t' < t$. From Case~\ref{case-2-plus1}, we know that either the black
hole has been found, or both agents are dead. If the black hole has
been found, the first time $a$ visits the node on $North$ of the black
hole, it sees that the link going $South$ has been marked, and it
stops the algorithm. If both agents are dead, we know that there is
one token on ring $i+2$ that is on the node on the West of the black
hole, and there are three tokens on ring $i+1$: the homebase of $b$,
the homebase of $c$, and the node on the $North$ of the black
hole. From Lemma~\ref{lem-no-agent-below}, we know $a$ cannot see a
black hole configuration while executing \NR on ring $i$. The sequence
computed by $a$ is then $(b_1^3t_1)^3(b_1^3t_2)^3$ and once $a$ has
finished executing \NR, it executes \BlackHoleInNextRing on ring
$i+1$. 

Suppose now that $b$ has executed \NR on ring $i$ at time $t' < t$
while $c$ has started executing \NR on ring $i+1$ at time $t'' \leq
t$. In this case $c$ dies while executing \NR on ring $i$, and we know
that $b$ start executing \BlackHoleInNextRing on ring $i+1$ at time
$t^* \leq t$. In this case, $c$ has left either one or two tokens on
the node $v$ on the North of the black hole, and $b$ does not leave
any token on ring $i+1$, and dies leaving its two tokens on the node
$u$ on the West of the black hole, thus leaving a black hole
configuration. In this case, the only tokens $a$ sees on ring $i+1$
while executing \NR on ring $i$ are the homebase token of $c$ and the
token left by $c$ on $v$. Since $c$ is already dead when $a$ starts
executing \InitNR on ring $i$, we know from previous lemmas that $a$
cannot see any black hole configuration when executing \NR on ring
$i$. Thus the sequence computed by $a$ is
$(b_1b_1t_1)^3(b_1b_1t_2)^3$, and $a$ executes \BlackHoleInNextRing on
ring $i+1$ once it has finished executing \NR on ring $i$.
\end{proof}


An agent executing \BlackHoleInNextRing first moves to find a place
where there is no token. In the following lemma, we prove that in any
ring there is always such a place. 

\begin{lemma}
If the black hole has not been found yet, when an agent $a$ starts
executing \BlackHoleInNextRing on ring $i$, then there exists a node
on ring $i$ where there is no token.

\end{lemma}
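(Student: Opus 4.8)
The plan is to prove the statement by counting tokens, using as a skeleton the case analysis already established in Lemma~\ref{lem-next-ring} and in the descriptions of \InitNR and \OneTokenBelow for the situations in which an agent calls \BlackHoleInNextRing. The global invariant I would start from is that there are only three agents holding two tokens each, hence at most six tokens in the torus; and since an agent carries both of its own tokens at the moment it starts \BlackHoleInNextRing, at most four tokens lie on the nodes. Writing $n \geq 3$ for the number of nodes in a horizontal ring, it then suffices to show that strictly fewer than $n$ nodes of ring $i$ carry a token, so that the opening loop of \BlackHoleInNextRing (which walks East until it reaches a token-free node) terminates.

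The easy case is when at most one of the two other agents is dead. By the synchronization property (Property~\ref{pr:synch}) together with the assumption that no two agents have met, any still-alive agent is sitting on its own homebase with both its tokens on a ring different from ring $i$; otherwise it would be the agent carrying out \BlackHoleInNextRing, or the two agents would have met. Thus a living agent contributes no token to ring $i$, and by Lemma~\ref{lem-cautious-walk} the single dead agent leaves at most two tokens, occupying at most two distinct nodes of ring $i$. Since $n \geq 3$, a free node remains.

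The remaining case is when both other agents are dead, which is precisely the sole-survivor situation in which $a$ invokes \BlackHoleInNextRing. Here I would use the structural facts recalled in the algorithm's description and in Lemma~\ref{lem-cautious-walk}: at most one agent enters the black hole from the North, and by Lemma~\ref{lem-cautious-walk} it leaves at most two tokens on ring $i$ (one of them on the node North of the black hole); at most one agent enters from the West, and the token it drops just before dying lies on the node West of the black hole, which sits on ring $i+1$ and hence \emph{off} ring $i$. Counting the remaining homebase token of the West-dying agent, at most three tokens lie on ring $i$. Consequently, whenever $n \geq 4$ we already have at most three occupied nodes out of $n \geq 4$, and a free node exists.

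The main obstacle is the tight sub-case $n = 3$ with exactly three tokens sitting on three distinct nodes of the ring. I would rule this out by showing it never corresponds to a genuine call of \BlackHoleInNextRing. On a ring of length three the relevant distances are so short that an agent heading toward the node North of the black hole dies \emph{during} \InitNR rather than in the main walk of \NR; because \InitNR always moves the two tokens together, such an agent leaves \emph{two} tokens piled on the node North of the black hole. The next agent that reaches that node then sees two tokens and, by the test in \NR (resp.\ \InitNR), locates the black hole directly, so no \BlackHoleInNextRing call on a fully covered ring ever occurs. The delicate part of the write-up is exactly this verification: tracing, for $n = 3$, the synchronized moves of the two agents during \InitNR and the first steps of \NR, and checking in each relative placement of the two homebases and the black hole that either the surviving agent locates the black hole directly, or one of the three tokens is forced off ring $i$ (onto the node West of the black hole), leaving a token-free node on ring $i$.
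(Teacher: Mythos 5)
Your proposal follows essentially the same route as the paper: bound the number of token-bearing nodes of ring $i$ by three (two homebases plus the token left North of the black hole), dispose of $n\geq 4$ immediately, and rule out the tight $n=3$ configuration by observing that on a ring of length three the first agent to die does so during \InitNR with its two tokens piled on the node North of the black hole, so the surviving agent sees two tokens and locates the black hole before \BlackHoleInNextRing is ever invoked on a fully covered ring. The one caution is that your blanket claim that ``an agent heading toward the node North of the black hole dies during \InitNR'' is guaranteed only for at least one of the two agents (the paper handles this by naming the first agent to die, and must also treat the sub-cases where the two dead agents did not start \NR on ring $i$ simultaneously, or where one of them died during \FR, in which a homebase token never lands on ring $i$ at all); your announced verification over the relative placements of the homebases and the black hole is exactly the case analysis the paper carries out.
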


\begin{proof}

Recall that an agent $a$ is carrying its two tokens when it starts
\BlackHoleInNextRing on ring $i$ and it knows that ring $i$ is safe.


Note that on ring $i$, there can be at most $3$ nodes containing
tokens: two homebases, and one indicating the black hole is
South. Thus if the size of ring $i$ is at least $4$ we are done. 

Suppose now that the ring is of size $3$. Since $a$ is executing
\BlackHoleInNextRing, we know that either $a$ has executed \NR on ring
$i$, or on ring $i-1$. 

Suppose first that $a$ has started executing \NR on ring $i$ at time
$t$. Then we know from Lemma~\ref{lem-next-ring} that either the three
agents have executed \NR on ring $i$ at time $t$, or that another
agent $b$ have executed \NR on ring $i$ at time $t$ and that the third
agent $c$ was dead before time $t$.  In the first case, since the ring
is of size $3$, one agent died on its first move when executing
\InitNR, and in this case, the black hole has been found from
Lemma~\ref{lem-2-east}.  In the second case, we know from the proof of
Case~\ref{case-2agents} of Lemma~\ref{lem-next-ring} that $c$ died
while executing \FR, and thus its homebase token cannot be on ring
$i$.

Consequently, $a$ has executed \NR on ring $i-1$ at time $t$. It
implies that $a$ saw tokens of one or two dead agents during \InitNR,
or during the execution of \NR. Let $b$ and $c$ be the two dead agents
and assume that $c$ died before $b$. Since both $b$ and $c$ left their
tokens on ring $i$, both $b$ and $c$ died while executing \NR on ring
$i$. First suppose that $b$ and $c$ did not execute \NR on ring $i$
simultaneously. Then, when $c$ died, $b$ has not finished executing
\NR on ring $i-1$, and from Lemma~\ref{lem-next-ring}, $b$ saw the
tokens left by $c$, and $b$ has not executed \NR on ring $i$, but
\BlackHoleInNextRing on ring $i$. This implies that $b$ did not leave
its homebase token on ring $i$. Suppose now that $b$ and $c$ start
simultaneously the execution of \NR on ring $i$. If $c$ died on its
first move of \InitNR, then from Lemma~\ref{lem-2-east}, $b$ has
located the black hole. Otherwise, since the ring is of size $3$, $c$
died leaving its two tokens on its new homebase, and then $b$ locates
the black hole when it arrives in $c$. 
\end{proof}

\begin{lemma}
When an agent $a$ executes \BlackHoleInNextRing on ring $i$, at least
one agent is dead and the black hole is on ring $i+1$, and either $a$
locates the black hole, or $a$ died entering the black hole leaving a
black hole configuration.
\end{lemma}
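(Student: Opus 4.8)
The plan is to first record the preconditions under which \BlackHoleInNextRing is invoked, and then to analyze the procedure's traversal of the safe ring. By Lemma~\ref{lem-next-ring}, Lemma~\ref{lem-1-south} and the analysis of \OneTokenBelow, whenever $a$ starts \BlackHoleInNextRing on the ring I will call ring $i$, this ring is safe and the black hole lies on ring $i+1$; moreover, the deduction that the black hole is on ring $i+1$ always rests on $a$ having seen a token left by a dead agent, so at least one agent is dead. This immediately yields the first two assertions. The key additional fact I would extract from these lemmas together with Lemma~\ref{lem-cautious-walk} is a structural invariant: there is a token on the node $v=(i,j_0)$ that is North of the black hole $w=(i+1,j_0)$. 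I would verify this along each invocation path --- an agent dying while going $South$ during \NR, or during the $South$-check of \InitNR, always leaves a token on the node North of the black hole, while an agent dying going $East$ into $w$ during \NR can only do so after having seen the token at $v$ that triggered its $West$--$South$ check.

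Next I would show that $a$ safely reaches $v$ and behaves correctly there. Using the preceding lemma (existence of a token-free node on ring $i$), the opening loop \texttt{While you see some tokens go East} terminates at an empty node, so $a$ begins its traversal cleanly and approaches every marked node of ring $i$ from the West. I would then prove the traversal invariant: whenever $a$ stops at a node $(i,j)$ of ring $i$ bearing a token with $j\neq j_0$, that token is a homebase token and the node $(i+1,j-1)$ carries no token. Hence $a$ takes the else-branch, puts two tokens on $(i+1,j-1)$, steps $East$ to $(i+1,j)$, which is safe because $j\neq j_0$, survives, retrieves its tokens and continues $East$. This is the step where $a$ must not be fooled into marking a safe node.

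When $a$ finally reaches $v=(i,j_0)$ and moves $West$ then $South$ to the node $u=(i+1,j_0-1)$ (West of the black hole, hence safe and reachable), two cases arise. If $u$ carries a token --- left by an agent that died entering $w$ from the West --- then $a$ sees a black hole configuration and executes \texttt{Mark-All} to the $East$, which is correct since $u$ is immediately West of $w$. If $u$ is empty, $a$ puts two tokens on $u$ and steps $East$ into $w$; since $w$ is the black hole, $a$ dies, and its two tokens on $u$ together with the token on $v$ leave a black hole configuration at $w$. In either branch the conclusion holds, and since $v$ lies on the finite safe ring $i$ and carries a token, $a$ is guaranteed to reach it after finitely many big-steps; the case where a second agent is still alive is absorbed by the convention that the two agents call \BHStogether as soon as they meet.

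The main obstacle is the traversal invariant of the second paragraph, namely that $a$ never reads a spurious black hole configuration at a safe node. Establishing it requires ruling out that a homebase token on $(i,j)$ coexists with a token on $(i+1,j-1)$ for $j\neq j_0$. I would argue that the only tokens surviving on rings $i$ and $i+1$ are homebase tokens of dead agents and the walk tokens at $v$ and possibly $u$, so a false configuration at a safe node would force two homebases to form a black hole configuration --- exactly what the BHC-avoidance results (Lemma~\ref{lem-nobhconf} and the final claim of Lemma~\ref{lem-1-south}), combined with Lemma~\ref{lem-no-agent-below} and the synchronization of the agents, forbid. Care is also needed to check that $a$'s own temporary tokens on the South-West nodes are always picked up before the next check, so that they cannot manufacture a configuration $a$ subsequently misreads.
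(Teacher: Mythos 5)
Your proposal is correct and follows essentially the same route as the paper's proof: establish the preconditions (ring safe, black hole below, a dead agent's token North of the black hole) from Lemmas~\ref{lem-1-south} and \ref{lem-next-ring}, walk through the two branches of the procedure at each marked node, and rule out a spurious black-hole configuration at a safe node via the BHC-avoidance results (Lemmas~\ref{lem-1-south} and \ref{lem-nobhconf}). Your explicit traversal invariant and the remark about the agent's own temporary tokens being retrieved before the next check are just slightly more detailed renderings of the same argument.
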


\begin{proof}
We know that if $a$ executes \BlackHoleInNextRing on ring $i$, it has
executed \NR on ring $i$, or on ring $i-1$, and we know from
Lemmas~\ref{lem-1-south} and \ref{lem-next-ring} that ring $i$ is
safe, that the black hole is in ring $i+1$ and that $a$ saw tokens of
dead agents on ring $i$. Thus, at least one agent died while executing
\NR on ring $i$, and thus it left a token on top of the black hole.

Each time $a$ sees a token on node $(i,j)$, it goes to node
$(i+1,j-1)$. If $a$ does not see some tokens, it leaves its two tokens on
the node an enter node $(i+1,j)$ from the $East$.  Thus, if $a$ dies,
it leaves a black hole configuration. If $a$ sees some tokens on node
$(i+1,j-1)$, $a$ marks all links leading to the node $(i+1,j)$ as
leading to the black hole. 

Since we know that there is a token on the node on the North of the
black hole, we are sure that either $a$ dies, or $a$ marks links as
leading to the black hole.

Suppose $a$ is wrong when marking links going to node $(i+1,j)$ as
leading to the black hole. Since we know the black hole is in ring
$i+1$, this implies that the tokens on node $(i,j)$ and $(i+1,j-1)$
are respectively the homebase tokens of two dead agents $b$ and $c$.
This implies that $c$ died while executing \FR (otherwise its homebase
token would be on ring $i+1$), and that $b$ died while executing
\NR. Thus the token on node $(i+1,j-1)$ was already there when $b$
executed \InitNR on ring $i$. But from Lemmas~\ref{lem-1-south} and
\ref{lem-nobhconf}, we know it is impossible. Thus, if $a$ marks
links leading to node $(i+1,j)$, the black hole is indeed in node $(i+1,j)$. 
\end{proof}

\begin{lemma}
Suppose that two agents $b$ and $c$ are dead, and that the black hole
is on node $(i+1,j)$. Then, there are one or two tokens on both nodes
$(i,j)$ and $(i+1,j-1)$. Moreover, the agent $a$ that is still alive
never enters the black hole and eventually locates it.
\end{lemma}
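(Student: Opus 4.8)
The plan is to work with the three distinguished nodes around the black hole $w=(i+1,j)$: its North neighbor $p=(i,j)$ and its West neighbor $q=(i+1,j-1)$, and to show that the two deaths deposit tokens so that a black hole configuration (BHC) forms at $w$. First I would recall, from the proof of Lemma~\ref{lem-cautious-walk} and the per-procedure analysis, that the only moves by which an agent can enter $w$ are a $South$ move from $p$ or an $East$ move from $q$; moreover a $South$-entry always leaves one or two of the dying agent's tokens on $p$, and an $East$-entry always leaves one or two of its tokens on $q$ (this is exactly the content of the ``if you die, there is a token $North$ and $West$ of the Black Hole'' invariants in \NR, \InitNR, \OneTokenBelow and \BlackHoleInNextRing). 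Hence each of the dead agents $b,c$ contributes a token either to $p$ or to $q$.

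The key step is to prove that at most one agent enters $w$ from $p$ and at most one enters from $q$; since there are exactly two deaths and each is of one of these two types, this forces exactly one of each, so both $p$ and $q$ carry one or two tokens, which is precisely a BHC at $w$. For the North side I would argue that a $South$ move into $w$ is made only (i) in the main loop of \NR, and there only from a node reached while seeing no token on its own ring, or (ii) at the very start of \InitNR, from a freshly adopted homebase on which the agent just placed its two tokens. After the first $South$-death there is a token on $p$, so in case (i) the next agent arriving at $p$ sees it and performs the safe $West$--$South$ check instead of going $South$, while case (ii) is excluded by Lemma~\ref{lem-nextring-emptyring}, since an agent never starts \NR (hence \InitNR) on a ring whose token-bearing nodes are not occupied by a living agent. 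For the West side the argument is symmetric: an $East$ move into $w$ is preceded by an inspection of $q$ (explicitly in \BlackHoleInNextRing, and through the configuration the agent has already detected in the $West$--$South$ step of \NR and in \InitNR/\OneTokenBelow), so once $q$ carries a token the would-be entrant detects the BHC and marks the links to $w$ instead of entering; the remaining possibility, an $East$-death during \FR, is ruled out because by Lemma~\ref{lem-first-ring} a black hole lying on an agent's first ring is located without a second agent dying.

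It then remains to handle the surviving agent $a$. Because both $p$ and $q$ carry tokens, the same guards just described show that $a$ can never enter $w$: it never goes $South$ from $p$ (it sees the token and switches to the $West$--$South$ check), and it never goes $East$ into $w$ from $q$ (it detects the BHC first), so $a$ never dies. Finally, to show $a$ locates $w$, I would invoke Lemmas~\ref{lem-1-south} and~\ref{lem-next-ring}: having seen the tokens left on ring $i$ (and ring $i+1$) by the dead agents, $a$ either identifies $w$ directly or determines that the black hole lies on the ring it is about to explore and runs \BlackHoleInNextRing on ring $i$; by the preceding lemma this procedure ends either with $a$ dying and leaving a BHC or with $a$ marking all links to the black hole, and since we have already shown $a$ cannot die, $a$ locates $w$. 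The main obstacle is the bookkeeping of the ``at most one death per direction'' claim, which must be checked uniformly across all the procedures that contain a move into a possibly-dangerous node and which relies essentially on the synchronization guaranteed by Lemma~\ref{lem-nextring-emptyring} to forbid a second agent from re-using the North neighbor as a homebase.
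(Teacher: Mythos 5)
Your overall architecture is sound, and your treatment of the surviving agent $a$ (both $p=(i,j)$ and $q=(i+1,j-1)$ carry tokens, so the guards in \NR and \BlackHoleInNextRing prevent $a$ from entering $w=(i+1,j)$, and Lemmas~\ref{lem-1-south} and~\ref{lem-next-ring} funnel $a$ into \BlackHoleInNextRing on ring $i$, which by the preceding lemma ends with $a$ marking the links) matches the paper's second half almost exactly. The difference, and the gap, is in how you obtain the token placement. The paper does not prove an ``at most one death per direction'' principle; it uses Lemma~\ref{lem-next-ring} and the \BlackHoleInNextRing lemma to enumerate the only three ways two agents can die with the black hole at $(i+1,j)$ --- ($c$ in \FR on ring $i+1$ and $b$ in \NR on ring $i$; both in \NR on ring $i$; $c$ in \NR and $b$ in \BlackHoleInNextRing on ring $i$) --- and reads off in each case that one token lands on $q$ and one on $p$. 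You instead posit the directional claim as a new standalone step, and your justification for it is where the argument breaks down.

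Concretely: your taxonomy of $South$-entries into $w$ is incomplete. Besides the main loop of \NR and the very first move of \InitNR, an agent can enter $w$ from the $North$ in the middle of \InitNR (the $South$ move from the relocated homebase $(i,j+1)$ in the $(0,0)$ and $(2,2)$ branches), in \OneTokenBelow (two further $South$ moves from the nodes where the agent has just re-parked its two tokens), and in the $count>3$ iterations of \NR where the $West$--$South$ check is skipped. More importantly, your exclusion argument only covers \emph{sequential} deaths: Lemma~\ref{lem-nextring-emptyring} forbids an agent from \emph{starting} \NR while an orphaned token sits on ring $i$, but it says nothing about two agents that are already executing \InitNR/\NR on ring $i$ simultaneously when the first one dies. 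In that simultaneous case the second $South$-death is excluded not by your guard but by Lemma~\ref{lem-2-east}: the would-be second victim moves $East$ onto the first victim's two abandoned tokens and immediately executes \texttt{Mark-All}($South$), or (in the one-token case) is diverted into the $West$--$South$ check and dies, if at all, entering from the $East$. You would need to run this analysis branch by branch --- which is essentially reproving the case classification that Lemma~\ref{lem-next-ring} already provides and that the paper simply cites. As written, the step you yourself flag as ``the main obstacle'' is the proof, and the sub-arguments you offer for it do not close it.
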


\begin{proof}
Without loss of generality, assume that $c$ died before $b$.  We know
that agents $b$ and $c$ died in one of the following way:
\begin{itemize}
\item $c$ died executing \FR on ring $i+1$, and $b$ died
  executing \NR on ring $i$,
\item $b$ and $c$ died executing \NR on ring $i$,
\item $c$ died executing \NR on ring $i$ and $b$ died
  executing \BlackHoleInNextRing on ring $i$. 
\end{itemize}

In the first case, the first agent let one or two tokens on node
$(i+1,j-1)$ and the other agent let one or two tokens on node $(i,j)$.
In the second case, we know from Lemma~\ref{lem-next-ring} that it
implies that the two agents were executing \NR on ring $i$
simultaneously. Thus, $c$ died while performing its special cautious
walk and left a token on node $(i,j)$, and $b$ died when during the
execution of \NR on ring $i$, it saw the token left by $c$ and it
enters node $(i+1,j)$ after it left a token on node $(i+1,j-1)$.  In
the last case, from the previous lemma, we know that $b$ died leaving
a black hole configuration.

If $a$ is not executing \NR on ring $i$ with $b$ when $b$ dies, then
we know that $a$ executes eventually \NR on ring $i-1$.  When $a$
executes \NR on ring $i-1$, it sees some tokens on ring $i$, and from
Lemmas~\ref{lem-1-south} and \ref{lem-next-ring}, we know that $a$
executes \BlackHoleInNextRing on ring $i$. From the previous lemma, we
know that $a$ locates the black hole.

Suppose now that $a$ is executing \NR on ring $i$ with $b$ when $b$
dies. Once $b$ is dead, there are some tokens on nodes $(i,j)$ and
$(i+1,j-1)$. Thus, either $a$ locates the black hole, or it continues
to execute \NR avoiding the node $(i+1,j)$ (its variable $danger$ is
 $true$). We know from Lemma~\ref{lem-next-ring} that once $a$ has
finished executing \NR on ring $i$, $a$ executes \BlackHoleInNextRing
on ring $i$. From the previous lemma, $a$ locates the black hole. 
\end{proof}

\begin{theorem}\label{th:algo32}
Algorithm \emph{BHS-Torus-32} correctly solves the BHS problem in any oriented torus with exactly three agents carrying two tokens each.
\end{theorem}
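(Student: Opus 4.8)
The plan is to assemble the preceding lemmas into a proof that the scheme meets both requirements of a correct BHS scheme: (a)~at least one agent survives, and (b)~at the end exactly the links incident to the black hole are marked dangerous. First I would dispose of the two easy cases. If at any moment two agents meet, then by the Remark following Procedure~\ref{proc:BHS-with-colocated-agents} they switch to \BHStogether and locate the black hole with the standard two-agent time-out search, so I may assume no two agents ever meet. If two or three agents start on the ring that contains the black hole, then by Lemma~\ref{lem-first-ring} the black hole is located, and all homebase tokens are cleaned up, during \FR; hence I may further assume that at most one agent starts on the black hole's ring and that every other agent finishes \FR on its own homebase without marking any link.

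The heart of the argument is an invariant maintained along the synchronized timeline. By Property~\ref{pr:synch} and the synchronization lemma, whenever one surviving agent starts \InitNR, every other surviving agent is simultaneously starting \InitNR, or starting \BlackHoleInNextRing, or waiting on its homebase with both tokens. Under this hypothesis Lemma~\ref{lem-nextring-emptyring} guarantees that when an agent begins \NR on a ring $i$, ring $i$ is safe and every token on it sits under an agent; Lemmas~\ref{lem-2-east}, \ref{lem-moving-tokens}, \ref{lem-1-south} and \ref{lem-nobhconf} then show that the execution of \InitNR always terminates in one of four mutually exhaustive ways: two agents meet, the black hole is located, the agent dies leaving a black hole configuration \emph{at the true black hole}, or the agent safely continues (to \NR or \BlackHoleInNextRing) \emph{without} its homebase token forming a spurious black hole configuration with a token on ring $i+1$. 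This last clause is exactly what \InitNR was engineered to enforce, and it is what prevents a survivor from ever marking a safe node. Lemma~\ref{lem-no-agent-below} then rules out interference from an agent located directly below, and Lemma~\ref{lem-next-ring} closes the case analysis for the main loop of \NR, showing that via \Analyze an agent either dies entering the black hole, locates it, meets another agent, advances to \NR on ring $i+1$, or correctly switches to \BlackHoleInNextRing.

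From this invariant the safety part follows: an agent visits an unexplored node only while going South (in \NR or \InitNR) or East (in \FR or \BlackHoleInNextRing), and Lemma~\ref{lem-cautious-walk} shows that each such excursion leaves one or two of the agent's \emph{own} tokens on the node it departs. Thus the first agent to enter the black hole from the North leaves tokens on the node just North of it, and the first to enter from the West leaves tokens on the node just West of it; by the cautious-walk discipline (and the $danger$ flag in \NR) no later agent follows through an already token-bearing neighbour in the same direction. Hence at most one agent dies going South and at most one going East, so at least one of the three agents survives. For liveness and correctness I would then invoke the two \BlackHoleInNextRing lemmas together with the final lemma: once at most two agents are dead, the survivor explores ring by ring and, the torus being finite, reaches the black hole's ring after finitely many iterations; there it either detects the black hole configuration and locates the black hole directly, or is driven into \BlackHoleInNextRing and locates it there. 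In every branch where links are marked, Lemmas~\ref{lem-2-east}, \ref{lem-next-ring} and the \BlackHoleInNextRing correctness lemma certify that the marked node is genuinely the black hole, and \texttt{Mark-All} marks exactly its four incident links and nothing else.

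I expect the main obstacle to be bookkeeping rather than a single hard idea: one must check that the four outcomes of \InitNR and the branches of \Analyze stay jointly exhaustive under every relative phase of the three (possibly partially dead) agents, and in particular that no safe node ever acquires a black hole configuration, the delicate point already isolated in Lemmas~\ref{lem-1-south} and \ref{lem-nobhconf}. Since all of this casework has been discharged in the preceding lemmas, the theorem itself reduces to assembling them along the synchronized timeline and verifying that the surviving agent's marking is sound.
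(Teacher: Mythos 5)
Your proposal is correct and follows essentially the same route as the paper: the paper's own proof of Theorem~\ref{th:algo32} is likewise a short assembly of the preceding lemmas (marking soundness established lemma by lemma, the first-ring case via Lemma~\ref{lem-first-ring}, and then the ring-by-ring argument that at most two agents die while the survivor is driven to locate the black hole, directly or through \BlackHoleInNextRing). Your version simply makes explicit the synchronization invariant and the ``at most one death per direction'' safety count that the paper leaves implicit in its final two lemmas.
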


\begin{proof}
We proved that if an agent reports it found the black hole, then it is
always correct. 

First, we know that if two or three agents start in the horizontal
ring containing the black hole, then one agent locates the black hole.

Otherwise, the agents execute \NR on consecutive rings until they see
some tokens belonging to dead agents. Thus, at least one agent will
die entering the black hole. Then, either a second agent locates the
black hole, or it also dies entering the black hole, leaving a
black hole configuration. In this last case, we know the third agent
will not be killed and will eventually locates the black hole. 
\end{proof}

\section{Conclusions}

We showed that at least three agents are needed to solve BHS in oriented torus networks and these three agents must carry at least two movable tokens each for marking the nodes.
The algorithm BHS-Torus-32 uses the smallest possible team of agents (i.e., 3) carrying the minimum number of tokens (i.e., 2) and thus, it is optimal in terms of resource requirements. However, on the downside this algorithm works only for $k=3$ agents. In combination with algorithm BHS-Torus-42 (which solves the problem for any $k\geq 4$ agents carrying $2$ tokens each), these algorithms can solve the black hole search problem for any $k\geq 3$, if the value of $k$ is known. Unfortunately, algorithms BHS-Torus-32 and BHS-Torus-42 cannot be combined to give an algorithm for solving the BHS problem for any $k\geq 3$ agents without the knowledge of $k$: Algorithm BHS-Torus-32 for $3$ agents will not correctly locate the black hole if the agents are more than $3$, while in the algorithm BHS-Torus-42, $3$ of the agents may fall into the black hole. 
Hence, whether the problem can be solved for $k\geq 3$ agents equipped with $2$ tokens, without any knowledge of $k$, remains an interesting (and we believe challenging) open question. Another interesting open problem is to determine the minimum size of a team of agents carrying one token each, that can solve the BHS problem. Note that the impossibility result for three agents carrying one token each, does not immediately generalize to the case of $4$ or more agents, as in those cases, we cannot exclude the possibility that two surviving agents manage to meet.

It is interesting to compare our results with the situation in a synchronous, oriented, anonymous ring, which can be seen as a one dimensional torus (\cite{cdlm11}): The minimum trade-offs between the number of agents and the number of tokens, in this case, are $4$ agents with $2$ unmovable tokens or $3$ agents with $1$ movable token each. Additionally, in an \emph{unoriented} ring the minimum trade-offs are $5$ agents with $2$ unmovable tokens or $3$ agents with $1$ movable token each whereas the situation in an unoriented torus has not been studied. Hence another open problem is solving the BHS problem in a $d$-dimensional torus, $d>3$, as well as in other network topologies.



\bibliographystyle{abbrv}
\bibliography{biblio-tot}

\begin{thebibliography}{10}

\bibitem{bs94}
M.~A. Bender and D.~Slonim.
\newblock The power of team exploration: Two robots can learn unlabeled
  directed graphs.
\newblock In {\em Proceedings of 35th Annual Symposium on Foundations of
  Computer Science}, pages 75--85, 1994.

\bibitem{cdlm11}
J.~Chalopin, S.~Das, A.~Labourel, and E.~Markou.
\newblock Tight bounds for scattered black hole search in a ring.
\newblock In {\em Proceedings of 18th International Colloquium on Structural
  Information and Communication Complexity}, lncs 6796, pages 186--197, 2011.

\bibitem{cds07}
J.~Chalopin, S.~Das, and N.~Santoro.
\newblock Rendezvous of mobile agents in unknown graphs with faulty links.
\newblock In {\em Proceedings of 21st International Conference on Distributed
  Computing}, pages 108--122, 2007.

\bibitem{ckr06}
C.~Cooper, R.~Klasing, and T.~Radzik.
\newblock Searching for black-hole faults in a network using multiple agents.
\newblock In {\em Proceedings of 10th International Conference on Principles of
  Distributed Systems}, LNCS 4305, pages 320--332, 2006.

\bibitem{ckr10}
C.~Cooper, R.~Klasing, and T.~Radzik.
\newblock Locating and repairing faults in a network with mobile agents.
\newblock {\em Theoretical Computer Science}, 411(14-15):1638--1647, 2010.

\bibitem{cdkmp09}
J.~Czyzowicz, S.~Dobrev, R.~Kralovic, S.~Miklik, and D.~Pardubska.
\newblock Black hole search in directed graphs.
\newblock In {\em Proceedings of 16th International Colloquium on Structural
  Information and Communication Complexity}, pages 182--194, 2009.

\bibitem{ckmp06}
J.~Czyzowicz, D.~Kowalski, E.~Markou, and A.~Pelc.
\newblock Complexity of searching for a black hole.
\newblock {\em Fundamenta Informaticae}, 71(2,3):229--242, 2006.

\bibitem{ckmp07}
J.~Czyzowicz, D.~Kowalski, E.~Markou, and A.~Pelc.
\newblock Searching for a black hole in synchronous tree networks.
\newblock {\em Combinatorics, Probability \& Computing}, 16(4):595--619, 2007.

\bibitem{dp99}
X.~Deng and C.~H. Papadimitriou.
\newblock Exploring an unknown graph.
\newblock {\em Journal of Graph Theory}, 32(3):265--297, 1999.

\bibitem{dfkprs06}
S.~Dobrev, P.~Flocchini, R.~Kralovic, G.~Prencipe, P.~Ruzicka, and N.~Santoro.
\newblock Optimal search for a black hole in common interconnection networks.
\newblock {\em Networks}, 47(2):61--71, 2006.

\bibitem{dfks06}
S.~Dobrev, P.~Flocchini, R.~Kralovic, and N.~Santoro.
\newblock Exploring a dangerous unknown graph using tokens.
\newblock In {\em Proceedings of 5th IFIP International Conference on
  Theoretical Computer Science}, pages 131--150, 2006.

\bibitem{dfps03}
S.~Dobrev, P.~Flocchini, G.~Prencipe, and N.~Santoro.
\newblock Multiple agents rendezvous in a ring in spite of a black hole.
\newblock In {\em Proceedings of 6th International Conference on Principles of
  Distributed Systems}, pages 34--46, 2003.

\bibitem{dfps06}
S.~Dobrev, P.~Flocchini, G.~Prencipe, and N.~Santoro.
\newblock Searching for a black hole in arbitrary networks: Optimal mobile
  agents protocols.
\newblock {\em Distributed Computing}, 19(1):1--19, 2006.

\bibitem{dfps07}
S.~Dobrev, P.~Flocchini, G.~Prencipe, and N.~Santoro.
\newblock Mobile search for a black hole in an anonymous ring.
\newblock {\em Algorithmica}, 48:67--90, 2007.

\bibitem{dfs04}
S.~Dobrev, P.~Flocchini, and N.~Santoro.
\newblock Improved bounds for optimal black hole search in a network with a
  map.
\newblock In {\em Proceedings of 10th International Colloquium on Structural
  Information and Communication Complexity}, pages 111--122, 2004.

\bibitem{dkss06}
S.~Dobrev, R.~Kralovic, N.~Santoro, and W.~Shi.
\newblock Black hole search in asynchronous rings using tokens.
\newblock In {\em Proceedings of 6th Conference on Algorithms and Complexity},
  pages 139--150, 2006.

\bibitem{dss07}
S.~Dobrev, N.~Santoro, and W.~Shi.
\newblock Scattered black hole search in an oriented ring using tokens.
\newblock In {\em Proceedings of IEEE International Parallel and Distributed
  Processing Symposium}, pages 1--8, 2007.

\bibitem{dss08}
S.~Dobrev, N.~Santoro, and W.~Shi.
\newblock Using scattered mobile agents to locate a black hole in an
  un-oriented ring with tokens.
\newblock {\em International Journal of Foundations of Computer Science},
  19(6):1355--1372, 2008.

\bibitem{fis08}
P.~Flocchini, D.~Ilcinkas, and N.~Santoro.
\newblock Ping pong in dangerous graphs: Optimal black hole search with pure
  tokens.
\newblock In {\em Proceedings of 22nd International Symposium on Distributed
  Computing}, LNCS 5218, pages 227--241, 2008.

\bibitem{fkms09}
P.~Flocchini, M.~Kellett, P.~Mason, and N.~Santoro.
\newblock Map construction and exploration by mobile agents scattered in a
  dangerous network.
\newblock In {\em Proceedings of IEEE International Symposium on Parallel \&
  Distributed Processing}, pages 1--10, 2009.

\bibitem{fgkp06}
P.~Fraigniaud, L.~Gasieniec, D.~Kowalski, and A.~Pelc.
\newblock Collective tree exploration.
\newblock {\em Networks}, 48:166--177, 2006.

\bibitem{gla09}
P.~Glaus.
\newblock Locating a black hole without the knowledge of incoming link.
\newblock In {\em Proceedings of 5th International Workshop on Algorithmic
  Aspects of Wireless Sensor Networks}, pages 128--138, 2009.

\bibitem{kmrs07}
R.~Klasing, E.~Markou, T.~Radzik, and F.~Sarracco.
\newblock Hardness and approximation results for black hole search in arbitrary
  graphs.
\newblock {\em Theoretical Computer Science}, 384(2-3):201--221, 2007.

\bibitem{kmrs08}
R.~Klasing, E.~Markou, T.~Radzik, and F.~Sarracco.
\newblock Approximation bounds for black hole search problems.
\newblock {\em Networks}, 52(4):216--226, 2008.

\bibitem{knp09}
A.~Kosowski, A.~Navarra, and C.~Pinotti.
\newblock Synchronization helps robots to detect black holes in directed
  graphs.
\newblock In {\em Proceedings of 13th International Conference on Principles of
  Distributed Systems}, pages 86--98, 2009.

\bibitem{km10}
R.~Kr\`{a}lovic and S.~Mikl\`{i}k.
\newblock Periodic data retrieval problem in rings containing a malicious host.
\newblock In {\em Proceedings of 17th International Colloquium on Structural
  Information and Communication Complexity}, pages 156--167, 2010.

\bibitem{kkm10}
E.~Kranakis, D.~Krizanc, and E.~Markou.
\newblock Deterministic symmetric rendezvous with tokens in a synchronous
  torus.
\newblock {\em Discrete Applied Mathematics}, 159(9):896--923, 2011.

\bibitem{sha51}
C.~E. Shannon.
\newblock Presentation of a maze-solving machine.
\newblock In {\em Proceedings of 8th Conference of the Josiah Macy Jr. Found.
  (Cybernetics)}, pages 173--180, 1951.

\bibitem{shi09}
W.~Shi.
\newblock Black hole search with tokens in interconnected networks.
\newblock In {\em Proceedings of 11th International Symposium on Stabilization,
  Safety, and Security of Distributed Systems}, pages 670--682, 2009.

\end{thebibliography}

\end{document}